\newif\ifextended
\theoremstyle{acmplain}
\theoremstyle{acmdefinition}
\pgfplotsset{
	compat=1.9,
	table/col sep=tab, %
	unbounded coords=jump, %
	filter discard warning=false, %
}
\tikzstyle{transition}=[rectangle,thick,fill=black,minimum height=3.5pt,minimum width=6mm,inner ysep=0]
\tikzstyle{empstate}=[draw, inner sep=2pt, outer sep=2pt, align=center, font=\footnotesize, draw, rectangle split, rectangle split parts=2, rounded corners=3pt]
\newcommand\init{\mathrm{init}}
\newenvironment{externalize}[1]{%
}{%
}
\newcommand*{\propdef}[2][\@nil]{%
  \def\tmp{#1}%
  \ifx\tmp\@nnil%
    \expandafter\gdef\csname propref#2\endcsname{\hyperref[prop:#2]{\textbf{(#2)}}}%
    \textbf{(#2)}%
  \else%
    \expandafter\gdef\csname propref#2\endcsname{\hyperref[prop:#2]{\textbf{(#1)}}}%
    \textbf{(#1)}%
  \fi%
  \phantomsection\label{prop:#2}%
}
\newcommand\prop[1]{%
  \ifcsname propref#1\endcsname%
    \csname propref#1\endcsname%
  \else%
    \errmessage{Property #1 has not been defined (use \textbackslash{}propdef to define it).}%
  \fi%
}
\tikzset{st/.style={font=\ttfamily,shape=rectangle,rounded corners=.5em,fill=gray!40,inner xsep=.3em,inner ysep=0em,text height=2ex,text depth=.6ex}}
\newcommand{\stsmcol}[2]{\tikz[baseline]{\node[st,fill=#2] at (0,.64ex){\hspace{.3em}\texttt{\strut\footnotesize #1}\hspace{.1em}\strut};}}
\definecolor{main}{RGB}{250,236,115}
\definecolor{t1}{RGB}{250,102,87}
\definecolor{t2}{RGB}{106,195,240}
\newcommand\hoareTriple[3]{{\left\{#1\middle\}\;#2\;\middle\{#3\right\}}}
\renewcommand\P{\mathcal{P}}
\renewcommand\succ[1]{{#1}^\bullet}
\newcommand\pred[1]{{}^\bullet {#1}}
\newcommand\fire[1]{\mathrel{\triangleright_{#1}}}
\newcommand\reachMarkings{\textsc{Reach}}
\newcommand\errorPlaces{P_\mathsf{err}}
\newcommand\st{s\!t}
\newcommand\programVars{V_\P}
\newcommand\postFun[1][]{\mathit{post}_{#1}}
\newcommand\post[3][]{\postFun[#1]\left(#2, #3\right)}
\newcommand\absReachConf[2]{\textsc{ReachConf}(#1,#2)}
\newcommand\enabled[1]{\mathit{enabled}(#1)}
\newcommand\formulas[1]{\mathbf{Fm}(#1)}
\newcommand\ghost{\mathghost}
\newcommand\ghostVars{V_{\ghost}}
\newcommand\ghostUpdates{\mathbf{GhostUpdates}}
\newcommand\thenha[1]{\tilde{\theta}(#1)}
\newcommand\og{\mathcal{O\!G}}
\newcommand\Regions{\mathbf{Regions}}
\newcommand\Territories{\mathbf{Terr}}
\newcommand\places[1]{\mathit{places}(#1)}
\newcommand\treaty[1]{\mathit{treaty}(#1)}
\newcommand\bystanders[2]{\mathit{bystanders}_{#1}(#2)}
\newcommand\lawFun[1][]{\mathit{law}_{#1}}
\newcommand\law[2][]{\lawFun[#1](#2)}
\newcommand\terrFun{\mathit{terr}}
\newcommand\terr[1]{\terrFun(#1)}
\newcommand\co[2]{{#1} \mathrel{\mathsf{co}} {#2}}
\newcommand\saturatedSuccs[3]{\mathit{saturate}_{#1}^{#2}(#3)}
\newcommand\extended[2]{\mathit{extended}_{#1}(#2)}
\newcommand\replaced[2]{\mathit{replaced}_{#1}(#2)}
\newcommand\legalFun{\mathfrak{f}}
\newcommand\legalIndices[2]{\legalFun(#1, #2)}
\newcommand\legal[3]{#3 \in \legalIndices{#1}{#2}}
\newcommand\uautomizer{\textsc{Ultimate Automizer}\xspace}
\newcommand\automizer{\textsc{Automizer}\xspace}
\newcommand\totalBenchmarks{1\,829}
\newcommand\totalProofs{1\,109}
\newcommand\naiveSucc{904}
\newcommand\naiveTO{194}
\newcommand\naiveOOM{10}
\newcommand\naiveOutOfRessources{204}
\newcommand\naiveGenTime{39.29}
\newcommand\naiveValid{316}
\newcommand\naiveValidPercent{35}
\newcommand\naiveValTO{781}
\newcommand\naiveValOOM{8}
\newcommand\naiveValUnknown{0}
\newcommand\naiveValTime{55.20}
\newcommand\imperialSucc{1\,109}
\newcommand\imperialTO{0}
\newcommand\imperialOOM{0}
\newcommand\imperialGenTime{0.33}
\newcommand\imperialMaxGenTime{68.40}
\newcommand\imperialValid{941}
\newcommand\imperialValidPercent{85}
\newcommand\imperialValTO{148}
\newcommand\imperialValOOM{0}
\newcommand\imperialValUnknown{19}
\newcommand\imperialValTime{2.09}
\newcommand\modularFocusSucc{1\,109}
\newcommand\modularFocusTO{0}
\newcommand\modularFocusOOM{0}
\newcommand\modularFocusGenTime{0.38}
\newcommand\modularFocusValid{949}
\newcommand\modularFocusValTO{130}
\newcommand\modularFocusValOOM{0}
\newcommand\modularFocusValUnknown{28}
\newcommand\modularFocusValTime{1.54}
\newcommand\additionalModularFocus{8}
\newcommand\naiveToImperialSizeRatio{15}
\newcommand\focusSmallerPercent{23}
\newcommand\imperialUnderOneSec{90}
\newcommand\numCrashes{8}
\newcommand\threeTimesLessGU{98}
\newcommand\tenTimesLessGU{62}
\newcommand\minPlaces{31}
\newcommand\maxPlaces{1\,717}
\newcommand\minTransitions{27}
\newcommand\maxTransitions{1\,776}
\newcommand\percValFasterVeriOverall{78}
\newcommand\percValSlowerVeriTenSeconds{15}
\newcommand\percValFasterVeriTenSeconds{85}
\newcommand\percValThriceFasterVeriTenSeconds{52}
\begin{document}
\ifextended%
  \title[The Ghosts of Empires: Extracting Modularity from Interleaving-Based Proofs]{The Ghosts of Empires: Extracting Modularity from Interleaving-Based Proofs (Extended Version)}
\else%
  \title[The Ghosts of Empires: Extracting Modularity from Interleaving-Based Proofs]{The Ghosts of Empires: \\Extracting Modularity from Interleaving-Based Proofs}
\fi
\author{Frank Sch\"ussele}
\orcid{0000-0002-5656-306X}
\email{schuessf@informatik.uni-freiburg.de}
\affiliation{%
  \institution{University of Freiburg}
  \city{Freiburg im Breisgau}
  \country{Germany}
}

\author{Matthias Zumkeller}
\orcid{0009-0004-3696-6269}
\email{zumkellm@informatik.uni-freiburg.de}
\affiliation{%
  \institution{University of Freiburg}
  \city{Freiburg im Breisgau}
  \country{Germany}
}

\author{Miriam Lagunes-Rochin}
\orcid{0009-0008-9482-3900}
\email{miriam.lagunes-rochin@siemens.com}
\affiliation{%
  \institution{University of Freiburg}
  \city{Freiburg im Breisgau}
  \country{Germany}
}

\author{Dominik Klumpp}
\orcid{0000-0003-4885-0728}
\email{klumpp@lix.polytechnique.fr}
\affiliation{%
  \institution{LIX -- CNRS -- \'Ecole Polytechnique}
  \city{Palaiseau}
  \country{France}
}
\affiliation{
  \institution{University of Freiburg}
  \city{Freiburg im Breisgau}
  \country{Germany}
}

\begin{abstract}
  Implementation bugs threaten the soundness of algorithmic software verifiers.
  Generating correctness certificates for correct programs
  allows for efficient independent validation of verification results,
  and thus
  helps to reveal such bugs.
	Automatic generation of small, compact correctness proofs for concurrent programs is challenging,
	as the correctness arguments may depend on the particular interleaving, which can lead to exponential explosion.
	We present an approach that converts an interleaving-based correctness proof, as generated by many algorithmic verifiers, into a thread-modular correctness proof in the style of Owicki and Gries.
	We automatically synthesize \emph{ghost variables} that capture the relevant interleaving information, and abstract away irrelevant details.
	Our evaluation shows that the approach is efficient in practice and generates compact proofs, compared to a baseline.
\end{abstract}
\keywords{Owicki-Gries, ghost variables, correctness proofs, certificates}

\begin{CCSXML}
	<ccs2012>
	<concept>
	<concept_id>10003752.10003790.10002990</concept_id>
	<concept_desc>Theory of computation~Logic and verification</concept_desc>
	<concept_significance>500</concept_significance>
	</concept>
	<concept>
	<concept_id>10003752.10003753.10003761</concept_id>
	<concept_desc>Theory of computation~Concurrency</concept_desc>
	<concept_significance>500</concept_significance>
	</concept>
	<concept>
	<concept_id>10003752.10010124.10010138.10010142</concept_id>
	<concept_desc>Theory of computation~Program verification</concept_desc>
	<concept_significance>500</concept_significance>
	</concept>
	</ccs2012>
\end{CCSXML}

\ccsdesc[500]{Theory of computation~Logic and verification}
\ccsdesc[500]{Theory of computation~Concurrency}
\ccsdesc[500]{Theory of computation~Program verification}

\maketitle

\section{Introduction}

Formal verification promises to eliminate certain kinds of defects, and to ensure correctness of verified systems with \emph{absolute, mathematical certainty}.
Yet, to deliver on these promises,
in practice we rely on implementations of such formal approaches in verification \emph{tools}.
Though based on sound methodology,
verification tools (that are sophisticated enough to handle real-world applications)
are themselves complex software systems,
and are prone to subtle implementation bugs.
To address this issue and increase the trustworthiness of formal methods tools generally,
there is an increasing movement towards requiring such tools to accompany their output
(e.g., a verdict of \emph{correct} or \emph{incorrect}) with an additional certificate
that allows a user to efficiently check the accuracy of the output (i.e., to \emph{validate} the certificate).
Such certificates, sometimes also called \emph{proofs} or \emph{witnesses},
have long become standard practice for propositional SAT solvers~\cite{Wetzler:DRAT},
and are a topic of interest also in hardware model checking~\cite{Froyleks:Certificates-HWMC,Froyleks:Certifying-PA},
satisfiability modulo theories (SMT)~\cite{Hoenicke:Proof-Format,Schurr:Alethe},
classical planning~\cite{Eriksson:Unsolvability-Certificates},
and (as in this work) software verification~\cite{Ayaziova:Witnesses20,Beyer:Witnesses-Where-We-Go,VMCAI24:Ghost-Witnesses}.

We consider in particular correctness proofs
for the algorithmic verification of concurrent programs.
While generating and checking correctness proofs for sequential programs is already difficult,
in particular for real-world programming languages~\cite{Ayaziova:Witnesses20,Beyer:Witnesses-Where-We-Go},
correctness proofs for concurrent programs present an even greater challenge.
As correctness must be established for all possible interleavings of a concurrent program,
a na\"ive approach to correctness proofs falls victim to the \emph{interleaving problem}:
the proof is typically exponentially large in the size of the program.
This explosion of size runs counter to the idea of certificates,
as validating such proofs is prohibitively expensive.
Therefore, we present an approach to convert the internal, interleaving-based proofs produced by many algorithmic verifiers into compact \emph{thread-modular} proofs for the verified concurrent programs.
Rather than explicitly encoding all interleaving information in a single global proof, thread-modular proofs instead reason over each thread locally.
Where necessary, interleaving information can be encoded through the use of \emph{ghost variables}.
Our algorithm detects and takes advantage of modularity in the given interleaving-based proof,
to encode only as little interleaving information as needed.
The generated compact correctness proofs can be efficiently validated,
and thereby
enable testing the consistency of a verifier's internal reasoning,
revealing implementation bugs or increasing trust in the verifier's soundness.
The key advantage is that no ground truth regarding the correctness of the analysed program is required.
Moreover, validation can increase the confidence in the correctness of a particular, highly-critical piece of software.

Since the seminal work of \citeauthor{Owicki-Gries}~\cite{Owicki-Gries},
there has been a wealth of research into suitable program logics for correctness proofs of concurrent programs.
This includes,
among others, the \emph{rely-guarantee} approach~\cite{Jones:rely-guarantee-thesis},
and many works building on \emph{concurrent separation logics}~\cite{OHearn:CSL,Brookes:CSL-Semantics,Brookes:CSL-2016}.
These works have typically targeted interactive or deductive verification.
By contrast, our work targets algorithmic verification.
Yet, we do not propose a new verification algorithm:
Instead, our approach is compatible with many existing verification algorithms (and mature tools implementing these algorithms)
that explicitly consider all the possible interleavings of a concurrent program.
We propose to analyse the resulting interleaving-based proof,
and construct from it a more compact correctness proof
specifically for the purpose of exporting a certificate.
We base our approach on the work of \citeauthor{Owicki-Gries}~\cite{Owicki-Gries},
as the given proof rule is conceptually simple yet complete.
The key challenge addressed by our approach is to identify which interleaving information is required for the correctness argument,
and which information can be abstracted away.
In this sense, our approach differs from classical constructions of Owicki-Gries proofs from interleaving-based proofs,
as employed in completeness arguments~\cite{PrensaNieto:OG-Complete}.
Such constructions typically capture an excessive amount of interleaving information (e.g.\ the entire history, or the entire global control state), and are not intended for practical use.
Our approach operates on a generic representation of interleaving-based proofs that can capture classical techniques such as abstract interpretation~\cite{Cousot:AI}, predicate abstraction~\cite{Graf:Pred-Abs}, Impact~\cite{McMillan:Impact} or trace abstraction~\cite{Heizmann:TAR}.

\begin{wrapfigure}[11]{r}{0.54\textwidth}
  \vspace*{-1.5em}
  \centering
  \lstset{
      basicstyle=\ttfamily\footnotesize,        %
      breakatwhitespace=false,         %
      captionpos=t,                    %
      commentstyle=\color{green!50!black},    %
      deletekeywords={...},            %
      extendedchars=true,              %
      firstnumber=1,                %
      frame=single,	                   %
      frameround=tttt,
      framexleftmargin=0pt,            %
      keepspaces=true,                 %
      keywordstyle=\color{blue},       %
      language=C,                 %
      morekeywords={*,assume, assert,havoc,atomic,var,...},            %
      numbers=left,                    %
      numbersep=5pt,                   %
      numberstyle=\color{gray}, %
      rulecolor=\color{black},         %
      showlines=false,                  %
      showspaces=false,                %
      showstringspaces=false,          %
      showtabs=false,                  %
      stepnumber=1,                    %
      stringstyle=\color{Periwinkle},      %
      tabsize=2,	                   %
      xleftmargin=0pt                 %
  }
  \renewcommand*\thelstnumber{$\ell_{\the\value{lstnumber}}$:}
  \newcommand\ann[1]{$\color{green!50!black}\{#1\}$}
  \begin{tcolorbox}[width=\linewidth,colback=white]
  \centering
  \vspace*{-1em}
  \begin{minipage}{0.4\textwidth}
  \begin{lstlisting}[firstnumber=0,frame=none]
assume x > 0\end{lstlisting}
  \end{minipage}\\[-1.5em]
  \begin{minipage}[t]{0.4\textwidth}
  \begin{lstlisting}[frame=R,firstnumber=1]
y := x
while (*) {
  assume y > 0
  y := y - 1
}
x := x + 1
assert y < x\end{lstlisting}
  \end{minipage}
  \qquad\quad
  \begin{minipage}[t]{0.35\textwidth}
  \vspace*{1em}
  \begin{lstlisting}[frame=none,firstnumber=8]
while (*) {
  assume x != 0
  z := z / x
}
x := x + 1\end{lstlisting}
  \end{minipage}\\[-0.5em]
  \begin{minipage}{0.4\textwidth}
  \begin{lstlisting}[firstnumber=13,frame=none]
assert x > 2\end{lstlisting}
  \end{minipage}
  \vspace*{-3mm}
  \end{tcolorbox}
  \vspace*{-1em}
  \caption{Example Program}
  \label{fig:example-pseudocode}
\end{wrapfigure}
\paragraph{Example.}
Consider the example program in \cref{fig:example-pseudocode}.
It begins with the assumption that the variable $x$ is positive.
Then, two threads run in parallel.
The left-hand thread performs a computation on its local variable $y$,
and finally asserts that $y<x$ holds.
The right-hand thread performs a computation on its local variable $z$.
After their respective computations, each thread increments the variable $x$.
When both threads have terminated, the program asserts that $x>2$.
An interleaving-based verifier might first show that the local assert in the left-hand thread never fails, using the inductive assertions $\top$ (initially), $y\leq x$ (once $y$ has been initialized to $x$), and $y<x$ (once $x$ has been incremented by the thread).
Then, the verifier might show that the assert statement $x>2$ holds, using the inductive assertions $\top$ (initially), $x>1$ (after either of the two increments), and $x>2$ (after the second increment).
Our goal is to construct an Owicki-Gries proof from these assertions.
I.e., we annotate every line $\ell_i$ with a formula $\omega(\ell_i)$,
such that the annotation is inductive within each thread,
and executing statements in the other thread is guaranteed to preserve the formula (interference-freedom).

However, this is not always possible.
In particular, to show the final assert $x>2$,
the annotation at $\ell_{12}$ needs to distinguish whether the left-hand thread has already incremented $x$ (in which case, $x>1$ holds) or not (in which case, we only know that $x>0$).
Without this case distinction, it is not possible to give an Owicki-Gries proof for $x>2$~\cite{Apt-Olderog:Book}.
Hence, \citeauthor{Owicki-Gries} allow introducing \emph{ghost variables} that are updated throughout the program, but do not otherwise influence the execution.
Such ghost variables can record information about the executed interleaving (such as, whether the other thread has incremented $x$ already), which can then be used in the proof.
In our case, we introduce two boolean ghost variables $\ghost_1$ and $\ghost_2$,
both initialized to $\bot$, which are updated to $\top$ when the left-hand resp.\ the right-hand thread increment $x$.
We then annotate $\ell_{12}$ with the formula $\lnot \ghost_2 \land ((\ghost_1 \land x>1) \lor (\lnot \ghost_1 \land x>0))$.
Analogously, after the increment in line~$\ell_{12}$ has been executed,
the formula $\ghost_2 \land ((\ghost_1 \land x>2) \lor (\lnot \ghost_1 \land x>1))$ holds.
The assertions concerning~$y$ are not relevant to the right-hand thread, and thus need only appear in the proof of the left-hand thread.
For instance, line~$\ell_7$ is annotated with $\lnot \ghost_1 \land y<x \land ((\ghost_2 \land x>1) \lor (\lnot\ghost_2 \land x > 0))$.
We %
return to this example program throughout the paper to demonstrate our approach.

\paragraph{Related Work.}
Beyond the aforementioned works in deductive and interactive verification,
thread-modular correctness proofs for concurrent programs have also been studied in algorithmic verification.
All of the works below differ from our approach in that they present specialized verification algorithms,
rather than extracting thread-modularity from a proof after verification has completed.

\citeauthor{Gupta:OwickiGries-CHC}~\cite{Gupta:OwickiGries-CHC} present an abstraction refinement approach that verifies programs using a rely-guarantee method.
I.e., the approach uses transition predicate abstraction to over-approximate the transitions of environment threads.
They allow the proof of a thread to use interleaving information in the form of control locations and local variables of other threads,
but bias the method to first try finding a solution without any such interleaving information.

\citeauthor{Mine:ThreadModularAI}~\cite{Mine:ThreadModularAI} introduces thread-modular abstract interpretation,
also following a rely-guarantee scheme.
In later works~\cite{Mine:RelationalAI-ThreadModular,Monat:PreciseTM-AI},
this is extended to \emph{flow-sensitive} analyses,
which allow for case distinctions based on the locations of other threads.
They group control locations into \emph{abstract control locations},
which resembles our abstraction of interleaving information.
Here, the abstraction is an input parameter of the analysis, whereas we derive it automatically.
The thread-modular abstract interpreter \textsc{Goblint}~\cite{Saan:Goblint-SVCOMP23,Vojdani:Goblint-Approach} is similarly based on thread-modular abstract interpretation.

In verification of programs with an unbounded number of threads (parameterized programs),
thread-modularity has been relaxed and generalized to \emph{thread-modularity at level~$k$}~\cite{Hoenicke:Thread-Modularity-Levels}.
Rather than proving correctness of individual threads,
these methods instead associate invariants to $k$-tuples of control locations from different threads.
Several works have studied constrained Horn-clause systems to derive such proofs~\cite{Grebenshchikov:Synthesize-Verifiers,Hojjat:CHC-Timed,Gurfinkel:SMT-Based-Parameterized}.
Here, the local states of additional threads can be seen to serve as a particular kind of ghost state.
However, resulting the proof rule remains incomplete.

Finally, previous work has also considered the synthesis of ghost variables not based on capturing the state of other threads.
For instance, \citeauthor{Farzan:Proofs-Count}~\cite{Farzan:Proofs-Count} develop an abstraction-refinement approach that integrates the synthesis of ghost \emph{counter} variables.

\paragraph{Contributions and Roadmap.}
The remainder of the paper is organized as follows.
\Cref{sec:background} defines our formal model of concurrent programs.
In \cref{sec:proof-formats}, we present a general notion of interleaving-based proofs and an adaptation of Owicki-Gries proofs to our formal setting; thereby defining input and output of our approach.
The subsequent four sections contain our core technical contributions:
\begin{itemize}
\item In \cref{sec:empires}, we introduce \emph{empires}, an abstract representation of the interleaving information relevant to a proof,
  and show how to construct Owicki-Gries proofs from empires.
\item \Cref{sec:algo} presents our algorithm to automatically extract such an empire from an interleaving-based proof,
  abstracting away irrelevant interleaving information.
\item We refine the approach in \cref{sec:focus},
  with a criterion to identify which components of a modular interleaving-based proof are relevant for particular threads.
\item The evaluation in \cref{sec:evaluation} demonstrates the practicability of our approach and the quality of the generated Owicki-Gries proofs, compared to a baseline.
\end{itemize}
We conclude with a discussion and an outlook to future work (\cref{sec:discussion}).
\ifextended\else An extended version of this paper including proofs is available online~\cite{popl26:extendedVersion}.\fi

\section{Background: Concurrent Programs}
\label{sec:background}

Our model of concurrent programs uses Petri nets to represent the possible control flow.
Note that typical concurrent programs, such as those based on POSIX threads (pthreads), can be transformed to such a Petri net-based model~\cite{Heizmann:Petrification}.
In \cref{sec:discussion}, we discuss in more detail how our approach also applies to verifiers whose internal program model does not rely on Petri nets.

\begin{definition}
	A \emph{Petri program} is a tuple $\P = (\Sigma, P, T, F, m_\init, \lambda, \errorPlaces)$ with a finite alphabet~$\Sigma$ of statements,
	a finite set of \emph{places}~$P$,
	a finite set of \emph{transitions}~$T$ with $P \cap T = \emptyset$,
	a \emph{flow relation} $F \subseteq (P \times T) \cup (T \times P)$,
	an \emph{initial marking} $m_\init : P \to \mathbb{N}$,
	a \textit{labeling} which maps transitions to statements $\lambda : T \to \Sigma$,
	and a set of \emph{error places} $\errorPlaces \subseteq P$.
\end{definition}
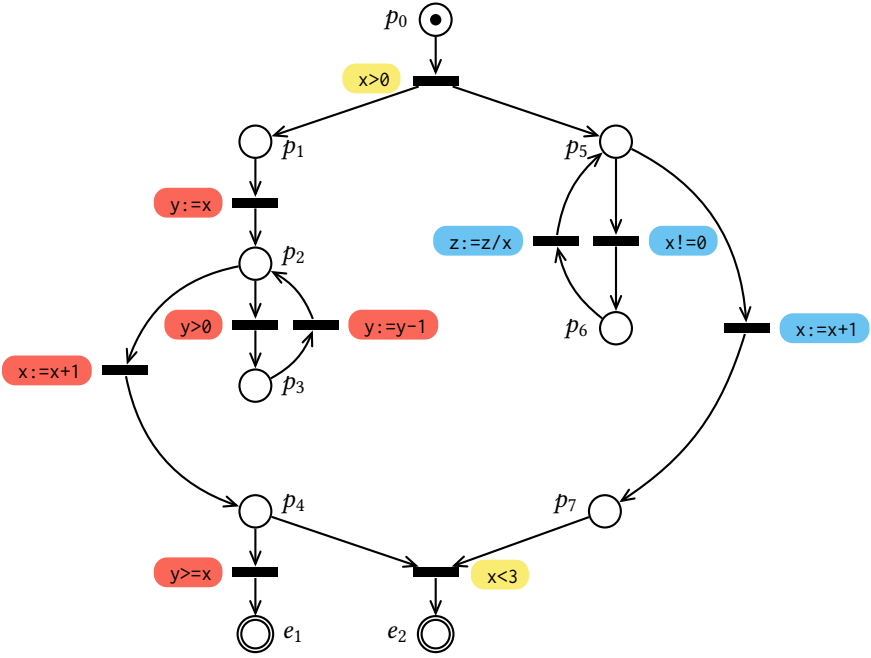
\begin{figure}
	\centering
	\begin{externalize}{petri-program}
	\begin{tikzpicture}[thick,scale=1,node distance=0.5cm,
		trans/.style={->,>=angle 45,thick},
		place/.style={draw,circle,inner sep=1.5mm}
		]

		\node[place,label=left:$p_0$] (p0) {};
		\node[token] (token) at (p0) {};
		\node[transition,below=5mm of p0,label={[xshift=-0.3mm,yshift=0.3mm]left:$\stsmcol{x>0}{main}$}] (t0) {};

		\node[place,below=5mm of t0,label={[yshift=-1mm]right:$p_1$}, xshift=-24mm] (p1) {};
		\node[transition,below=5mm of p1,label=left:$\stsmcol{y:=x}{t1}$] (t1) {};
		\node[place,below=5mm of t1,label={[yshift=1mm]right:$p_2$}] (p2) {};
		\node[transition,below=5mm of p2,label=left:$\stsmcol{y>0}{t1}$] (t2) {};
		\node[transition,below=5mm of p2,label=right:$\stsmcol{y:=y-1}{t1}$, xshift=8mm] (t3) {};
		\node[place,below=5mm of t2,label=right:$p_3$] (p3) {};
		\node[transition,left=12mm of p3,label=left:$\stsmcol{x:=x+1}{t1}$,yshift=2mm] (t4) {};
		\node[place,below=12mm of p3,label={[yshift=1mm]right:$p_4$}] (p4) {};
		\node[transition,below=5mm of p4,label=left:$\stsmcol{y>=x}{t1}$] (te1) {};

		\node[place,below=5mm of t0,label={[yshift=-1mm]left:$p_5$}, xshift=24mm] (p5) {};
		\node[transition,below=10mm of p5,label=right:$\stsmcol{x!=0}{t2}$] (t5) {};
		\node[transition,below=10mm of p5,label=left:$\stsmcol{z:=z/x}{t2}$, xshift=-8mm] (t6) {};
		\node[place,below=20mm of p5,label=left:$p_6$] (p6) {};
		\node[transition,right=12mm of p6,label=right:$\stsmcol{x:=x+1}{t2}$] (t7) {};
		\node[place,right=42mm of p4,label={[yshift=1mm]left:$p_7$}] (p7) {};

		\node[transition,below=5mm of p4,label={[xshift=0.3mm,yshift=-0.3mm]right:$\stsmcol{x<3}{main}$}, xshift=24mm] (te2) {};

		\node[place,below=5mm of te1,label=right:$e_1$, accepting] (e1) {};
		\node[place,below=5mm of te2,label=left:$e_2$, accepting] (e2) {};

		\draw [trans] (p0) to (t0);
		\draw [trans] (t0) to (p1);
		\draw [trans] (p1) to (t1);
		\draw [trans] (t1) to (p2);
		\draw [trans] (p2) to (t2);
		\draw [trans] (t2) to (p3);
		\draw [trans,bend right=20] (p3) to (t3);
		\draw [trans,bend right=20] (t3) to (p2);
		\draw [trans,bend right=30] (p2) to (t4);
		\draw [trans,bend right=30] (t4) to (p4);
		\draw [trans] (p4) to (te1);
		\draw [trans] (te1) to (e1);
		\draw [trans] (p4) to (te2);

		\draw [trans] (t0) to (p5);
		\draw [trans] (p5) to (t5);
		\draw [trans] (t5) to (p6);
		\draw [trans,bend left=20] (p6) to (t6);
		\draw [trans,bend left=20] (t6) to (p5);
		\draw [trans,bend left=30] (p5) to (t7);
		\draw [trans,bend left=20] (t7) to (p7);
		\draw [trans] (p7) to (te2);
		\draw [trans] (te2) to (e2);
	\end{tikzpicture}
	\end{externalize}
	\caption{Petri program corresponding to~\cref{fig:example-pseudocode}. Transitions are drawn as black boxes, places as circles. Double circles indicate error places. Arrows represent the flow relation. Initially, only $p_0$ has a token (black dot).}
	\label{fig:petri-program}
\end{figure}
Each statement in~$\Sigma$ consists of guards and assignments over program variables, and is executed atomically.
$\programVars$~denotes the set of all program variables occurring in statements of $\P$.
\Cref{fig:petri-program} presents the program from~\cref{fig:example-pseudocode} as a Petri program in a graphical form.

\paragraph{Concurrent Control Flow.}
The control flow of the program follows the Petri net semantics.
Specifically, each control state of the program is represented by a \emph{marking},
i.e., a map $m : P \to \mathbb{N}$ that assigns to each place the number of \emph{tokens} that are currently in the place.
The set of predecessor places $\pred{t}$ of a transition $t \in T$ consists of all places $p \in P$ such that $(p,t)\in F$ holds.
Similarly, the set of successors $\succ{t}$ consists of all places $p$ such that $(t,p) \in F$ holds.
When all predecessor places of a transition contain at least one token, we say that the transition is \emph{enabled} in the current marking.
The transition can then be \emph{fired}, resulting in a new marking.
\begin{definition}
	\label{def:firing}
	A transition $t \in T$ is \textit{enabled} in a marking~$m$ if each place~$p \in \pred{t}$ contains a token ($m(p) > 0$).
	Firing the transition results in the successor marking~$m'$, denoted $m \fire{t} m'$, if for each~$p \in P$ we have
    $m'(p) = m(p)-1$ if $p \in \pred{t}$ and $p\notin \succ{t}$,
    $m'(p)=m(p)+1$ if $p \in \succ{t}$ and $p\notin \pred{t}$,
    and $m'(p)=m(p)$ otherwise.
\end{definition}
A sequence $m_0 \fire{t_1} m_1 \fire{t_2} \dots \fire{t_n} m_n$ beginning at the initial marking $m_0 = m_\init$ is called a \emph{firing sequence}.
A marking $m$ is \emph{reachable} if there exists a firing sequence $m_0 \fire{t_1} m_1 \fire{t_2} \dots \fire{t_n} m_n$ such that $m_n = m$.
We write $\reachMarkings$ for the set of all reachable markings.

We consider only \emph{one-safe} Petri programs, i.e., Petri programs where in all reachable markings, every place has at most one token.
To simplify notation, we identify each reachable marking $m : P \to \{0, 1\}$ with the set $\{\,p\mid m(p) = 1 \,\}$ of all places that have a token.
The firing relation for reachable markings then simplifies: $m \fire{t} m'$ holds if and only if\pagebreak[3] $\pred{t} \subseteq m$ and $m' = (m\setminus \pred{t}) \cup \succ{t}$.
We \pagebreak[3] further assume that every transition in the Petri program is enabled in some reachable marking.%
\footnote{
  This assumption simplifies the technical details of the presentation.
  Removing transitions that can never be enabled does not affect correctness of the Petri program.
  If desired, an Owicki-Gries proof of the full Petri program can be easily constructed from a proof of the program where such transitions have been removed.
}

While we sometimes refer to ``threads'' on an intuitive level,
Petri programs have no such concept.
To formalize key conditions for thread-modular proofs,
such as two places (or transitions) ``belonging to different threads'',
we instead rely on the following definitions:

\begin{definition}
	Two places $p, p' \in P$ are  \emph{co-related}, denoted as $\co{p}{p'}$, if $p\neq p'$ and there exists a reachable marking $m$ such that $p \in m$ and $p' \in m$.
	A place $p$ and a transition $t$ are \emph{co-marked} if $p\notin\pred{t}$ and there exists a reachable marking $m$ such that $p \in m$ and $t$ is enabled in $m$ (i.e., $\pred{t} \subseteq m$).
\end{definition}

\paragraph{Correctness.}
Informally speaking, a Petri program is correct if it cannot feasibly reach an error place.
To formalize this, we use the notion of \emph{Hoare triples}. %

For a given set of variables $V$, we write $\formulas{V}$ for the set of all first-order formulas whose free variables are included in $V$.
Given assertions $\varphi$ and $\psi$ over variables $V$, i.e., $\varphi,\psi\in\formulas{V}$, and a statement $\st$ that refers only to variables in $V$,
a \textit{Hoare triple} has the form $\hoareTriple{\varphi}{\st}{\psi}$.
The formulas $\varphi$ and $\psi$ are referred to as precondition resp.\ postcondition.
The Hoare triple \emph{holds} if, whenever $\varphi$ holds and $\st$ is executed and terminates, the assertion $\psi$ must hold afterwards.
We extend this notion to sequences of statements.
Note in particular that the Hoare triple $\hoareTriple{\top}{\st_1\ldots\st_n}{\bot}$ holds
if and only if the sequence $\st_1\ldots\st_n$ cannot be executed
(because the guard of one of the statements cannot be satisfied, and the statement blocks the execution).
\begin{definition}
	\label{def:petri-program-correctness}
	A Petri program $\P = (\Sigma, P, T, F, m_\init, \lambda, \errorPlaces)$ is \emph{correct} if for all firing sequences $m_0 \fire{t_1} \dots \fire{t_n} m_n$ with $m_n \cap \errorPlaces \neq \emptyset$,
	the Hoare triple $\hoareTriple{\top}{\lambda(t_1) \dots \lambda(t_n)}{\bot}$ holds.
\end{definition}
\goodbreak

\section{Interleaving-Based vs.\ Thread-Modular Correctness Proofs}
\label{sec:proof-formats}

Correctness proofs of concurrent programs take different forms.
\emph{Interleaving-based proofs} operate on the interleavings of all concurrent threads, and do not separate local states of different threads from each other or from the global state.
By contrast, \emph{thread-modular proofs} establish correctness in a thread-by-thread fashion, and minimize the required information about global state and interference by other threads.
In the following, we define a general notion of interleaving-based proofs, which we call \emph{invariant domains}, characterizing the kind of proof our approach takes as input.
We then discuss \emph{Owicki-Gries annotations}, the thread-modular proof format produced by our approach.

\subsection{Interleaving-Based Correctness Proofs}
\label{sec:interleaving-proofs}

The approach presented in this paper works on correctness proofs for concurrent programs generated by verification algorithms that prove correctness for all program interleavings.
To capture a general notion of such interleaving-based proofs, we introduce \emph{invariant domains}.
\begin{definition}
	An \emph{invariant domain} for $\P$ is a tuple $(A, \postFun)$
	consisting of a finite set $A \subseteq \formulas{\programVars}$ with $\bot, \top \in A$
	and a function $\postFun: A\times T \rightarrow A$
	such that for all $\varphi\in A$ and $t\in T$,
	the Hoare triple $\hoareTriple{\varphi}{\lambda(t)}{\post{\varphi}{t}}$ holds.
\end{definition}
	For an invariant domain~$(A, \postFun)$,
	the set of \emph{reachable abstract configurations} $\absReachConf{A}{\postFun}$
	is the smallest subset of $\reachMarkings\times A$ that contains $\langle m_\init, \top \rangle$,
	such that for every transition~$t\in T$ and for each $\langle m, \varphi \rangle \in \absReachConf{A}{\postFun}$,
	where $t$~is enabled in~$m$ and $m\fire{t} m'$,
	it also holds that $\langle m', \post{\varphi}{t} \rangle \in\absReachConf{A}{\postFun}$.

\begin{definition}
	\label{def:safe-proof}
	An invariant domain $(A, \postFun)$ for $\P$ is \emph{safe} %
	if for every reachable abstract configuration $\langle m, \varphi \rangle \in\absReachConf{A}{\postFun}$ with $m\cap\errorPlaces\neq\emptyset$,
	it holds that $\varphi$ is equal to $\bot$.
\end{definition}
Safe invariant domains for Petri programs can be computed with many classical algorithmic verification techniques.
For instance, when verifying a program with \emph{abstract interpretation}~\cite{Cousot:AI},
an invariant domain can be extracted from the fixpoint computed by the analysis.%
\footnote{
  The attentive reader may notice that an invariant domain is itself a special case of an abstract domain.
  We introduce the distinct terminology in part due to the special conditions (such as finiteness),
  and in part to avoid confusing the invariant domain derived by applying e.g.\ interval analysis on a program (which consists of finitely many intervals, and a program-specific $\postFun$)
  with the domain used during the analysis itself (here, the interval domain).
}
A safe \emph{predicate abstraction}~$B$~\cite{Graf:Pred-Abs}, as found by \emph{predicate abstraction refinement}~\cite{Ball-Rajamani:CEGAR},
induces a safe invariant domain $(A, \mathit{post}_B)$,
with $A=\{\,\bigwedge B' \mid B' \subseteq B\,\}$ the set of all conjunctions over~$B$
and $\mathit{post}_B$ the standard post-operator of predicate abstraction
(i.e., $\mathit{post}_B(\bigwedge B',\st)=\bigwedge \big\{\,\psi \in B \mid \hoareTriple{\bigwedge B'}{\st}{\psi}\text{ holds}\,\big\}$).
Invariant domains can similarly be derived from %
the abstract reachability graph constructed by the \emph{Impact} algorithm~\cite{McMillan:Impact}.
In \emph{trace abstraction refinement}~\cite{Heizmann:TAR},
the product of the computed \emph{Floyd-Hoare automata} also allows to define a safe invariant domain.
While some of these techniques have been primarily applied for sequential programs,
they extend to concurrent programs as well,
either by considering a ``na\"ive sequentialization'' derived from the Petri program's reachability graph,
or through adaptations~\cite{VMCAI21:Petri-TAR} that help to avoid the state space explosion problem.

\begin{proposition}%
  If there exists a safe invariant domain for $\P$, then $\P$ is correct.
\end{proposition}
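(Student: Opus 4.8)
The plan is to prove the contrapositive-flavored statement directly: assume we have a safe invariant domain $(A, \postFun)$ for $\P$, and show that every firing sequence reaching an error place witnesses an infeasible statement sequence, which is exactly \cref{def:petri-program-correctness}. First I would take an arbitrary firing sequence $m_0 \fire{t_1} m_1 \fire{t_2} \dots \fire{t_n} m_n$ with $m_n \cap \errorPlaces \neq \emptyset$. The key idea is to track, alongside this firing sequence, the sequence of abstract assertions obtained by iterating $\postFun$: define $\varphi_0 = \top$ and $\varphi_{i} = \post{\varphi_{i-1}}{t_i}$ for $i = 1, \dots, n$. By an easy induction on $i$, using the definition of $\absReachConf{A}{\postFun}$ (the base case $\langle m_\init, \top\rangle$ and the closure under enabled transitions), each pair $\langle m_i, \varphi_i\rangle$ is a reachable abstract configuration.

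Next I would use the defining property of an invariant domain: for each $i$, the Hoare triple $\hoareTriple{\varphi_{i-1}}{\lambda(t_i)}{\varphi_i}$ holds. Chaining these via the standard sequential composition rule for Hoare triples (which the excerpt invokes when it ``extends this notion to sequences of statements''), we obtain that $\hoareTriple{\varphi_0}{\lambda(t_1)\dots\lambda(t_n)}{\varphi_n}$ holds, i.e.\ $\hoareTriple{\top}{\lambda(t_1)\dots\lambda(t_n)}{\varphi_n}$ holds. Now safety of the invariant domain enters: since $\langle m_n, \varphi_n\rangle \in \absReachConf{A}{\postFun}$ and $m_n \cap \errorPlaces \neq \emptyset$, \cref{def:safe-proof} forces $\varphi_n = \bot$. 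Hence $\hoareTriple{\top}{\lambda(t_1)\dots\lambda(t_n)}{\bot}$ holds, and since the firing sequence was arbitrary, $\P$ is correct by \cref{def:petri-program-correctness}.

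I do not expect a serious obstacle here; this is a routine soundness argument. The only point requiring mild care is the inductive claim that the $\langle m_i, \varphi_i\rangle$ are genuinely reachable abstract configurations — one must check that $t_i$ is enabled in $m_i$ (which holds because $m_{i-1} \fire{t_i} m_i$ is part of the given firing sequence, so $t_i$ is enabled in $m_{i-1}$ and the closure clause applies) and that $\varphi_i = \post{\varphi_{i-1}}{t_i} \in A$ (immediate from $\postFun : A \times T \to A$). A secondary subtlety is that the composition of Hoare triples is being used over statement \emph{sequences} rather than single statements, so one should be explicit that $\hoareTriple{\varphi}{\st_1}{\chi}$ and $\hoareTriple{\chi}{\st_2\dots\st_n}{\psi}$ together yield $\hoareTriple{\varphi}{\st_1\st_2\dots\st_n}{\psi}$, which is exactly how the paper defined Hoare triples for sequences.
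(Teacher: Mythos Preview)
The paper states this proposition without proof (it is treated as a routine observation), so there is no argument to compare against directly. Your plan is exactly the standard soundness argument one would expect and is correct: iterate $\postFun$ along the firing sequence, show inductively that each $\langle m_i,\varphi_i\rangle$ lies in $\absReachConf{A}{\postFun}$, chain the Hoare triples, and invoke safety to get $\varphi_n=\bot$.

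One small slip worth fixing: you write ``one must check that $t_i$ is enabled in $m_i$'', but the closure clause for $\absReachConf{A}{\postFun}$ requires $t_i$ to be enabled in $m_{i-1}$ (the configuration you are extending from). Your parenthetical immediately says the right thing, so this is just a typo in the lead-in sentence, not a gap in the argument.
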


\subsection{Owicki-Gries Annotations}
\label{sec:og-annotation}

This section introduces \emph{Owicki-Gries annotations}, the format of correctness proofs for concurrent programs that our approach computes.
As the name implies, these annotations are based on the principles laid out by \citeauthor{Owicki-Gries}~\cite{Owicki-Gries}.
They are thread-modular, in the sense that each thread of a concurrent program is given its own correctness proof, for the most part ignoring the local states of other threads.
Where necessary, limited information about these local states can be captured by the introduction of auxiliary \emph{ghost variables} that record relevant information about the executed interleaving,
do not influence the program's own computation, and can be referred to in the formulae of the proofs of each thread.
This confinement of information about local states of other threads and the executed interleaving to only the relevant aspects
allows Owicki-Gries style proofs (in most cases) to remain concise and efficiently checkable,
while still forming a complete proof system~\cite{Owicki:Consistent-Complete,Lamport:Proving-Multiprocess}.
Whereas \citeauthor{Owicki-Gries} present their proof system as a Hoare-style calculus on the program syntax,
we adapt it here to a more control flow-oriented view,
which is more suitable for algorithmic verification tools.
Therefore, we define \emph{Owicki-Gries annotations} for our notion of Petri programs.
For this purpose, and for the remainder of the paper, we always assume a fixed Petri program $\P = (\Sigma, P, T, F, m_\init, \lambda, \errorPlaces)$.

\begin{definition}
	An \emph{Owicki-Gries annotation} of $\P$ is a
	tuple $\og=(\ghostVars, \omega, \gamma, \rho)$, such that:
	\begin{itemize}
	  \item
		$\ghostVars$ is a finite set of \emph{ghost variables},
		which is disjoint from the set of program variables ($\ghostVars \cap \programVars = \emptyset$).
		Each ghost variable has a (possibly infinite) domain.
	  \item
	    The \emph{invariant mapping} $\omega : P \to \formulas{\programVars \cup \ghostVars}$
	    maps each place $p$ to a formula $\omega(p)$ over program and ghost variables.
	  \item
		The \emph{ghost update mapping} $\gamma : T \to \ghostUpdates$
		labels each transition with a set of updates of the ghost variables.
		$\ghostUpdates$ is the set of partial functions from $\ghostVars$ to expressions over $\programVars \cup \ghostVars$.
		When executing such ghost updates, all these expressions are evaluated,
		and then simultaneously assigned to the respective ghost variables.
		If $\gamma(t)$ is the empty partial function, all ghost variables remain unchanged.
	  \item
		The \emph{initial ghost valuation} $\rho$ assigns each ghost variable $v\in \ghostVars$ an initial value $\rho(v)$ in the respective variable's domain.
	\end{itemize}
\end{definition}

For an Owicki-Gries annotation, we also need to define what it means to be \emph{valid}.

\begin{definition}
  \label{def:og-valid}
	An Owicki-Gries annotation $\og=(\ghostVars, \omega, \gamma, \rho)$ is \emph{valid},
	if the following holds:
	\begin{description}
		\item[{\propdef[initial\textsubscript{$\og$}]{InitialOG}}:] For every initial place $p\in m_\init$,
		the following implication holds:
		\[
		  \left( \bigwedge_{v\in\ghostVars}v=\rho(v) \right) \models \omega(p)
		\]

		\item[{\propdef[inductive\textsubscript{$\og$}]{InductiveOG}}:] For every transition~$t$, the following Hoare triple holds:
		\[
		  \hoareTriple{  \bigwedge_{p \in \pred{t}}\omega(p)  }{  \lambda(t)\,;\,\gamma(t)  }{  \bigwedge_{p \in \succ{t}}\omega(p)  }
		\]
		\item[{\propdef[interference-free\textsubscript{$\og$}]{InterferenceFreeOG}}:] For every transition~$t$, and every place~$p$ that is \emph{co-marked} with~$t$,
		the following Hoare triple holds:
		\[
		  \hoareTriple{  \omega(p) \land \bigwedge_{p'\in\pred{t}}\omega(p')  }{  \lambda(t)\,;\,\gamma(t)  }{  \omega(p)  }
		\]
		\item[{\propdef[safe\textsubscript{$\og$}]{SafeOG}}:] For every error place $p\in \errorPlaces$, the annotation $\omega(p)$ is equivalent to $\bot$.
	\end{description}
\end{definition}
\goodbreak

\begin{propositionrep}%
	\label{prop:og-soundness}
	If there exists a valid Owicki-Gries annotation of $\P$,
	then $\P$ is correct.
\end{propositionrep}
\begin{proofsketch}
  Let $\og = (\ghostVars, \omega, \gamma, \rho)$ be a valid Owicki-Gries annotation of $\P$.
  Then the function $\beta : \reachMarkings \to \formulas{\programVars}$
  with $\beta(m) := \exists \ghostVars\,.\, \bigwedge_{p\in m} \omega(p)$ is a Floyd-Hoare annotation,
  i.e.,
  we have that \emph{(i)} $\beta(m_\init) \equiv \top$,
  that \emph{(ii)}
  $\hoareTriple{\beta(m)}{\lambda(t)}{\beta(m')}$ whenever $m \fire{t} m'$,
  and that \emph{(iii)} $\beta(m) \equiv \bot$ if $m\cap \errorPlaces \neq \emptyset$.
  By sequential composition of Hoare triples,
  the Hoare triple $\hoareTriple{\top}{\lambda(t_1)\ldots\lambda(t_n)}{\bot}$ holds
  for every firing sequence $m_\init = m_0 \fire{t_1} \ldots \fire{t_n} = m_n$
  where $m_n\cap \errorPlaces \neq \emptyset$,
  and $\P$ is correct.
\end{proofsketch}
\begin{proof}
	Let $\og = (\ghostVars, \omega, \gamma, \rho)$ be a valid Owicki-Gries annotation of $\P$.
	We show that the function $\beta : \reachMarkings \to \formulas{\programVars}$
	with $\beta(m) := \exists \ghostVars\,.\, \bigwedge_{p\in m} \omega(p)$ is a Floyd-Hoare annotation,
	i.e.,
	that \emph{(i)} $\beta(m_\init) \equiv \top$,
	that \emph{(ii)}
	$\hoareTriple{\beta(m)}{\lambda(t)}{\beta(m')}$ whenever $m \fire{t} m'$,
	and that \emph{(iii)} $\beta(m) \equiv \bot$ if $m\cap \errorPlaces \neq \emptyset$.
	From this, it follows by sequential composition of Hoare triples that for every firing sequence $m_\init = m_0 \fire{t_1} \ldots \fire{t_n} = m_n$
	where $m_n\cap \errorPlaces \neq \emptyset$,
	the Hoare triple $\hoareTriple{\top}{\lambda(t_1)\ldots\lambda(t_n)}{\bot}$ holds, and thus $\P$ is correct.

    \begin{description}
    \item[Step (i):]
      It is trivial that $\beta(m_\init) \models \top$.
      By destructive equality resolution,
      we have $\top \equiv \big(\exists \ghostVars\,.\, \bigwedge_{v\in\ghostVars} v=\rho(v)\big)$.
      Furthermore, we have $\big(\bigwedge_{v\in\ghostVars} v=\rho(v)\big) \models \big(\bigwedge_{p\in m_\init}\omega(p)\big)$ by \prop{InitialOG}.
      Hence, it follows that
      \[
        \top
        \equiv \big( \exists \ghostVars\,.\,\bigwedge_{v\in\ghostVars} v=\rho(v) \big)
        \models \big( \exists \ghostVars\,.\, \bigwedge_{p\in m_\init}\omega(p) \big)
        \equiv \beta( m_\init)
      \]
      and we are done.
    \item[Step (ii):]
      Let $m\fire{t} m'$.
      Using \prop{InductiveOG} and \prop{InterferenceFreeOG},
      we show that for every $p'\in m'$, it holds that
      \[
        \hoareTriple{ \bigwedge_{p\in m} \omega(p) }{ \lambda(t);\gamma(t) }{ \omega(p') }
      \]
      By conjunction, it follows that
      \[
        \hoareTriple{ \bigwedge_{p\in m} \omega(p) }{ \lambda(t);\gamma(t) }{ \bigwedge_{p'\in m'} \omega(p') }
      \]
      holds.
      As $\lambda(t)$ does not refer to the ghost variables,
      and $\gamma(t)$ consists only of assignments to the ghost variables (without any guards) and does not affect the program variables,
      it follows that the Hoare triple below holds:
      \[
        \hoareTriple{ \exists \ghostVars \,.\, \bigwedge_{p\in m} \omega(p) }{ \lambda(t) }{ \exists \ghostVars \,.\, \bigwedge_{p'\in m'} \omega(p') }
      \]
    \item[Step (iii):]
      Let $m\in\reachMarkings$ with $m\cap\errorPlaces \neq \emptyset$,
      $p\in m\cap\errorPlaces$.
      As $\omega(p) \equiv \bot$ by \prop{SafeOG},
      we have
      \[
        \beta(m) \equiv \big( \exists \ghostVars\,.\, \bigwedge_{p\in m} \omega(p) \big) \equiv \big( \exists \ghostVars\,.\, \bot \big) \equiv \bot \ .
      \]
    \end{description}
\end{proof}

\begin{example}
  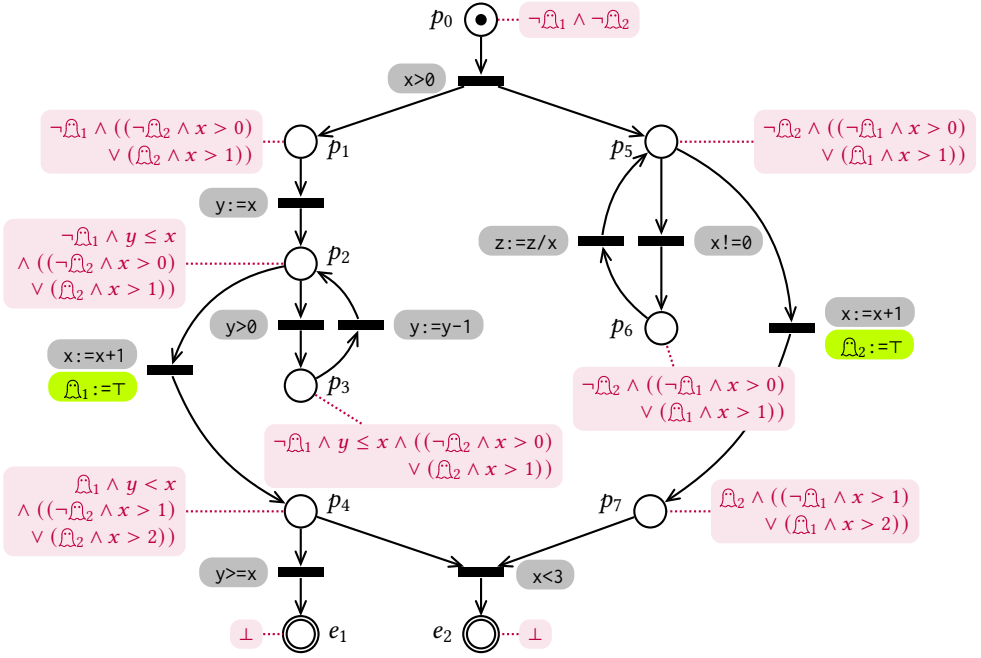
\begin{figure}
  	\centering
  	\begin{externalize}{og-example}
  	\begin{tikzpicture}[thick,scale=1,node distance=0.5cm,
  		trans/.style={->,>=angle 45,thick},
  		place/.style={draw,circle,inner sep=1.5mm},
  		anno/.style={purple, fill=purple!10!white, rounded corners, thin, align=right, font=\footnotesize},
  		annolink/.style={purple,densely dotted}
  		]

  		\node[place,label=left:$p_0$] (p0) {};
  		\node[token] (token) at (p0) {};
  		\node[transition,below=5mm of p0,label={[xshift=-0.3mm,yshift=0.3mm]left:$\stsmcol{x>0}{lightgray}$}] (t0) {};

  		\node[place,below=5mm of t0,label={[yshift=-1mm]right:$p_1$}, xshift=-24mm] (p1) {};
  		\node[transition,below=5mm of p1,label=left:$\stsmcol{y:=x}{lightgray}$] (t1) {};
  		\node[place,below=5mm of t1,label={[yshift=1mm]right:$p_2$}] (p2) {};
  		\node[transition,below=5mm of p2,label=left:$\stsmcol{y>0}{lightgray}$] (t2) {};
  		\node[transition,below=5mm of p2,label=right:$\stsmcol{y:=y-1}{lightgray}$, xshift=8mm] (t3) {};
  		\node[place,below=5mm of t2,label=right:$p_3$] (p3) {};
  		\node[transition,left=12mm of p3,label={[align=right]left:\stsmcol{x:=x+1}{lightgray}\\\stsmcol{$\ghost_1$:=$\top$}{lime}},yshift=2mm] (t4) {};
  		\node[place,below=12mm of p3,label={[yshift=1mm]right:$p_4$}] (p4) {};
  		\node[transition,below=5mm of p4,label=left:$\stsmcol{y>=x}{lightgray}$] (te1) {};

  		\node[place,below=5mm of t0,label={[yshift=-1mm]left:$p_5$}, xshift=24mm] (p5) {};
  		\node[transition,below=10mm of p5,label=right:$\stsmcol{x!=0}{lightgray}$] (t5) {};
  		\node[transition,below=10mm of p5,label=left:$\stsmcol{z:=z/x}{lightgray}$, xshift=-8mm] (t6) {};
  		\node[place,below=20mm of p5,label=left:$p_6$] (p6) {};
  		\node[transition,right=12mm of p6,label={[align=left]right:\stsmcol{x:=x+1}{lightgray}\\\stsmcol{$\ghost_2$:=$\top$}{lime}}] (t7) {};
  		\node[place,right=42mm of p4,label={[yshift=1mm]left:$p_7$}] (p7) {};

  		\node[transition,below=5mm of p4,label={[xshift=0.3mm,yshift=-0.3mm]right:$\stsmcol{x<3}{lightgray}$}, xshift=24mm] (te2) {};

  		\node[place,below=5mm of te1,label=right:$e_1$, accepting] (e1) {};
  		\node[place,below=5mm of te2,label=left:$e_2$, accepting] (e2) {};

  		\draw [trans] (p0) to (t0);
  		\draw [trans] (t0) to (p1);
  		\draw [trans] (p1) to (t1);
  		\draw [trans] (t1) to (p2);
  		\draw [trans] (p2) to (t2);
  		\draw [trans] (t2) to (p3);
  		\draw [trans,bend right=20] (p3) to (t3);
  		\draw [trans,bend right=20] (t3) to (p2);
  		\draw [trans,bend right=30] (p2) to (t4);
  		\draw [trans,bend right=20] (t4) to (p4);
  		\draw [trans] (p4) to (te1);
  		\draw [trans] (te1) to (e1);
  		\draw [trans] (p4) to (te2);

  		\draw [trans] (t0) to (p5);
  		\draw [trans] (p5) to (t5);
  		\draw [trans] (t5) to (p6);
  		\draw [trans,bend left=20] (p6) to (t6);
  		\draw [trans,bend left=20] (t6) to (p5);
  		\draw [trans,bend left=30] (p5) to (t7);
  		\draw [trans,bend left=20] (t7) to (p7);
  		\draw [trans] (p7) to (te2);
  		\draw [trans] (te2) to (e2);
  		
  		\node[anno,right of=p0,anchor=west] (o0) {$\lnot \ghost_1\land \lnot \ghost_2$};
  		\draw[annolink] (o0) -- (p0);

  		\node[anno,left of=p1,anchor=east] (o1) {$\lnot \ghost_1\land ((\lnot \ghost_2\land x>0)$\\$\null\lor(\ghost_2\land x>1))$};
  		\draw[annolink] (o1) -- (p1);

  		\node[anno,left=13mm of p2,anchor=east] (o2) {$\lnot \ghost_1\land y\leq x$\\$\null\land ((\lnot \ghost_2\land x>0)$\\$\null\lor(\ghost_2\land x>1))$};
  		\draw[annolink] (o2) -- (p2);

  		\node[anno,below of=p3,anchor=north,xshift=15mm] (o3) {$\lnot \ghost_1\land y\leq x
  		  			\land ((\lnot \ghost_2\land x>0)$\\$\null\lor(\ghost_2\land x>1))$};
  		\draw[annolink] (o3) -- (p3);

  		\node[anno,left=13mm of p4,anchor=east] (o4) {$\ghost_1\land y<x$\\$\null \land ((\lnot \ghost_2\land x>1)$\\$\null\lor(\ghost_2\land x>2))$};
  		\draw[annolink] (o4) -- (p4);

  		\node[anno,right=10mm of p5,anchor=west] (o5) {$\lnot \ghost_2\land ((\lnot \ghost_1\land x>0)$\\$\null\lor(\ghost_1\land x>1))$};
  		\draw[annolink] (o5) -- (p5);

  		\node[anno,below of=p6,anchor=north,xshift=3mm] (o6) {$\lnot\ghost_2\land ((\lnot \ghost_1\land x>0)$\\$\null\lor(\ghost_1\land x>1))$};
  		\draw[annolink] (o6) -- (p6);

  		\node[anno,right of=p7,anchor=west,xshift=3mm] (o7) {$\ghost_2\land ((\lnot \ghost_1\land x>1)$\\$\null\lor(\ghost_1\land x>2))$};
  		\draw[annolink] (o7) -- (p7);

  		\node[anno,left of=e1,anchor=east] (oe1) {$\bot$};
  		\draw[annolink] (oe1) -- (e1);

  		\node[anno,right of=e2,anchor=west] (oe2) {$\bot$};
  		\draw[annolink] (oe2) -- (e2);
  	\end{tikzpicture}
  	\end{externalize}
  	\caption{Owicki-Gries annotation for the program in \cref{fig:petri-program}, with $\rho(\protect\ghost_1)=\rho(\protect\ghost_2)=\bot$. The transitions $t$ are labeled with the ghost updates $\gamma(t)$ (in green). Each place $p$ is annotated (in purple) with the formula $\omega(p)$.}
  	\label{fig:og-example}
  \end{figure}

  \Cref{fig:og-example} presents an Owicki-Gries annotation $\og=(\ghostVars, \omega, \gamma, \rho)$ for the Petri program of~\cref{fig:petri-program}.
  The idea of this Owicki-Gries annotation is the same as discussed in the introduction.
  There are two ghost variables $\ghostVars=\{\ghost_1, \ghost_2\}$, used to track whether each of the ``threads'' has already incremented $x$.
  Both ghost variables are initialized to $\bot$ (i.e., $\rho(\ghost_1)=\rho(\ghost_2)=\bot$) and updated when $x$ is incremented.
  The annotation~$\omega(p)$ for each place~$p$ includes a case distinction on the ghost variable of the other thread to reflect whether it has already incremented $x$.
  For example, in~$p_5$, $x>1$ holds if the left-hand thread has already executed its increment, and $x>0$ otherwise.
\end{example}
As the example shows, Owicki-Gries annotations can provide a concise correctness argument for Petri programs,
as opposed to, say, specifying an inductive invariant for each of the 17~reachable markings of the example program.
Owicki-Gries annotations can be \emph{validated} efficiently: given the components $(\ghostVars, \omega, \gamma, \rho)$,
it requires only quadratically many (in the size of the program) satisfiability and Hoare triple checks to determine whether the four validity conditions of \cref{def:og-valid} hold.
Special attention has to be paid to \prop{InterferenceFreeOG},
as it references the \emph{co-marked} relation between places and transitions.
Computing this relation from scratch for arbitrary Petri programs would lead to exponential validation cost,
though unfolding techniques~\cite{McMillan:Unfoldings} can reduce the practical impact.
However, for Petri programs derived from other representations such as program code~\cite{Heizmann:Petrification},
the co-marked relation can be constructed efficiently along with the Petri program itself.

It is well-known that Owicki-Gries proofs of correct programs can always be constructed
by introducing ghost variables that track the control locations of all threads,
as shown by \citeauthor{Lamport:Proving-Multiprocess}~\cite{Lamport:Proving-Multiprocess}.
A similar result applies in our formalism.
We refer to the resulting annotations as the ``na\"ive'' Owicki-Gries annotations, to distinguish them from the method developed in the rest of the paper.
As Petri programs do not have a clear notion of threads, we encode the control state by introducing boolean ghost variables for each place, indicating whether the program is currently at that place.
This yields a linear overhead in the number of places compared to \citeauthor{Lamport:Proving-Multiprocess}'s approach.
\begin{definition}
  \label{def:naive-og}
  Let $(A,\postFun)$ be a safe invariant domain for $\P$.
  The corresponding \emph{na\"ive Owicki-Gries annotation} $\og_\textrm{na\"ive}(A,\postFun) = (\ghostVars, \omega, \gamma, \rho)$
  is defined as follows:
  \begin{itemize}
  \item We introduce boolean ghost variables $\ghostVars = \{\, \ghost_p \mid p\in P \,\}$ for each place,
    indicating whether or not the place currently has a token.
  \item Consequently, we initialize these ghost variables to $\rho(\ghost_p) = \top$ for all initial places $p\in m_\init$, and to $\rho(\ghost_p) = \bot$ for all non-initial places $p\in P\setminus m_\init$.
  \item The updates of the ghost variables at every transition $t$
    ensure the intended meaning of the ghost variables,
    by setting $\gamma(t)(\ghost_p) = \top$ if $p\in\succ{t}\setminus\pred{t}$,
    and $\gamma(t)(\ghost_p) = \bot$ if $p\in\pred{t}\setminus\succ{t}$.
  \item Each place $p$ is annotated with the formula
  \[
    \omega(p) \coloneq \bigvee \left\{\, \thenha{m} \land \beta(m) \mid m\in\reachMarkings \land p\in m \,\right\}\ ,
  \]
  where $\thenha{m} \coloneq \bigwedge_{p\in m} \ghost_p \land \bigwedge_{p\in P\setminus m} \lnot\ghost_p$
  and $\beta(m) \coloneq \bigvee_{\langle m,\varphi \rangle\in\absReachConf{A}{\postFun}} \varphi$.
  \end{itemize}
\end{definition}
In the above definition, the function $\beta$ is an inductive annotation (or \emph{Floyd-Hoare annotation}) of the reachability graph formed by all reachable markings.
The formula $\thenha{m}$ constrains the ghost variables to express that the current marking of $\P$ is exactly the marking $m$.
\begin{observationrep}
  The na\"ive Owicki-Gries annotation $\og_\textrm{na\"ive}(A,\postFun)$ is valid.
\end{observationrep}
\begin{proof}
  We show each of the validity conditions separately.
  \begin{description}
  \item[\prop{InitialOG}]
    Let $p\in m_\init$.
    Observe that $\beta(m_\init) \equiv \top$, as $\langle m_\init, \top \rangle \in\absReachConf{A}{\postFun}$ holds.
    Hence, we have that
    \[
      \left( \bigwedge_{v\in\ghostVars} v=\rho(v) \right) \models \thenha{m_\init} \equiv \thenha{m_\init} \land \beta(m_\init) \models \omega(p)
    \]
  \item[\prop{InductiveOG}]
    We begin by observing that the equivalence
    \[
      \left( \bigwedge_{p\in X} \omega(p) \right) \equiv \bigvee_{X \subseteq m \in \reachMarkings} \thenha{m} \land \beta(m)
    \]
    holds for every set of places $X$.
    This follows by inserting the definition of $\omega$,
    applying distributivity,
    and noting that for all $m\neq m'$, the conjunction $\thenha{m} \land \thenha{m'}$ is equivalent to~$\bot$.

    By this observation, we must in fact show the Hoare triple
    {\small\[
      \hoareTriple{ \bigvee \big\{\thenha{m} \land \beta(m)\mid m \in \reachMarkings \land \pred{t} \subseteq m \big\} }{ \lambda(t);\gamma(t) }{ \bigvee \big\{\thenha{m'} \land \beta(m')\mid m' \in \reachMarkings \land \succ{t} \subseteq m\big\}}
    \]}
    To see that this holds,
    observe that it suffices to show for every disjunct of the precondition,
    i.e., for every $m\in\reachMarkings$ with $\pred{t} \subseteq m$,
    that there exists a disjunct in the postcondition,
    i.e., there exists $m'\in\reachMarkings$ with $\succ{t} \subseteq m'$,
    such that
    \[
      \hoareTriple{ \thenha{m}\land\beta(m) }{ \lambda(t);\gamma(t) }{ \thenha{m'} \land \beta(m') }
    \]
    holds.
    We choose the marking $m'$ with $m \fire{t} m'$.
    We observe that $\hoareTriple{\thenha{m}}{\gamma(t)}{\thenha{m'}}$ holds (by definition of $\gamma$)
    and that $\hoareTriple{\beta(m)}{\lambda(t)}{\beta(m')}$ holds (for each disjunct in $\beta(m)$, i.e., for each $\langle m,\varphi\rangle \in \absReachConf{A}{\postFun}$, there exists a disjunct for $\langle m', \post{\varphi}{t}\rangle$ in $\beta(m')$).
    As $\lambda(t)$ and $\gamma(t)$ operate on disjoint sets of variables,
    we can combine the Hoare triples.
  \item[\prop{InterferenceFreeOG}]
    The proof is largely analogous to that of \prop{InductiveOG}:
    We show that for every marking $m$ with $\pred{t} \subseteq m$ and $p\in m$,
    the Hoare triple
    \[
      \hoareTriple{ \thenha{m}\land\beta(m) }{ \lambda(t);\gamma(t) }{ \thenha{m'} \land \beta(m') }
    \]
    holds, where $m\fire{t} m'$.
    Noting that $p\in m'$ still holds, we arrive at the desired Hoare triple.
  \item[\prop{SafeOG}]
    As $(A,\postFun)$ is safe, it must hold that for every $\langle m,\varphi \rangle\in\absReachConf$ with $m\cap\errorPlaces \neq \emptyset$,
    we have $\varphi$ equal to $\bot$.
    Then it follows directly that $\omega(p) \equiv \bot$ for all $p\in\errorPlaces$.
  \end{description}
\end{proof}
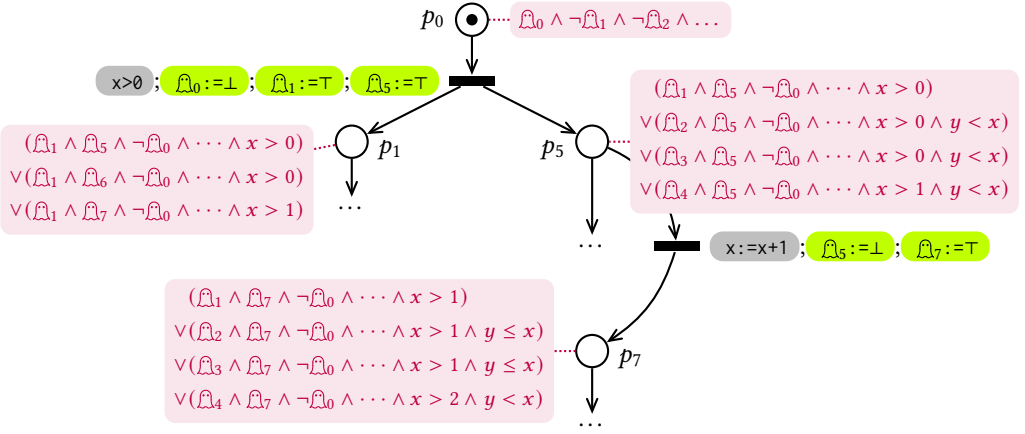
\begin{figure}
	\centering
	\begin{externalize}{naive-og-example}
	\begin{tikzpicture}[thick,scale=1,node distance=0.5cm,
		trans/.style={->,>=angle 45,thick},
		place/.style={draw,circle,inner sep=1.5mm},
		anno/.style={purple, fill=purple!10!white, rounded corners, thin, align=right, font=\footnotesize},
		annolink/.style={purple,densely dotted}
		]

		\node[place,label=left:$p_0$] (p0) {};
		\node[token] (token) at (p0) {};
		\node[transition,below=5mm of p0,label=left:\stsmcol{x>0}{lightgray};\stsmcol{$\ghost_0$:=$\bot$}{lime};\stsmcol{$\ghost_1$:=$\top$}{lime};\stsmcol{$\ghost_5$:=$\top$}{lime}] (t0) {};

		\node[place,below=5mm of t0,label={[yshift=-1mm]right:$p_1$}, xshift=-16mm] (p1) {};
		\node[below=5mm of p1] (t1) {\dots};

		\node[place,below=5mm of t0,label={[yshift=-1mm]left:$p_5$}, xshift=16mm] (p5) {};
		\node[below=10mm of p5] (p6) {\dots};
		\node[transition,right=5mm of p6,label=right:\stsmcol{x:=x+1}{lightgray};\stsmcol{$\ghost_5$:=$\bot$}{lime};\stsmcol{$\ghost_7$:=$\top$}{lime}] (t7) {};
		\node[place,below=10mm of p6,label={[yshift=-1mm]right:$p_7$}] (p7) {};
		\node[below=6mm of p7] (t8) {\dots};

		\draw [trans] (p0) to (t0);
		\draw [trans] (t0) to (p1);
		\draw [trans] (p1) to (t1);
		\draw [trans] (t0) to (p5);
		\draw [trans] (p5) to (p6);
		\draw [trans,bend left=30] (p5) to (t7);
		\draw [trans,bend left=20] (t7) to (p7);
		\draw [trans] (p7) to (t8);

        \node[anno,right of=p0,anchor=west] (o0) {$\ghost_0\land \lnot \ghost_1\land\lnot\ghost_2\land\dots$};
        \draw[annolink] (o0) -- (p0);

        \node[anno,left of=p1,anchor=east,yshift=-5mm] (o1) {$\begin{aligned}
        			&(\ghost_1 \land \ghost_5 \land \lnot \ghost_0 \land \dots \land x>0) \\
        			\lor &(\ghost_1 \land \ghost_6 \land \lnot \ghost_0 \land \dots \land x>0) \\
        			\lor &(\ghost_1 \land \ghost_7 \land \lnot \ghost_0 \land \dots \land x>1)
        		\end{aligned}$};
       \draw[annolink] (o1) -- (p1);

       \node[anno, right of=p5,anchor=west] (o5) {$\begin{aligned}
       			&(\ghost_1 \land \ghost_5 \land \lnot \ghost_0 \land \dots \land x>0) \\
       			\lor &(\ghost_2 \land \ghost_5 \land \lnot \ghost_0 \land \dots \land x>0 \land y< x) \\
       			\lor &(\ghost_3 \land \ghost_5 \land \lnot \ghost_0 \land \dots \land x>0\land y< x)\\
       			\lor &(\ghost_4 \land \ghost_5 \land \lnot \ghost_0 \land \dots \land x>1\land y< x)\\
       		\end{aligned}$};
       \draw[annolink] (o5) -- (p5);

       \node[anno,left of=p7,anchor=east] (o7) {$\begin{aligned}
       			&(\ghost_1 \land \ghost_7 \land \lnot \ghost_0 \land \dots \land x>1) \\
       			\lor &(\ghost_2 \land \ghost_7 \land \lnot \ghost_0 \land \dots \land x>1 \land y\leq x) \\
       			\lor &(\ghost_3 \land \ghost_7 \land \lnot \ghost_0 \land \dots \land x>1\land y\leq x)\\
       			\lor &(\ghost_4 \land \ghost_7 \land \lnot \ghost_0 \land \dots \land x>2\land y< x)\\
       		\end{aligned}$};
       \draw[annolink] (o7) -- (p7);
	\end{tikzpicture}
	\end{externalize}
	\caption{Excerpt of the na\"ive Owicki-Gries annotation for the Petri program from \cref{fig:petri-program}.}
	\label{fig:naive-og-example}
\end{figure}
\begin{example}
  \label{ex:naive-og}
  Consider our example program (\cref{fig:petri-program}),
  and the invariant domain $(A,\postFun)$
  with
  \[
    A = \{\, \top,\ x>0,\ x>0\land y\leq x,\ x > 1,\ x>1 \land y \leq x,\ x>1 \land y < x,\ x>2 \land y<x,\ \bot \}\ ,
  \]
  where $\post{\varphi}{t}$ is the strongest formula in $A$ such that $\hoareTriple{\varphi}{\lambda(t)}{\post{\varphi}{t}}$ holds,
  excepting that $\post{x>0\land y\leq x}{\stsmcol{x:=x+1}{t2}} = (x>0\land y\leq x)$.
  \Cref{fig:naive-og-example}~shows an excerpt of the na\"ive Owicki-Gries annotation $\og_\textrm{na\"ive}(A,\postFun)$.
\end{example}
\goodbreak

This construction provides a canonical way to construct Owicki-Gries annotations from safe invariant domains.
However, the resulting annotations are, of course, deeply unsatisfying.
The annotation $\omega(p)$ of each place $p$ consists of a large disjunction over all reachable markings containing~$p$, encoding the full reachability graph of the Petri program.
The size of these formulas is typically exponential in the size of~$\P$.
Hence, we cannot hope to efficiently validate such Owicki-Gries annotations.
Our evaluation in \cref{sec:evaluation} shows that the explosion in size and validation time is not purely theoretic, but leads to prohibitively bad performance on practical benchmarks.

\section{Empires: An Abstract Representation of Interleaving Information}
\label{sec:empires}

The key challenge of constructing compact Owicki-Gries annotations lies in the synthesis of ghost variables that contain enough interleaving information, without being too fine-grained.
As the na\"ive encoding of the entire global control state leads to unreasonably large Owicki-Gries annotations,
a coarser abstraction of interleaving information is needed.
To this end, we present the notion of \emph{empires}, as an intermediate step on the way from invariant domains to Owicki-Gries annotations.
Empires capture sufficient interleaving information to allow for the construction of valid Owicki-Gries annotations,
while allowing to abstract away irrelevant details of the interleaving, and specifically to summarize multiple global control states if they need not be distinguished.
We begin by explaining the basic concepts underlying empires,
and subsequently show how to construct an Owicki-Gries annotation by using ghost variables to encode an empire.

The first step towards empires consists in defining a suitable abstraction of the given program's global control state.
That is, instead of distinguishing every reachable marking,
we need an abstract representation for a set of markings.
To this end, observe that in the na\"ive Owicki-Gries annotation of~\cref{fig:naive-og-example},
the ghost variables record precisely for all places whether they have a token or not.
However, we have seen in \cref{fig:og-example},
that it suffices to record for each thread, whether it has already incremented $x$.
We do not need to distinguish whether the right-hand thread is currently in $p_5$ or in $p_6$.
Hence, we group places such as this in a \emph{region}.
\begin{definition}
	A \emph{region} $r$ is a non-empty set of places ($\emptyset\subset r\subseteq P$)
	such that for all~$p_1,p_2 \in r$,
	it holds that $p_1$~and~$p_2$ are not co-related.
	The set of all regions is denoted by~$\Regions$.
\end{definition}
A region can be thought of as an abstraction of the \emph{local} control state of a thread.
As the notion of threads does not exist in Petri programs, we here capture this intuition by the condition that the places must not be co-related.
Typically (and always in the construction presented in \cref{sec:algo}), regions are segments of places that are connected by sequential transitions, i.e., transitions with only a single predecessor and successor place (no fork or join transitions).
Branching inside a region is also possible, i.e., a place in the region may have multiple outgoing (or incoming) sequential transitions to (resp.\ from) other places in the region.

An abstraction of the local control state gives rise to a corresponding abstraction of the global control state, i.e., of the Petri program's markings.
Intuitively, we take one region to represent the local control state of each thread.
Formally, this idea is captured by \emph{territories}.
\begin{definition}
    A \emph{territory} $\tau$ is a set of pairwise disjoint regions~$\{r_1,\dots,r_n\}$,
	such that every set~$m$ derived by taking exactly one place from each region in~$\tau$ is a reachable marking.
	The set of all such markings is denoted by~$\treaty{\tau}$.
	We write~$\Territories$ for the set of all territories.
\end{definition}
A territory $\tau$ is an abstract representation of the markings in $\treaty{\tau}$.
To update this representation when the program takes a step, we extend the firing relation from markings to territories.
\begin{definition}
	\label{def:territory-fires}
    A transition $t$ is \emph{enabled} in a territory $\tau$, denoted by $t \in \enabled{\tau}$,
    if for every predecessor place $p\in\pred{t}$,
    there exists a distinct region $r_p\in\tau$ with $p\in r_p$
    (i.e., $r_{p_1} \neq r_{p_2}$ if $p_1\neq p_2$).
    We write $\tau \fire{t} \tau'$ for territory $\tau'$
    if
    \[
      \tau' = \big( \tau \setminus \{\, r_p \mid p\in\pred{t} \,\} \big) \cup \{\, r'_{p'} \mid p' \in \succ{t} \,\} \ ,
    \]
    where for each successor place $p'\in\succ{t}$, $r'_{p'}$ is a distinct region containing $p'$
    (i.e., $r'_{p_1'} \neq r'_{p_2'}$ if $p_1'\neq p_2'$).
    The regions in the set $\tau \setminus \{\, r_p \mid p\in\pred{t} \,\}$ are called \emph{bystanders},
    and can be equivalently written as
    \[
      \bystanders{t}{\tau} = \{\, r\in\tau \mid r \cap \pred{t} = \emptyset \,\}\ .
    \]
\end{definition}
The notion of enabled transitions and the firing relation $\fire{t}$ on territories naturally reflect the corresponding notions on the markings represented by the territories.
\begin{propositionrep}
	\label{prop:territory-succ}
	Transition $t$ is enabled in a territory~$\tau$ iff there exists $m\in\treaty{\tau}$ that enables~$t$.
	If $\tau \fire{t} \tau'$ holds,
	then for all $m\in\treaty{\tau}$ and $m'$ with $m \fire{t} m'$, it follows that $m'\in\treaty{\tau'}$.
\end{propositionrep}
\begin{proof}
	We prove the two statements separately:
    \begin{description}
    	\item[(1)] Transition $t$ is enabled in a territory~$\tau$ iff there exists a marking $m\in\treaty{\tau}$ that enables~$t$: \\
    	$\Rightarrow:$ Assume that the transition $t$ is enabled in $\tau$.
    	Then, for each $p_i \in \pred{t}$, there exists a distinct region $r_{p_i}$ such that $p_i \in r_{p_i}$ (i.e. $r_{p_1} \neq r_{p_2}$ if $p_1 \neq p_2$ for $p_1, p_2 \in \pred{t}$).
    	By the definition of a treaty, there must exist at least one marking $m\in\treaty{\tau}$ constructed by selecting each $p_i \in \pred{t}$ from each distinct region $r_{p_1}$.
    	Thus, $\pred{t} \subseteq m$ which implies that $m$ enables~$t$. \\
    	$\Leftarrow:$ Assume that there exists a marking $m \in \treaty{\tau}$ that enables~$t$.
    	Then, for each $p \in \pred{t}$, it holds that $p \in m$.
    	By the definition of treaty, this implies that for each such $p$, there exists a distinct region $r_p \in \tau$ containing it.
    	Therefore, it directly follows that $t \in \enabled{\tau}$.
    	
    	\item[(2)] If $\tau \fire{t} \tau'$ holds, then for all $m\in\treaty{\tau}$ and $m'$ with $m \fire{t} m'$, it follows that $m'\in\treaty{\tau'}$: \\
    	Consider an arbitrary transition $t$ and two territories $\tau, \tau'$ with $\tau \fire{t} \tau'$.
    	By definition of $\triangleright$, this means that $\tau'$ has the following form: $\tau' = \big( \tau \setminus \{\, r_p \mid p\in\pred{t} \,\} \big) \cup \{\, r'_{p'} \mid p' \in \succ{t} \,\}$ (where for each successor place $p'\in\succ{t}$, $r'_{p'}$ is a distinct region containing $p'$).
    	Now, consider two markings $m, m'$ with $m \fire{t} m'$ which means that $m'$ has the following form: $m' = (m \setminus \pred{t}) \cup \succ{t}$.
    	As $m\in\treaty{\tau}$, it holds that $(m \setminus \pred{t})\in\treaty{\bystanders{t}{\tau}}$.
    	Furthermore, $\succ{t}\in\treaty{\{  r'_{p'} \in (\tau' \setminus \bystanders{t}{\tau}) \mid p' \in \succ{t}  \}}$ holds trivially, which implies that $m' \in \treaty{\tau'}$.
    \end{description}
\end{proof}

Note that, while the firing relation on markings is deterministic (for every~$m$ and~$t$, there is at most one marking~$m'$ with $m\fire{t} m'$),
the firing relation on territories is nondeterministic.
There can be multiple territories~$\tau'$ such that $\tau\fire{t}\tau'$, each representing a different set of markings.
\Cref{prop:territory-succ} prescribes only a minimal set of markings that must be represented by~$\tau'$.
The choice of a ``good'' territory~$\tau'$ is a key step in our empire construction presented in~\cref{sec:algo}.

Having fixed our abstraction of the global control state,
we are ready to introduce empires.
As explained above, the purpose of empires is to capture interleaving information that is relevant to prove correctness.
Hence, we define an empire as a \emph{state machine},
which can update its state whenever a new transition is executed and appended to the current interleaving.
Each state of the empire is labeled by a territory, describing the current global control state,
and a formula over the program variables (the ``law'' of the state),
describing the current values of the program variables.
\begin{definition}[Empire]\label{def:empire}
	An \emph{empire} is a finite state machine $E = (Q, q_\init, \delta, \lawFun, \terrFun)$
	consisting of
	      a finite set of states $Q$,
	      an initial state $q_\init\in Q$,
	      a (partial) transition function $\delta: Q\times T\rightharpoonup Q$,
	      a law mapping $\lawFun: Q \rightarrow \formulas{\programVars}$ labeling each state with a formula, and
	      a territory mapping $\terrFun: Q \rightarrow \textbf{Terr}$ labeling each state with a territory.
\end{definition}
\begin{example}
	An example of an empire is shown in \cref{fig:empire-example}.
	Here, the states use territories to group multiple markings in the same state (e.g., the territory for $q_2$ abstracts the four markings $\{p_2,p_5\},\{p_2,p_6\},\{p_3,p_5\},\{p_3,p_6\}$).
	As a result, the annotation remains compact, since it avoids separate states for different markings with the same law. %
\end{example}
\begin{definition}
\label{def:empire-valid}
	An empire $E = (Q, q_\init, \delta, \lawFun, \terrFun)$ is \emph{valid}
	if the following conditions hold:
	\begin{description}
	  \item[{\propdef[initial-law\textsubscript{$E$}]{initial-law}}:]
	    $\law{q_\init}\equiv\top$
	  \item[{\propdef[initial-territory\textsubscript{$E$}]{initial-territory}}:]
	    $m_\init\in\treaty{\terr{q_\init}}$
	  \item[{\propdef[inductive-law\textsubscript{$E$}]{inductive-law}}:]
	    For all $q\in Q, t\in\enabled{\terr{q}}$, one of the following holds:
	    \begin{itemize}
		  \item $q' \coloneq \delta(q, t)$ is defined and the Hoare triple
		    $\hoareTriple{\law{q}}{\lambda(t)}{\law{q'}}$ holds, or
		  \item the Hoare triple $\hoareTriple{\law{q}}{\lambda(t)}{\bot}$ holds.
	    \end{itemize}
	  \item[{\propdef[inductive-territory\textsubscript{$E$}]{inductive-territory}}:]
	    For all $q,q'\in Q, t\in T$ with $\delta(q, t)=q'$, $\terr{q}\fire{t}\terr{q'}$ holds.
	  \item[{\propdef[safe\textsubscript{$E$}]{safe}}:]
	    For all states $q\in Q$ such that there exists a region $r\in\terr{q}$ with $r \cap \errorPlaces\neq\emptyset$, it holds that $\law{q}\equiv\bot$.
	\end{description}
\end{definition}
The first four validity conditions ensure that the law and territory of the state reached after some interleaving
soundly reflect the values of the program variables resp.\ the control state at that point.
The second alternative of \prop{inductive-law} optionally allows omitting edges
if the law is sufficient to prove that executing some transition is infeasible.
The last condition \prop{safe} ensures that the empire has sufficient information to prove correctness of the program.
Note that, in particular, \prop{safe} may hold vacuously,
as all edges to states with law $\bot$ can be omitted.
\goodbreak
\begin{theoremrep}\label{thm:empire-soundness}
	If there exists a valid empire for the Petri program $\P$, then $\P$ is correct.
\end{theoremrep}
\begin{proofsketch}
    Let $E = (Q, q_\init, \delta, \lawFun, \terrFun)$ be a valid empire for $\P$.
    Assume wlog.\ that for all~$q,q'\in Q$ with $q\neq q'$,
    the formulae $\law{q}$ and $\law{q'}$ differ (at least syntactically).
    Then $(A,\postFun)$ is  a safe invariant domain,
    with $A = \{\,\law{q} \mid q \in Q \,\} \cup \{\bot\}$, where $\post{\bot}{t} = \bot$ and
    \[
      \post{\law{q}}{t} = \begin{cases}
        \law{\delta(q,t)} & \textbf{if } \delta(q,t) \text{ is defined}\\
        \top & \textbf{if } t \text{ is not enabled in } \terr{q}\\
        \bot & \textbf{otherwise}
      \end{cases}
      \vspace*{-1.5em}
    \]
\end{proofsketch}
\begin{proof}
    Let $E = (Q, q_\init, \delta, \lawFun, \terrFun)$ be a valid empire for $\P$.
    Assume wlog.\ that for all states $q,q'\in Q$ with $q\neq q'$,
    the formulae $\law{q}$ and $\law{q'}$ are different (at least syntactically).
    The set $A = \{\,\law{q} \mid q \in Q \,\} \cup \{\bot\}$ and the function $\postFun : A \times T \to A$ with
    \[
      \post{\law{q}}{t} = \begin{cases}
        \law{\delta(q,t)} & \textbf{if } \delta(q,t) \text{ is defined}\\
        \top & \textbf{if } t \text{ is not enabled in } \terr{q}\\
        \bot & \textbf{otherwise}
      \end{cases}
      \quad \text{ and } \quad \post{\bot}{t} = \bot
    \]
    form a safe invariant domain.

    The Hoare triples $\hoareTriple{ \law{q} }{\lambda(t)}{\post{\law{q}}{t}}$ follow from \prop{inductive-law}.
    It can be shown inductively that the reachable abstract configurations all have the form $\langle m,\law{q} \rangle$ for $m\in\treaty{\terr{q}}$
    or $\langle m,\bot \rangle$.
    Hence, by property \prop{safe} of $E$,
    the invariant domain is safe.
\end{proof}
\begin{toappendix}
  Alternatively, we show below (\cref{thm:imperial-og-correct}) that a valid Owicki-Gries annotation can be constructed from a valid empire.
  From \cref{prop:og-soundness}, the correctness of $\P$ then also follows.
\end{toappendix}

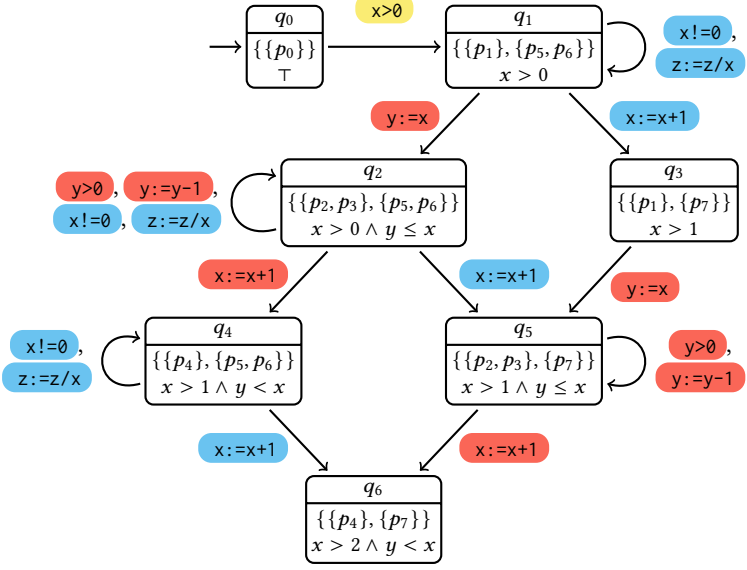
\begin{figure}
	\centering
	\begin{externalize}{empire-example}
	\begin{tikzpicture}[thick,scale=1,node distance=8mm,
		every node/.style={inner sep=2pt, outer sep=2pt, align=center}
		]

		\node [empstate]                         (0) {$q_0$ \nodepart{two} $\{\{p_0\}\}$\\$\top$};
		\node [empstate,right=15mm of 0]         (1) {$q_1$ \nodepart{two} $\{\{p_1\}, \{p_5,p_6\}\}$\\$x>0$};
		\node [empstate,below=of 1,xshift=-20mm] (2) {$q_2$ \nodepart{two} $\{\{p_2,p_3\}, \{p_5,p_6\}\}$\\$x>0\land y\leq x$};
		\node [empstate,below=of 1,xshift=20mm]  (3) {$q_3$ \nodepart{two} $\{\{p_1\}, \{p_7\}\}$\\$x>1$};
		\node [empstate,below=of 2,xshift=-20mm] (4) {$q_4$ \nodepart{two} $\{\{p_4\}, \{p_5,p_6\}\}$\\$x>1\land y<x$};
		\node [empstate,below=of 2,xshift=20mm]   (5) {$q_5$ \nodepart{two} $\{\{p_2,p_3\}, \{p_7\}\}$\\$x>1\land y\leq x$};
		\node [empstate,below=of 5,xshift=-20mm]         (7) {$q_6$ \nodepart{two} $\{\{p_4\}, \{p_7\}\}$\\$x>2\land y<x$};

		\draw[->] ($(0)+(-1,0)$) to (0);
		\draw[->] (0) -- node[above] {$\stsmcol{x>0}{main}$} (1);
		\draw[->] (1) -- node[left,yshift=1mm] {$\stsmcol{y:=x}{t1}$} (2);
		\draw[->] (1) -- node[right,yshift=1mm] {$\stsmcol{x:=x+1}{t2}$} (3);
		\draw[->] (2) -- node[left,yshift=1mm] {$\stsmcol{x:=x+1}{t1}$} (4);
		\draw[->] (2) -- node[right,yshift=1mm] {$\stsmcol{x:=x+1}{t2}$} (5);
		\draw[->] (3) -- node[right,yshift=-1mm] {$\stsmcol{y:=x}{t1}$} (5);
		\draw[->] (4) -- node[left,yshift=-1mm] {$\stsmcol{x:=x+1}{t2}$} (7);
		\draw[->] (5) -- node[right,yshift=-1mm] {$\stsmcol{x:=x+1}{t1}$} (7);

		\draw[->] (1) edge [right,loop right,looseness=3] node {$\stsmcol{x!=0}{t2},$\\$\stsmcol{z:=z/x}{t2}$} (1);
		\draw[->] (2) edge [left,loop left,looseness=3] node {$\stsmcol{y>0}{t1},\stsmcol{y:=y-1}{t1},$\\$\stsmcol{x!=0}{t2},\stsmcol{z:=z/x}{t2}$} (2);
		\draw[->] (4) edge [left,loop left,looseness=3] node {$\stsmcol{x!=0}{t2},$\\$\stsmcol{z:=z/x}{t2}$} (4);
		\draw[->] (5) edge [right,loop right,looseness=3] node {$\stsmcol{y>0}{t1},$\\$\stsmcol{y:=y-1}{t1}$} (5);
	\end{tikzpicture}
	\end{externalize}
	\caption{A valid empire for the Petri program in \cref{fig:petri-program}.}
	\label{fig:empire-example}
\end{figure}

As empires are intended to be an intermediate step on the way from invariant domains to Owicki-Gries annotations,
we have to show (1)~how to derive an empire from an invariant domain $(A,\postFun)$, and (2)~how to construct an Owicki-Gries annotation from an empire $E = (Q,q_\init,\delta,\lawFun, \terrFun)$.
We present our construction for problem~(1) in \cref{sec:algo} below.
As for problem~(2), the key idea is to introduce a single ghost variable, named $\ghost$\,, that tracks the current state of the empire.
We use the empire's territories to determine the possible empire states whenever a token is in a given place $p$,
namely the states $Q_p \coloneq \{\, q\in Q \mid \exists r\in\terr{q}\,.\,p\in r \,\}$.
By taking the laws of these possible states, we constrain the possible program states when there is a token in~$p$.

\begin{definition}
\label{def:empire-og}
Let $E = (Q, q_\init, \delta, \lawFun, \terrFun)$ be an empire.
The \emph{imperial Owicki-Gries annotation} $\og_E = (\ghostVars, \omega, \gamma, \rho)$ is defined as follows:
\begin{itemize}
	\item We introduce a single ghost variable, i.e., $\ghostVars = \{\ghost\}$, which ranges over the states in~$Q$, and tracks the current state in the empire.
	\item Therefore, this ghost variable $\ghost$ is initialized with the initial state $q_\init$ of $E$, i.e., $\rho(\ghost) = q_\init$.
	\item The ghost updates $\gamma$ encode the transition function $\delta$.
	\[
		\gamma(t)(\ghost) = \begin{cases}
		  \delta(\ghost\,,t) & \textbf{if defined}\\
		  \ghost & \textbf{otherwise}
		\end{cases}
	\]
    If $\delta(q,t)=q$ for all $q$ where $\delta(q,t)$ is defined, the ghost update is omitted. 

	\item Each place $p$ is annotated with a case distinction over the possible empire states $Q_p$:
	\[
		\omega(p) \coloneq \bigvee_{q\in Q_p} \big( \ghost=q \land \law{q} \big)
	\]
\end{itemize}
\end{definition}

As $Q$ is finite, the transition function $\delta$ can be encoded through conditional expressions.
It suffices to consider cases where the value of $\ghost$ is in $Q_p$, for all predecessor places~$p$ of~$t$.

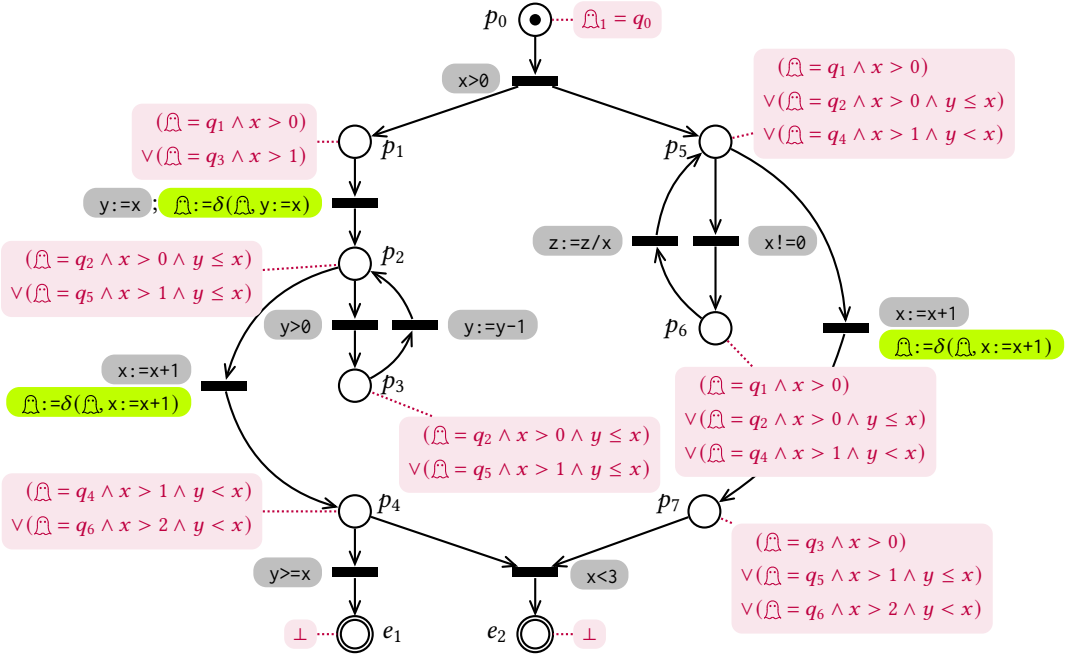
\begin{figure}
	\centering
	\begin{externalize}{imperial-og-example}
	\begin{tikzpicture}[thick,scale=1,node distance=0.5cm,
		trans/.style={->,>=angle 45,thick},
		place/.style={draw,circle,inner sep=1.5mm},
  		anno/.style={purple, fill=purple!10!white, rounded corners, thin, align=right, font=\footnotesize},
  		annolink/.style={purple,densely dotted}
		]

		\node[place,label=left:$p_0$] (p0) {};
		\node[token] (token) at (p0) {};
		\node[transition,below=5mm of p0,label={[xshift=-0.3mm,yshift=0.3mm]left:\stsmcol{x>0}{lightgray};\stsmcol{$\ghost$:=$\delta(\ghost, \texttt{x>0})$}{lime}}] (t0) {};

		\node[place,below=5mm of t0,label={[yshift=-1mm]right:$p_1$}, xshift=-24mm] (p1) {};
		\node[transition,below=5mm of p1,label=left:\stsmcol{y:=x}{lightgray};\stsmcol{$\ghost$:=$\delta(\ghost, \texttt{y:=x})$}{lime}] (t1) {};
		\node[place,below=5mm of t1,label={[yshift=1mm]right:$p_2$}] (p2) {};
		\node[transition,below=5mm of p2,label=left:$\stsmcol{y>0}{lightgray}$] (t2) {};
		\node[transition,below=5mm of p2,label=right:$\stsmcol{y:=y-1}{lightgray}$, xshift=8mm] (t3) {};
		\node[place,below=5mm of t2,label={[yshift=0.2mm]right:$p_3$}] (p3) {};
		\node[transition,left=12mm of p3,label={left,align=right}:\stsmcol{x:=x+1}{lightgray}\\\stsmcol{$\ghost$:=$\delta(\ghost, \texttt{x:=x+1})$}{lime}] (t4) {};
		\node[place,below=12mm of p3,label={[yshift=1mm,xshift=-0.5mm]right:$p_4$}] (p4) {};
		\node[transition,below=5mm of p4,label=left:$\stsmcol{y>=x}{lightgray}$] (te1) {};

		\node[place,below=5mm of t0,label={[yshift=-1mm]left:$p_5$}, xshift=24mm] (p5) {};
		\node[transition,below=10mm of p5,label=right:$\stsmcol{x!=0}{lightgray}$] (t5) {};
		\node[transition,below=10mm of p5,label=left:$\stsmcol{z:=z/x}{lightgray}$, xshift=-8mm] (t6) {};
		\node[place,below=20mm of p5,label=left:$p_6$] (p6) {};
		\node[transition,right=12mm of p6,label={right,align=left}:\stsmcol{x:=x+1}{lightgray}\\\stsmcol{$\ghost$:=$\delta(\ghost, \texttt{x:=x+1})$}{lime}] (t7) {};
		\node[place,right=42mm of p4,label={[yshift=1mm,xshift=0.5mm]left:$p_7$}] (p7) {};

		\node[transition,below=5mm of p4,label={[xshift=0.3mm,yshift=-0.3mm]right:$\stsmcol{x<3}{lightgray}$}, xshift=24mm] (te2) {};

		\node[place,below=5mm of te1,label=right:$e_1$, accepting] (e1) {};
		\node[place,below=5mm of te2,label=left:$e_2$, accepting] (e2) {};

		\draw [trans] (p0) to (t0);
		\draw [trans] (t0) to (p1);
		\draw [trans] (p1) to (t1);
		\draw [trans] (t1) to (p2);
		\draw [trans] (p2) to (t2);
		\draw [trans] (t2) to (p3);
		\draw [trans,bend right=20] (p3) to (t3);
		\draw [trans,bend right=20] (t3) to (p2);
		\draw [trans,bend right=30] (p2) to (t4);
		\draw [trans,bend right=30] (t4) to (p4);
		\draw [trans] (p4) to (te1);
		\draw [trans] (te1) to (e1);
		\draw [trans] (p4) to (te2);

		\draw [trans] (t0) to (p5);
		\draw [trans] (p5) to (t5);
		\draw [trans] (t5) to (p6);
		\draw [trans,bend left=20] (p6) to (t6);
		\draw [trans,bend left=20] (t6) to (p5);
		\draw [trans,bend left=30] (p5) to (t7);
		\draw [trans,bend left=20] (t7) to (p7);
		\draw [trans] (p7) to (te2);
		\draw [trans] (te2) to (e2);

        \node[anno,right of=p0,anchor=west] (o0) {${\ghost=q_0}$};
        \draw[annolink] (o0) -- (p0);
        \node[anno,left of=p1,anchor=east] (o1) {$\begin{aligned}
        				&(\ghost=q_1\land x>0)\\
        				\lor &(\ghost=q_3\land x>1)
        			\end{aligned}$};
        \draw[annolink] (o1) -- (p1);
        \node[anno, left=10mm of p2, anchor=east,yshift=-2mm] (o2) {$\begin{aligned}
        				&(\ghost=q_2\land x>0\land y\leq x)\\
        				\lor &(\ghost=q_5\land x>1 \land y\leq x)
        			\end{aligned}$};
        \draw[annolink] (o2) -- (p2);
        \node[anno, right of=p3,anchor=north west,yshift=-4mm,xshift=0.8mm] (o3) {$\begin{aligned}
        				&(\ghost=q_2\land x>0\land y\leq x)\\
        				\lor &(\ghost=q_5\land x>1 \land y\leq x)
        			\end{aligned}$};
       \draw[annolink] (o3) -- (p3);
       \node[anno,left=10mm of p4,anchor=east] (o4) {$\begin{aligned}
       				&(\ghost=q_4\land x>1\land y<x)\\
       				\lor &(\ghost=q_6\land x>2\land y<x)
       			\end{aligned}$};
       \draw[annolink] (o4) -- (p4);
       
       \node[anno,right of=p5,anchor=west,yshift=5mm] (o5) {$\begin{aligned}
       			&(\ghost=q_1\land x>0)\\
       			\lor &(\ghost=q_2\land x>0\land y\leq x)\\
       			\lor &(\ghost=q_4\land x>1\land y<x)
       		\end{aligned}$};
       \draw[annolink] (o5) -- (p5);
       \node[anno, below of=p6,anchor=north,xshift=12mm] (o6) {$\begin{aligned}
       				&(\ghost=q_1\land x>0)\\
       				\lor &(\ghost=q_2\land x>0\land y\leq x)\\
       				\lor &(\ghost=q_4\land x>1\land y<x)
       			\end{aligned}$};
       \draw[annolink] (o6) -- (p6);
       \node[anno, below right of=p7,anchor=north west,yshift=2mm] (o7) {$\begin{aligned}
       				&(\ghost=q_3\land x>0)\\
       				\lor &(\ghost=q_5\land x>1\land y\leq x)\\
       				\lor &(\ghost=q_6\land x>2\land y<x)
       			\end{aligned}$};
       	\draw[annolink] (o7) -- (p7);

  		\node[anno,left of=e1,anchor=east] (oe1) {$\bot$};
  		\draw[annolink] (oe1) -- (e1);

  		\node[anno,right of=e2,anchor=west] (oe2) {$\bot$};
  		\draw[annolink] (oe2) -- (e2);
	\end{tikzpicture}
	\end{externalize}
	\caption{Imperial Owicki-Gries annotation constructed from the empire shown in \cref{fig:empire-example}.}
	\label{fig:imperial-og-example}
\end{figure}

\begin{example}
	\Cref{fig:imperial-og-example} shows the imperial Owicki-Gries annotation for the empire from \cref{fig:empire-example}.
	The ghost variable $\ghost$ is initialized to $\rho(\ghost)=q_0$.
	The three ghost updates are shown in green; all other updates are omitted as the respective transitions occur only as self-loops in the empire.

	This Owicki-Gries annotation resembles that in~\cref{fig:og-example}.
	It also encodes case distinctions over which increments have already been performed.
	There are however some differences:
	\begin{enumerate}
		\item The annotations for $p_5$, $p_6$, and $p_7$ include unnecessary information about the variable \texttt{y}.
		\item Instead of two boolean ghost variables $\ghost_1$ and $\ghost_2$, we use a single ghost variable $\ghost$ that ranges over the states of the empire.
		\item There are additional ghost updates for the transitions \stsmcol{x>0}{lightgray} and \stsmcol{y:=x}{lightgray}.
	\end{enumerate}
	We address difference~(1) in \cref{sec:focus}.
	(2) is mainly a matter of preference, since the ghost variable encodes the same information. %
	We plan to address the slight overhead introduced by the additional ghost update (3) in future work.
\end{example}

It remains to show that if the empire $E$ is valid, then $\og_E$ is a valid Owicki-Gries annotation.
The following lemma presents a first key observation towards this result.

\begin{lemmarep}
	\label{lem:place-conjunction}
	Let $\hat{P}$ be a nonempty subset of a reachable marking,
	and let us write $Q_{\hat{P}}$ for the intersection $\bigcap_{p\in\hat{P}} Q_p$.
	Then it holds that
	\[
	\bigwedge_{p\in\hat{P}}\omega(p)
	\equiv
	\bigvee_{q\in Q_{\hat{P}}} \big(\ghost=q\land\law{q} \big) \ .
	\]
\end{lemmarep}
\begin{proof}
	Proven in \cref{lemma:focus-place-conjunction} for an arbitrary focus. We define $\legalIndices{q}{r} = \{1, \dots, n\}$ for (product) invariant domain $(A, \postFun) = (A_1,\postFun[1]) \otimes \ldots \otimes (A_n, \postFun[n])$, i.e. the trivial focus where each conjunct of the law is focused. Then the proof also holds here.
\end{proof}
Informally, the lemma shows that we can combine the annotations of multiple places in a marking,
and recover the same information as given by directly computing the possible empire states for the set of places $\hat{P}$.
This is crucial to show the \prop{InterferenceFreeOG} property of Owicki-Gries annotations,
where we consider $\hat{P} = \pred{t} \cup \{p\}$ for a place~$p$ that is co-marked with~$t$,
as well as for the \prop{InductiveOG} property of transitions $t$ with multiple predecessor places (``joins''), with $\hat{P} = \pred{t}$.
\Cref{lem:place-conjunction} allows us to break down these properties to individual empire states,
and we observe:
\begin{lemma}\label{lem:state-transition}
  Let $q,q'\in Q$ and $t$ a transition with $\delta(q,t)=q'$.
  The following Hoare triple holds:
  \[
    \hoareTriple{ \ghost=q \land \law{q} }{ \lambda(t);\gamma(t) }{ \ghost=q' \land \law{q'} }
  \]
\end{lemma}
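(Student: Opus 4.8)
The plan is to verify the two conjuncts of the postcondition separately, exploiting that $\lambda(t)$ refers only to program variables while $\gamma(t)$ assigns only to the ghost variable $\ghost$ and contains no guard. For the ghost conjunct $\ghost = q'$: executing $\lambda(t)$ from a state with $\ghost = q$ leaves $\ghost$ unchanged, and then $\gamma(t)$ assigns to $\ghost$ the value of $\delta(\ghost, t)$, which---evaluated in a state with $\ghost = q$, using that $\delta(q,t) = q'$ is defined---is exactly $q'$. If the ghost update is omitted according to \cref{def:empire-og}, then $\delta(q,t) = q$, hence $q' = q$, and $\ghost = q = q'$ still holds trivially afterwards.

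For the law conjunct, I would first establish the Hoare triple $\hoareTriple{\law{q}}{\lambda(t)}{\law{q'}}$, which is then preserved by the subsequent $\gamma(t)$ since $\gamma(t)$ does not affect the program variables. To obtain this triple from \prop{inductive-law}, I need to discharge its hypothesis $t \in \enabled{\terr{q}}$: since $\delta(q,t) = q'$ is defined, \prop{inductive-territory} gives $\terr{q}\fire{t}\terr{q'}$, and by \cref{def:territory-fires} the firing relation on territories is defined only when $t$ is enabled in the source territory. Now \prop{inductive-law} applied to the pair $(q,t)$ yields one of two alternatives: either $\delta(q,t)$ is defined and $\hoareTriple{\law{q}}{\lambda(t)}{\law{\delta(q,t)}}$ holds, which is precisely $\hoareTriple{\law{q}}{\lambda(t)}{\law{q'}}$; or $\hoareTriple{\law{q}}{\lambda(t)}{\bot}$ holds, whence $\hoareTriple{\law{q}}{\lambda(t)}{\law{q'}}$ follows by weakening the postcondition (since $\bot \models \law{q'}$).

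Combining both parts by conjunction of postconditions and sequential composition---legitimate because $\lambda(t)$ and $\gamma(t)$ operate on disjoint sets of variables, so neither can falsify the assertion maintained for the other---yields the claimed Hoare triple $\hoareTriple{\ghost = q \land \law{q}}{\lambda(t);\gamma(t)}{\ghost = q' \land \law{q'}}$. The only mildly delicate step is discharging the enabledness hypothesis of \prop{inductive-law} via \prop{inductive-territory}, together with correctly treating its second, vacuous alternative; the remainder is routine Hoare-logic bookkeeping.
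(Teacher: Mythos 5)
Your proof is correct and follows exactly the route the paper takes: its entire proof is the one-liner ``This follows from \prop{inductive-law} and the definition of $\gamma$,'' and your write-up simply fills in the details (discharging enabledness via \prop{inductive-territory}, handling the $\bot$ alternative by postcondition weakening, and combining the two conjuncts via disjointness of program and ghost variables). No gaps.
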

\begin{inlineproof}
  This follows from \prop{inductive-law} and the definition of $\gamma$.
\end{inlineproof}

\begin{toappendix}
\begin{lemma}\label{lem:partition}
	Given a set of places $\hat{P}=\{p_1,\dots,p_n\}$ that is a subset of some reachable marking and a territory $\tau$ with $\hat{P}\subseteq places(\tau)$, it holds that there exist regions $r_1,\dots,r_n$ with $p_i\in r_i$ and $p_i\notin r_j$ (for $i\neq j$) and a territory $\hat{\tau}$ with $places(\hat{\tau})\cap\hat{P}=\emptyset$, such that $\tau = \{r_1,\dots,r_n\}\cup\hat{\tau}$.
\end{lemma}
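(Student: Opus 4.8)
The statement will follow by directly unfolding the definitions of \emph{region}, \emph{territory}, and the \emph{co-related} relation, so I expect no genuine obstacle here. The plan is as follows. Since a territory consists of pairwise disjoint regions, every place in $\places{\tau}$ belongs to exactly one region of $\tau$. As $\hat P \subseteq \places{\tau}$, I would use this to pick, for each $i$, the unique region $r_i \in \tau$ with $p_i \in r_i$; these are the regions claimed by the lemma.

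The only real content is the claim that $r_1, \dots, r_n$ are pairwise distinct. I would prove this by contradiction: if $r_i = r_j$ for some $i \neq j$, then $p_i$ and $p_j$ are two distinct places (the $p_k$ being distinct elements of a common marking) lying in a single region, so by the definition of a region they are \emph{not} co-related. On the other hand, $\hat P$ is a subset of some reachable marking $m$, hence $p_i, p_j \in m$ with $p_i \neq p_j$, which is exactly the definition of $\co{p_i}{p_j}$ --- a contradiction. From pairwise distinctness, together with the pairwise disjointness of the regions of $\tau$, it follows at once that $p_i \notin r_j$ whenever $i \neq j$.

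To finish, I would set $\hat\tau \coloneq \tau \setminus \{r_1, \dots, r_n\}$. Because the $r_i$ are distinct elements of $\tau$, the equality $\tau = \{r_1, \dots, r_n\} \cup \hat\tau$ is immediate, so it only remains to verify $\places{\hat\tau} \cap \hat P = \emptyset$: if some $p_i$ were contained in a region $r' \in \hat\tau$, then $r' \neq r_i$ (since $r_i \notin \hat\tau$) would give two regions of $\tau$ meeting in $p_i$, contradicting disjointness. The only points needing a word of care are the tacit conventions that $\places{\tau} = \bigcup_{r \in \tau} r$ and that the $p_i$ are pairwise distinct; both are built into the ambient setup, since a marking is a set of places.
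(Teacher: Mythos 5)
Your proposal is correct and follows essentially the same argument as the paper's proof: distinct places of $\hat{P}$ lie in a common reachable marking, hence are co-related and cannot share a region, while the pairwise disjointness of a territory's regions yields the unique assignment $p_i \mapsto r_i$ and the partition $\tau = \{r_1,\dots,r_n\}\cup\hat{\tau}$. Your version merely spells out the distinctness and disjointness bookkeeping that the paper leaves implicit.
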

\begin{proof}
	Since $\hat{P}=\{p_1,\dots,p_n\}$ is a subset of some reachable marking, $p_i$ and $p_j$ are co-related for $i\neq j$.
	Therefore, there cannot exist a region $r$ with $p_i\in r$ and $p_j\in r$ (for $i\neq j$), as a region may only contain places that are not co-related.
	As $\hat{P}\subseteq places(\tau)$, we can partition $\tau = \{r_1,\dots,r_n\}\cup\hat{\tau}$ with $p_i\in r_i$, $p_i\notin r_j$ (for $i\neq j$), and $places(\hat{\tau})\cap\hat{P}=\emptyset$.
\end{proof}

\begin{lemma}\label{lem:places-transition}
	Given a set of places $\hat{P}$ that is a subset of some reachable marking, for any $q\in Q, t\in T$ where $\delta(q, t)=q'$ is defined with $\pred{t}\subseteq\hat{P}\subseteq places(\terr{q})$, it holds that $(\hat{P}\setminus\pred{t})\cup\succ{t}\subseteq places(\terr{q'})$
\end{lemma}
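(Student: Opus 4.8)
The plan is to read off the claim directly from how the firing relation $\fire{t}$ acts on territories, using the partition lemma (\cref{lem:partition}) to control which regions disappear. Since $\delta(q,t)=q'$ is defined, property \prop{inductive-territory} yields $\terr{q}\fire{t}\terr{q'}$, so by \cref{def:territory-fires} the transition $t$ is enabled in $\terr{q}$ and $\terr{q'} = \big(\terr{q}\setminus\{\, r_p \mid p\in\pred{t} \,\}\big) \cup \{\, r'_{p'} \mid p'\in\succ{t} \,\}$, where each $r_p$ is a distinct region of $\terr{q}$ containing $p$ and each $r'_{p'}$ is a distinct region containing $p'$. The successor part of the goal is then immediate: every $p'\in\succ{t}$ lies in $r'_{p'}\in\terr{q'}$, so $\succ{t}\subseteq places(\terr{q'})$.

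It remains to show $\hat{P}\setminus\pred{t}\subseteq places(\terr{q'})$, and for this I would apply \cref{lem:partition} to $\hat{P}=\{p_1,\dots,p_n\}$ and the territory $\terr{q}$; both hypotheses hold, since $\hat{P}$ is a subset of a reachable marking and $\hat{P}\subseteq places(\terr{q})$ by assumption. This gives regions $r_1,\dots,r_n$ and a territory $\hat{\tau}$ with $\terr{q}=\{r_1,\dots,r_n\}\cup\hat{\tau}$, $p_i\in r_i$, $p_i\notin r_j$ for $i\neq j$, and $places(\hat{\tau})\cap\hat{P}=\emptyset$. Reorder the $p_i$ so that $\pred{t}=\{p_1,\dots,p_k\}$. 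The crucial observation is that the regions removed by firing $t$ are exactly $r_1,\dots,r_k$: because the regions of a territory are pairwise disjoint, each $p_i$ with $i\leq k$ belongs to a unique region of $\terr{q}$, and since $p_i\in r_i$ this region is $r_i$. Hence $\{r_{k+1},\dots,r_n\}\cup\hat{\tau}\subseteq\terr{q'}$.

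The conclusion now follows: $\hat{P}\setminus\pred{t}=\{p_{k+1},\dots,p_n\}$ and each such $p_i$ lies in $r_i\in\terr{q'}$, so $\hat{P}\setminus\pred{t}\subseteq places(\terr{q'})$; together with $\succ{t}\subseteq places(\terr{q'})$ this gives $(\hat{P}\setminus\pred{t})\cup\succ{t}\subseteq places(\terr{q'})$.

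I expect the only point requiring care to be the identification of the removed regions with $r_1,\dots,r_k$: the definition of $\fire{t}$ only stipulates that \emph{some} distinct regions containing the predecessor places are removed, but pairwise disjointness of the regions in a territory makes this choice forced, so nothing is actually lost. Everything else is bookkeeping. One can also avoid \cref{lem:partition} entirely and argue pointwise --- for $p\in\hat{P}\setminus\pred{t}$, the region of $\terr{q}$ containing $p$ cannot coincide with any $r_{p''}$ for $p''\in\pred{t}$, since $p$ and $p''$ are distinct elements of a reachable marking, hence co-related, hence never in a common region --- but routing the argument through the partition lemma keeps it short.
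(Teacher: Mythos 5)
Your proof is correct and follows essentially the same route as the paper's: both invoke \prop{inductive-territory} to obtain $\terr{q}\fire{t}\terr{q'}$, apply \cref{lem:partition} to split $\terr{q}$ into the regions $r_i$ containing the $p_i$ plus a remainder $\hat{\tau}$, identify the bystander regions as those $r_i$ with $p_i\notin\pred{t}$ together with $\hat{\tau}$, and conclude by collecting places. The extra care you take in arguing that the removed regions are forced by pairwise disjointness is a welcome (if minor) tightening of a step the paper leaves implicit.
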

\begin{proof}
	Let $\hat{P} = {p_1,\dots,p_n}$ be a set of places that is a subset of some reachable marking and let $q\in Q, t\in T$ where $\delta(q, t)=q'$ is defined with $\pred{t}\subseteq\hat{P}\subseteq places(\terr{q})$.
	From \cref{lem:partition} it follows that $\terr{q}$ can be partitioned as $\terr{q} = \{r_1,\dots,r_n\}\cup\hat{\tau}$ with $p_i\in r_i$, $p_i\notin r_j$ (for $i\neq j$), and $places(\hat{\tau})\cap\hat{P}=\emptyset$.

	Since $E$ is a valid empire, it follows from \prop{inductive-territory} in \cref{def:empire} that $\terr{q}\fire{t}\terr{q'}$ holds.
	This implies that the following holds:
	\[
		\terr{q'} = \bystanders{t}{\terr{q}} \cup \{ f(p) \mid p \in \succ{t} \}
	\]
	where $f : \succ{t} \to \mathbf{Regions}$ is an injective function
	such that $p\in f(p)$ for all $p\in \succ{t}$.
	
	W.l.o.g., assume that $\pred{t}=\{p_{m+1},\dots,p_n\}$.
	Therefore, it holds that $\bystanders{t}{\terr{q}}=\{r_1,\dots,r_m\}\cup\hat{\tau}$ and
	$\terr{q'}=\{r_1,\dots,r_m\}\cup\hat{\tau}\cup \{ f(p) \mid p \in \succ{t} \}$, and thus:
	\begin{align*}
		places(\terr{q'})&\supseteq r_1\cup\dots\cup r_m\cup\bigcup_{r\in\hat{\tau}}r\cup\succ{t}\\
		&\supseteq r_1\cup\dots\cup r_m\cup\succ{t}\\
		&\supseteq \{p_1,\dots,p_m\}\cup\succ{t}\\
		&\supseteq (\hat{P}\setminus\pred{t})\cup\succ{t}
	\end{align*}
\end{proof}

\end{toappendix}

Putting these observations together,
we show:

\begin{lemmarep}\label{lem:hoare-transition}
  Let $\hat{P}$ be a nonempty subset of a reachable marking,
  and let $t$ be a transition such that we have $\pred{t} \subseteq \hat{P}$.
  Then the following Hoare triple holds:
  \[
	\hoareTriple{\bigwedge_{p\in \hat{P}} \omega(p)}{\lambda(t);\gamma(t)}{\bigwedge_{p\in (\hat{P}\setminus\pred{t})\cup\succ{t}} \omega(p)}
  \]
\end{lemmarep}
\begin{proof}
	Proven in \cref{lemma:focus-hoare-transition} for an arbitrary $focus$. We define $\legalIndices{q}{r} = \{1, \dots, n\}$ for (product) invariant domain $(A, \postFun) = (A_1,\postFun[1]) \otimes \ldots \otimes (A_n, \postFun[n])$, i.e. the trivial focus where each conjunct of the law is focused. Then the proof also holds here.\\
\end{proof}

\begin{theorem}
	\label{thm:imperial-og-correct}
	If $E$ is a valid empire,
	the imperial Owicki-Gries annotation $\og_E$ is valid.
\end{theorem}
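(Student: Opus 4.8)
The plan is to verify, one by one, the four conditions of \cref{def:og-valid} for the annotation $\og_E = (\ghostVars, \omega, \gamma, \rho)$ of \cref{def:empire-og}, where $\ghostVars = \{\ghost\}$, $\rho(\ghost) = q_\init$, $\gamma$ encodes $\delta$, and $\omega(p) = \bigvee_{q\in Q_p}\bigl(\ghost = q \land \law{q}\bigr)$. Two of the conditions are immediate from the validity of the empire (\cref{def:empire-valid}) together with the shape of $\omega$; the two Hoare-triple conditions are obtained by instantiating \cref{lem:hoare-transition} (which, in turn, rests on \cref{lem:place-conjunction} and \cref{lem:state-transition}).

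For \prop{InitialOG}, fix $p \in m_\init$. By \prop{initial-territory}, $m_\init \in \treaty{\terr{q_\init}}$, so some region of $\terr{q_\init}$ contains $p$, i.e.\ $q_\init \in Q_p$; and $\law{q_\init} \equiv \top$ by \prop{initial-law}. Hence the disjunct $\ghost = q_\init \land \law{q_\init}$ of $\omega(p)$ is equivalent to $\ghost = q_\init$, and the required implication $\bigl(\bigwedge_{v\in\ghostVars} v = \rho(v)\bigr) \models \omega(p)$ holds. For \prop{SafeOG}, fix $p \in \errorPlaces$. For each $q \in Q_p$ there is a region $r \in \terr{q}$ with $p \in r$, hence $r \cap \errorPlaces \neq \emptyset$, so $\law{q} \equiv \bot$ by \prop{safe}; therefore every disjunct of $\omega(p)$ — and thus $\omega(p)$ itself, also in the degenerate case $Q_p = \emptyset$ — is equivalent to $\bot$.

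For \prop{InductiveOG}, fix a transition $t$. Since $\pred{t}$ is a nonempty subset of some reachable marking (every transition is enabled somewhere), \cref{lem:hoare-transition} with $\hat{P} = \pred{t}$ gives the triple $\hoareTriple{\bigwedge_{p\in\pred{t}}\omega(p)}{\lambda(t);\gamma(t)}{\bigwedge_{p\in(\pred{t}\setminus\pred{t})\cup\succ{t}}\omega(p)}$, and $(\pred{t}\setminus\pred{t})\cup\succ{t} = \succ{t}$. For \prop{InterferenceFreeOG}, fix a transition $t$ and a place $p$ co-marked with $t$. By definition of co-marked, $p \notin \pred{t}$ and $\pred{t} \cup \{p\}$ is a subset of some reachable marking, so with $\hat{P} = \pred{t} \cup \{p\}$, \cref{lem:hoare-transition} yields $\hoareTriple{\bigwedge_{p'\in\hat{P}}\omega(p')}{\lambda(t);\gamma(t)}{\bigwedge_{p'\in(\hat{P}\setminus\pred{t})\cup\succ{t}}\omega(p')}$. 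As $p\notin\pred{t}$, the precondition is exactly $\omega(p)\land\bigwedge_{p'\in\pred{t}}\omega(p')$ and $p \in (\hat{P}\setminus\pred{t})\cup\succ{t}$, so weakening the postcondition to $\omega(p)$ produces precisely the triple demanded by \prop{InterferenceFreeOG}.

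I expect the real work to lie entirely in \cref{lem:hoare-transition}, not in this theorem: that lemma is where the empire's laws and territories are actually used. Its proof should combine \cref{lem:place-conjunction} — collapsing $\bigwedge_{p\in\hat{P}}\omega(p)$ into $\bigvee_{q\in Q_{\hat{P}}}(\ghost = q \land \law{q})$ — with \cref{lem:state-transition} applied per state $q \in Q_{\hat{P}}$ (using \prop{inductive-law} and the definition of $\gamma$), and must invoke the territory-tracking lemmas (\cref{lem:partition}, \cref{lem:places-transition}) to argue that whenever $t$ fires from such a $q$, its successor $q' = \delta(q,t)$ satisfies $(\hat{P} \setminus \pred{t}) \cup \succ{t} \subseteq \places{\terr{q'}}$, so the reached states lie in $Q_{(\hat{P}\setminus\pred{t})\cup\succ{t}}$; states from which $t$ is infeasible contribute a $\bot$-law disjunct via the second alternative of \prop{inductive-law}. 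The delicate points there — joins with several predecessor places, and the interference case where the co-marked place $p$ must survive because its region is a bystander of $t$ — are exactly what those auxiliary lemmas isolate, so once they are in place the present theorem reduces to the four-way case split above.
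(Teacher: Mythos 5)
Your proof is correct and follows essentially the same route as the paper's: the same four-way case split, with \prop{InitialOG} and \prop{SafeOG} read off directly from \prop{initial-law}, \prop{initial-territory}, and \prop{safe}, and both Hoare-triple conditions obtained by instantiating \cref{lem:hoare-transition} with $\hat{P}=\pred{t}$ and $\hat{P}=\pred{t}\cup\{p\}$ respectively. Your anticipated structure for the proof of \cref{lem:hoare-transition} also matches how the paper actually discharges it.
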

\begin{proof}
  Let $E=(Q,q_\init,\delta,\lawFun,\terrFun)$ be a valid empire, and $\og_E = (\ghostVars,\omega,\gamma,\rho)$.
  We show each of the four validity conditions separately.
  \begin{description}
    \item[\prop{InitialOG}:] Let $p\in m_\init$.
      By \prop{initial-territory}, it holds that $m_\init\in\treaty{\terr{q_\init}}$.
      Therefore, we know that $q_\init\in Q_p$.
      By \prop{initial-law},
      we have that $\law{q_\init} \equiv \top$,
      and so
      \[
        \bigwedge_{v\in\ghostVars}v=\rho(v)
        \quad\equiv\quad \ghost=q_\init
        \quad\equiv\quad \ghost=q_\init \land \law{q_\init}
        \quad\models\quad \omega(p) \ .
      \]

    \item[\prop{SafeOG}:] Let $p\in \errorPlaces$ be an error place.
      By \prop{safe}, it holds for every $q\in Q_p$
      that $\law{q} \equiv \bot$.
      Hence, we have $\omega(p) \equiv \bot$.

    \item[\prop{InductiveOG}:] Let $t$ be a transition.
      From \cref{lem:hoare-transition},
      using $\hat{P}=\pred{t}$,
      it follows that the Hoare triple $\hoareTriple{\bigwedge_{p\in\pred{t}} \omega(p)}{\lambda(t);\gamma(t)}{\bigwedge_{p\in (\pred{t}\setminus\pred{t})\cup\succ{t}} \omega(p)}$ holds.
      By simplifying the postcondition, we observe that the Hoare triple $\hoareTriple{\bigwedge_{p\in\pred{t}} \omega(p)}{\lambda(t);\gamma(t)}{\bigwedge_{p\in \succ{t}} \omega(p)}$ holds.

	\item[\prop{InterferenceFreeOG}:] Let $t$ be a transition and $p$ a place that is co-marked with $t$.
	  It follows that
	  the Hoare triple $\hoareTriple{\bigwedge_{p'\in\pred{t}\cup\{p\}} \omega(p')}{\lambda(t);\gamma(t)}{\bigwedge_{p'\in (\pred{t}\cup\{p\}\setminus\pred{t})\cup\succ{t}} \omega(p')}$ must hold,
	  by applying \cref{lem:hoare-transition} for $\hat{P}=\pred{t}\cup\{p\}$.
	  By weakening the postcondition, we conclude that the Hoare triple $\hoareTriple{\omega(p)\land\bigwedge_{p'\in\pred{t}} \omega(p')}{\lambda(t);\gamma(t)}{\omega(p)}$ also holds.
	\end{description}
	As all four validity conditions must hold, $\og_E$ is valid.
\end{proof}

\section{Constructing Empires by Symbolic Execution}
\label{sec:algo}

As \cref{sec:empires} shows how Owicki-Gries annotations can be derived from empires,
the remaining central step of our method is to construct an empire for a Petri program from a given safe invariant domain.
In this section, we present two algorithms that construct such empires step-by-step, through a symbolic execution of the program.
The first algorithm is straightforward, but produces unnecessarily large empires.
We then show how this algorithm can be improved by grouping more places into regions, resulting in the second algorithm, the construction of \emph{saturated empires}.

\subsection{A Straightforward Approach to Constructing Empires}
\label{sec:empire-naive}

In the following, let $(A,\postFun)$ be a safe invariant domain for the Petri program~$\P$.
We begin our construction by defining the type of states in the empire.
By the definition of empires, each state~$q$ of an empire must be labeled with a territory~$\terr{q}$ and a formula~$\law{q}$.
Hence, the most straightforward representation of an empire state is a pair $q = \langle \tau, \varphi \rangle$ of a territory~$\tau$ and a formula~$\varphi$,
allowing us to set $\terr{q} = \tau$ and $\law{q} = \varphi$.
Specifically, we always use a formula~$\varphi \in A$ given by the invariant domain.
We proceed by considering each of the four validity conditions for empires (\cref{def:empire-valid}) in turn,
each time choosing the simplest way to satisfy it.

Let us begin by considering the initial state $q_\init$.
Per \prop{initial-territory}, the territory $\terr{q_\init}$ must represent at least the initial marking $m_\init$.
Hence, the simplest choice is the territory which represents exactly $m_\init$ and no other markings.
Each place in $m_\init$ forms a singleton region.
\[
	\tau_\init \coloneq \big\{\, \{p\} \mid p \in m_\init \,\big\}
\]
Furthermore, as \prop{initial-law} fixes the formula $\law{q_\init}$ to be~$\top$,
we must define $q_\init \coloneq \langle \tau_\init, \top\rangle$.

The remaining states of the empire are constructed step-by-step, through symbolic execution.
Recursively, for every state $\langle \tau,\varphi \rangle$, and every transition~$t$ that is enabled in~$\tau$,
a successor state $\delta(\langle \tau,\varphi\rangle, t) = \langle \tau',\varphi' \rangle$ is computed.
To determine a suitable successor state, let us first examine the territory~$\tau'$.
The condition \prop{inductive-territory} prescribes that $\tau \fire{t} \tau'$ must hold.
This means that $\tau'$~must contain a (distinct) region for each successor place of~$t$,
and all bystander regions of~$\tau$ (i.e., regions that are not involved in the transition, in the sense that they do not contain a predecessor place of~$t$) must be preserved.
Hence, the simplest choice for~$\tau'$ is as follows:
\begin{definition}%
  Let $\tau$~be a territory, and $t \in \enabled{\tau}$ a transition.
  The \emph{replaced territory} is
  \[
    \replaced{t}{\tau} \coloneq \bystanders{t}{\tau} \cup \big\{\, \{p\} \mid p \in t^{\bullet} \,\big\} \ .
  \]
\end{definition}
\begin{lemmarep}
  \label{lem:replaces-terr}
  $\replaced{t}{\tau}$ is a territory, and it holds that $\tau \fire{t} \replaced{t}{\tau}$.
\end{lemmarep}
\begin{proof}
  We prove the two statements subsequently:
 \begin{enumerate}
   \item First, we have to prove that $\replaced{t}{\tau}$ is a territory.
     This holds if each $\tilde{m} \in \treaty{\replaced{t}{\tau}}$ is a reachable marking of $P$.
     Note that for each set $s \in \treaty{\bystanders{t}{\tau}}$, it holds that $m \in \treaty{\tau}$ for $m = s \cup \pred{t}$ and $m$ enables $t$.(We here extend the notion of treaty to arbitrary sets of regions.) This holds, since otherwise $\tau$ would not enable $t$ because there would be a region $r \in (\tau \setminus \bystanders{t}{\tau})$ with $r \cap \pred{t} = \emptyset$ which contradicts the definition of $enabled$ for territories.

     It also holds that $m \fire{t}  m'$ for $m' = s \cup \succ{t} \in \treaty{\tau}$.
     By construction, $\treaty{\replaced{t}{\tau}}$ contains exactly all such $m'$.
     Therefore, each $m' \in \treaty{\replaced{t}{\tau}}$ is a reachable marking.
     This also implies, that $\replaced{t}{\tau}$ is disjoint, since otherwise the Petri program would not be one-safe.

   \item Second, we have to prove that $\tau \fire{t} \replaced{t}{\tau}$.
     For each successor $\tilde{\tau}$ with $\tau \fire{t} \tilde{\tau}$,
     it has to hold that $\tilde{\tau} = \big( \tau \setminus \{\, r_p \mid p\in\pred{t} \,\} \big) \cup \{\, r'_{p'} \mid p' \in \succ{t} \,\}$ where $\big( \tau \setminus \{\, r_p \mid p\in\pred{t} \,\} \big) = \bystanders{t}{\tau}$.
     We can define $r'_{p'} = \{p'\}$ for each $p' \in \succ{\tau}$ which fits the requirements of $r'_{p'}$.
     Then it holds that $\tilde{\tau} = \replaced{t}{\tau}$ and therefore $\tau \fire{t} \replaced{t}{\tau}$.
 \end{enumerate}
\end{proof}
It remains only to choose a suitable successor law~$\varphi'$.
Condition \prop{inductive-law} requires that the Hoare triple $\hoareTriple{\varphi}{\lambda(t)}{\varphi'}$ must hold.
By the definition of invariant domains, this is always satisfied if we choose $\varphi' := \post{\varphi}{t}$.
Moreover, \prop{inductive-law} allows us to omit the edge from the empire if $\hoareTriple{\varphi}{\lambda(t)}{\bot}$ holds.
Hence, we let $\delta(\langle \tau,\varphi \rangle, t)$ be undefined whenever $\post{\varphi}{t} = \bot$.
The following definition summarizes our discussion:
\begin{definitionrep}
	\label{def:naive-empire}
	The \emph{na\"ive empire} $E_\textrm{na\"ive}(A,\postFun)$ corresponding to an invariant domain $(A,\postFun)$
	is the reachable part of the empire $(Q, q_\init, \delta, \lawFun, \terrFun)$,
	where
	\begin{itemize}
		\item the states $Q \coloneq \mathbf{Terr} \times A$ are pairs of territories and laws;
		\item the initial state is $q_\init \coloneq \langle \tau_\init, \top \rangle$,
		  with the territory $\tau_\init = \big\{\, \{p\} \mid p \in m_\init \,\big\}$;

		\item the partial transition function $\delta$ is defined as
		\[
		\delta(\langle \tau,\varphi\rangle,t) \coloneq \begin{cases}
			\text{undefined} & \textbf{if } t \not \in \enabled{\tau} \text{ or } \post{\varphi}{t} \equiv \bot\\
			\langle\replaced{t}{\tau}, \post{\varphi}{t}\rangle & \textbf{otherwise}
		\end{cases}
		\]
		\item and the law and territory mappings are given by $\law{\langle \tau, \varphi \rangle} \coloneq \varphi$ resp.\ $\terr{\langle \tau, \varphi \rangle} \coloneq \tau$.
		\goodbreak
	\end{itemize}
\end{definitionrep}
The restriction to the reachable part is required to ensure the condition \prop{safe}. We show:
\begin{propositionrep}
	\label{prop:naive-emp-valid}
	If the invariant domain $(A,\postFun)$ is safe,
	the na\"ive empire $E_\textrm{na\"ive}(A,\postFun)$ is valid.
\end{propositionrep}
\begin{proof}
	To show that an empire is valid, we need to prove four conditions:
	\begin{description}
	  \item[\prop{initial-law}:] Since $q_\init = \langle\tau_\init, \top\rangle$, it holds that $\law{q_\init} \equiv \top$.
	  \item[\prop{initial-territory}:]
		  The initial territory, i.e. $\terr{q_\init} = \tau_\init$, is defined as $\tau_\init \coloneq \big\{\, \{p\} \mid p \in m_\init \,\big\}$.
		  Hence, each region contains exactly one place $p$ and for each of these places it holds that $p \in m_\init$.
		  From this definition we also know, that for each $p' \in m_\init$, there exists a region in $\tau_\init$ that contains only $p'$.
		  Therefore, the single set constructed by taking exactly one place each region is equal to the initial marking.
	  \item[\prop{inductive-law}:]
	    Let $\langle \tau,\varphi\rangle\in Q$ be an arbitrary state of the empire and $t \in \enabled{\tau}$ be a transition. There are two cases for the successor state:
		\begin{description}
			\item[case 1: $\delta(\langle \tau,\varphi\rangle, t)$ is undefined.]
			  Then $t \not  \in \enabled{\tau}$ which is not possible by assumption on $t$,
			  or $\post{\varphi}{t} \equiv \bot$ holds.
			  The latter implies that the Hoare triple $\hoareTriple{\varphi}{\lambda(t)}{\bot}$ holds, by the definition of $\postFun$,
			  and thus \prop{inductive-law} is satisfied.
			\item[case 2: $\delta(\langle \tau,\varphi\rangle, t) = \langle\replaced{t}{\tau}, \post{\varphi}{t}\rangle$.]
			  Then, by definition of $\postFun$, the Hoare triple $\hoareTriple{\varphi}{\lambda(t)}{\post{\varphi}{t}}$ holds.
		\end{description}
        Hence, \prop{inductive-law} holds in both cases.
	   \item[\prop{inductive-territory}:]
	     Let $\langle \tau,\varphi\rangle\in Q$ be an arbitrary state of the empire and $t$ be a transition where $\delta(\langle \tau,\varphi\rangle, t)$ is defined.
	     Then $t$ is enabled in $\tau$, and
	     the successor territory $\tau'$ is the replaced territory $\replaced{t}{\tau}$ by definition of $\delta$.
	     As stated in lemma \cref{lem:replaces-terr}, $\tau \fire{t} \replaced{t}{\tau}$ always holds.
	   \item[\prop{safe}:]
	     Suppose there exists a reachable state $\langle \tau,\varphi \rangle$ and a region $r\in\tau$ with $r\cap\errorPlaces \neq \emptyset$.
	     Then there exists a (reachable) marking $m\in\treaty{\tau}$ with $m\cap \errorPlaces \neq \emptyset$.
	     Let $m_\init = m_0 \fire{t_0} \ldots \fire{t_n} m_n = m$ be the firing sequence reaching $m$.
	     Then we can show (by induction over $n$) that, after reading the sequence $t_0\ldots t_n$, the empire reaches state $\langle \tau,\varphi \rangle$,
	     and the abstract configuration $(m,\varphi)$ is reachable.
	     As we assume that $(A,\postFun)$ is safe, we conclude that $\varphi$ is equal to $\bot$.
	     (In fact, this implies a contradiction, as a state with law $\bot$ is never reachable by definition of $\delta$.)
	\end{description}
\end{proof}

Hence, we can construct a valid Owicki-Gries construction from the na\"ive empire.
However, as one may expect when always choosing the simplest and most straightforward solution,
the results are far from satisfactory:
The na\"ive construction does not utilize regions and territories in their full capacity.
Each region in the empire consists of a single place, and every territory thus represents only a single marking.
Hence, the imperial Owicki-Gries annotation for the na\"ive empire resembles the na\"ive Owicki-Gries annotation (\cref{def:naive-og}):
Each transition is annotated with a ghost update, and
the annotation~$\omega(p)$ of each place~$p$ is a case distinction over all reachable markings containing~$p$.
The size of the annotation is typically exponential in the size of the program.

\begin{figure}
	\centering
	\begin{externalize}{naive-empire-example}
	\begin{tikzpicture}[thick,scale=1,node distance=9mm,
		every node/.style={inner sep=2pt, outer sep=2pt, align=center}
		]

		\node [empstate]                         (0) {$q_0$   \nodepart{two} $\{\{p_0\}\}$\\$\top$};
		\node [empstate,right=15mm of 0]         (1) {$q_1^1$ \nodepart{two} $\{\{p_1\}, \{p_5\}\}$\\$x>0$};
		\node [empstate,below=of 1,xshift=-20mm] (2) {$q_1^2$ \nodepart{two} $\{\{p_1\}, \{p_6\}\}$\\$x>0$};
		\node [empstate,below=of 1,xshift=20mm]  (3) {$q_2^1$ \nodepart{two} $\{\{p_2\}, \{p_5\}\}$\\$x>0\land y \leq x$};
		\node [empstate,below=of 2,xshift=-20mm] (4) {$q_3$   \nodepart{two} $\{\{p_1\}, \{p_7\}\}$\\$x>1$};
		\node [empstate,below=of 3]              (5) {$q_2^2$ \nodepart{two} $\{\{p_2\}, \{p_6\}\}$\\$x>0\land y\leq x$};

		\node [right=5mm of 5]  (dots1) {$\ldots$};
		\node [right=5mm of 3]  (dots2) {$\ldots$};
		\node [left=5mm of 4]  (dots3) {$\ldots$};

		\draw[->] ($(0)+(-1,0)$) to (0);
		\draw[->] (0) -- node[above] {$\stsmcol{x>0}{main}$} (1);
		\draw[->, bend left=20] (1) to node[right] {$\stsmcol{x!=0}{t2}$} (2);
		\draw[->] (1) -- node[right,yshift=1mm] {$\stsmcol{y:=x}{t1}$} (3);
		\draw[->, bend left=20] (2) to node[left] {$\stsmcol{z:=z/x}{t2}$} (1);
		\draw[->, bend right=20] (2) to node[left,yshift=-1mm] {$\stsmcol{y:=x}{t1}$} (5);
		\draw[->] (2) -- node[left,yshift=1mm] {$\stsmcol{x:=x+1}{t2}$} (4);
		\draw[->, bend left=20] (3) to node[right] {$\stsmcol{x!=0}{t2}$} (5);
		\draw[->, bend left=20] (5) to node[left] {$\stsmcol{z:=z/x}{t2}$} (3);

		\draw[->] (3) -- (dots2);
		\draw[->] (4) -- (dots3);
		\draw[->] (5) -- (dots1);
	\end{tikzpicture}
	\end{externalize}
	\caption{Excerpt of the na\"ive empire for the Petri program in \cref{fig:petri-program}.}
	\label{fig:naive-empire-example}
\end{figure}
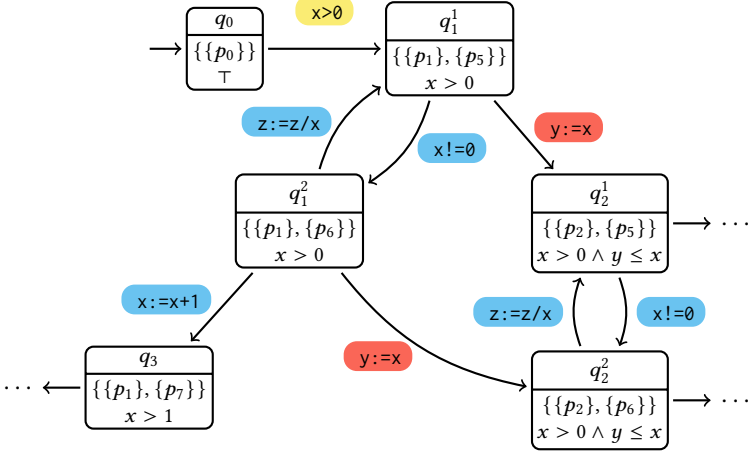
For the example program from~\cref{fig:petri-program}, the excerpt of the na\"ive empire in~\cref{fig:naive-empire-example} demonstrates the disadvantages of this construction.
For each reachable marking, there exists at least one state in the empire.
As shown in \cref{fig:empire-example}, territories can represent multiple markings.
For example, state $q_1$ contains the region $\{p_5, p_6\}$ in its territory.
The na\"ive empire requires two states $q_1^1$ and $q_1^2$ to represent those markings.
For $q_2$, the na\"ive approach even requires four different states.

\subsection{Constructing Larger Regions}
To derive compact Owicki-Gries annotations,
we must construct empires with fewer states,
grouping multiple consecutive places of a thread into one region.
To this end, we introduce a condition that allows us to \emph{extend} an existing territory by executing one of its enabled transitions.
The result is a larger territory (representing a superset of markings).
We consider only \emph{sequential transitions}, with a single predecessor place and a single successor place,
as forking or joining a thread requires creating a new territory (all markings represented by a territory have the same number of tokens).
\begin{definition}%
	\label{def:extendable}
	Let $\tau$ be a territory, and let $t$ be a transition.
	We say that $\tau$ is \emph{extendable with~$t$}
	if $t$~is enabled in~$\tau$,
	we have $\pred{t} = \{p\}$, $\succ{t} = \{p'\}$ for places~$p,p'$,
	and for all places $p''$ in the region $r\in\tau$ with~$p\in r$, it holds that $p'$ and $p''$ are not co-related.
	The \emph{extended territory} is defined as
	\[
		\extended{t}{\tau} \coloneq \bystanders{t}{\tau} \cup \big\{\, r \cup \{p'\} \,\big\} \ .
	\]
\end{definition}
\begin{lemmarep}
	\label{lem:extended-terr}
	If $\tau$ is extendable with $t$, then $\extended{t}{\tau}$ is a territory and $\tau \fire{t} \extended{t}{\tau}$ holds.
\end{lemmarep}
\begin{proof}
	To show that $\extended{t}{\tau}$ is a territory, we have to show that \textbf{(1)} each set of places $r \in \extended{t}{\tau}$ is a region and \textbf{(2)} each $m \in \treaty{\extended{t}{\tau}}$ is a reachable marking of $\P$.
	\begin{description}
		\item[(1)] To show that for $r \in \extended{t}{\tau}$, $r$ is a region, first notice that $|\extended{t}{\tau} \setminus \tau| = 1$ always holds. This is true because $|\pred{t}| = 1 = |\succ{t}|$ holds by definition of extendable and therefore there exists at most one $r \in \tau$, such that $r \cap \pred{t} \neq \emptyset$ since otherwise the same place $p \in \pred{t}$ would be in two different regions of $\tau$ which cannot happen since $\tau$ is a territory by assumption and therefore can only contain reachable markings in its treaty. Because $t \in \enabled{\tau}$ such a region $r$ must exist in $\tau$. But because $\P$ is one-safe, there cannot exist a marking that contains the same place twice. Therefore, the region $r$ is unique. Hence, $|\extended{t}{\tau} \setminus \tau| = 1$. Additionally, we know that each $r' \in \extended{t}{\tau} \cap \tau$ is a region by the assumption that $\tau$ is a territory. We therefore only have to show that $r \cup \succ{t}$ is a region. Assume $\{p_1\} = \succ{t}$, we need to show that $\neg (\co{p_1}{p_2})$ for each $p_2 \in r$. This is ensured by condition 3. from \cref{def:extendable}.
		\item[(2)] To show that $m' \in \treaty{\extended{t}{\tau}}$ is a reachable marking of $\P$ we prove that there exists $m \in \treaty{\tau}$ such that $m \fire{t} m'$. This immediately implies that $m'$ is reachable since $m$ must be reachable. We define $m \coloneq (m' \setminus \succ{t}) \cup \pred{t}$. Clearly it holds $m \fire m'$ following from the definition of the firing relation, the fact that $t$ is enabled by $m$ and because $\succ{t} \subseteq m'$. The only thing left is to show that $m \in \treaty{\tau}$. As argued in \textbf{(1)}, $\tau$ and $\extended{t}{\tau}$ only have one different region which is region $r \in \tau$ respectively $(r \cup \succ{t}) \in \extended{t}{\tau}$. Therefore, for each $p \in m$ with $p \not \in \pred{t}$ it holds that there exists an unique region $r_p \in \bystanders{t}{\tau}$ such that $p \in r_p$. Since $r \in (\tau \setminus \bystanders{t}{\tau})$ we need to show that $\pred{t} \subseteq r$. This holds by condition 1. from \cref{def:extendable}.
	\end{description}
	Thus, $\extended{t}{\tau}$ is a territory.

	To show that $\tau \fire{t} \extended{t}{\tau}$ holds,
	note that $\extended{t}{\tau} = \bystanders{t}{\tau} \cup r_{\succ{t}}$ where $r_{\succ{t} = r \cup \succ{t}}$. This implies by \cref{def:territory-fires} that $\tau \fire{t} \extended{t}{\tau}$.
\end{proof}
\goodbreak

Analogous to the replaced territory, the above lemma would allow us to use $\extended{t}{\tau}$ as the territory for the successor state of a state~$\langle \tau,\varphi\rangle$,
whenever $\tau$~is extendable by~$t$.
While this would result in territories representing more than one marking,
it would still create more distinct territories (and thus states) than necessary.
The true power of extended territories is that they allow us to take shortcuts, and skip intermediate territories.
\begin{lemmarep}
  \label{lem:extended-shortcut}
  If $\tau$~is extendable with~$t$, it holds that $\extended{t}{\tau} \fire{t} \extended{t}{\tau}$.
  Moreover, if we have $\tau' \fire{t'} \tau$ and $\bystanders{t'}{\tau'} \subseteq \bystanders{t}{\tau}$,
  then it also holds that $\tau' \fire{t'}\extended{t}{\tau}$.
\end{lemmarep}
\begin{proof}
	Let $\tau$ be a territory and $t$ be a transition for which $\tau$ is extendable. We prove the two statements subsequently:
	\begin{description}
		\item[(1)] If $\tau$~is extendable with~$t$, it holds that $\extended{t}{\tau} \fire{t} \extended{t}{\tau}$.\\
		Because $\tau$~is extendable with~$t$, there exists $r \in \tau$ with $p \in r$ for $\pred{t} = \{p\}$. Additionally, it holds that $\extended{t}{\tau} \coloneq \bystanders{t}{\tau} \cup \big\{\, r \cup \{p'\} \,\big\}$ for $\succ{t} = \{p'\}$. This implies that $t \in \enabled{t}{\extended{t}{\tau}}$. Additionally, this resembles the definition of firing presented in \cref{def:territory-fires} if we define $r'_{p'} \coloneq r \cup \{p'\}$. Therefore, $\extended{t}{\tau} \fire{t} \extended{t}{\tau}$ holds.
		\item[(2)] If we have $\tau' \fire{t'} \tau$ and $\bystanders{t'}{\tau'} \subseteq \bystanders{t}{\tau}$, then it also holds that $\tau' \fire{t'}\extended{t}{\tau}$.\\
		Assume $\tau$ is extendable with $t$ because otherwise $\extended{t}{\tau}$ does not exist. Let $\pred{t} = \{p\}$ and $\succ{t} = \{p'\}$ and $r \in \tau$ be the region for which $p \in r$. Then we know that $r \not\in \bystanders{t}{\tau}$ which implies that $r \not \in \bystanders{t'}{\tau'}$ by the assumption that $\tau' \fire{t'} \tau$ and $\bystanders{t'}{\tau'} \subseteq \bystanders{t}{\tau}$. Hence, we know that $\bystanders{t'}{\tau'}  \subset \extended{t}{\tau}$ because $\bystanders{t'}{\tau'}  \subset \tau$ and extension only modifies $r$. Therefore, we can define
		\begin{align*}
		  \extended{t}{\tau} &= \bystanders{t}{\tau} \cup \big\{\, r \cup \{p'\} \,\big\}\\
		  &= \bystanders{t'}{\tau'} \cup (\bystanders{t}{\tau} \setminus \bystanders{t'}{\tau'}) \cup \big\{\, r \cup \{p'\} \,\big\}\ .
		\end{align*}
		Because $\tau' \fire{t'} \tau$ it holds that for each $r_{p'} \in (\tau \setminus \bystanders{t'}{\tau'})$ that $p' \in r_{p'}$ for an unique successor place $p' \in \succ{t'}$ by the definition of $fires$. Additionally, we know that $r$ is one such region that contains a successor place $p' \in \succ{t'}$ and by extending no places of $r$ are removed. This implies that for each $r' \in  \big((\bystanders{t}{\tau} \setminus \bystanders{t'}{\tau'}) \cup \big\{\, r \cup \{p'\} \,\big\} \big)$ each place $p \in \succ{t'}$ is contained in exactly one unique $r'$ and there exists no $r'$ with $r' \cap \succ{t'} = \emptyset$. Therefore, $\tau' \fire{t'}\extended{t}{\tau}$.
	\end{description}
\end{proof}
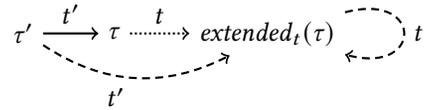
\begin{wrapfigure}[8]{r}{5.5cm}%
\vspace*{-1.3em}
  \begin{tikzpicture}[thick]%
    \node (0) {$\tau'$};
    \node (1) [right of=0,anchor=west] {$\tau$};
    \node (2) [right of=1,anchor=west] {$\extended{t}{\tau}$};
    \draw[->] (0) -- node[auto]{$t'$} (1);
    \draw[->,densely dotted] (1) -- node[auto]{$t$} (2);
    \draw[->,densely dashed,every loop/.style={looseness=5}] (2) edge[loop right] node[auto]{$t$} ();
    \draw[->,densely dashed] (0) edge[bend right] node[auto,swap]{$t'$} (2);
  \end{tikzpicture}
  \caption{Illustration of firing relationships implied by \cref{lem:extended-terr} (dotted line) and \cref{lem:extended-shortcut} (dashed lines).}
  \label{fig:extended-skip}
\end{wrapfigure}
\Cref{fig:extended-skip} illustrates the firing relationships implied by the lemmas.
The key take-away from this figure is that we do not necessarily need to construct a state with the territory~$\tau$.
Instead, we can immediately construct a state with the territory~$\extended{t}{\tau}$, and use this territory for the successor of $\tau'$ under transition~$t'$.
Moreover, if $\extended{t}{\tau}$ in turn is extendable by yet another transition~$t''$,
applying the same lemma again allows us (provided the bystander condition holds) to even use the territory $\extended{t''}{\extended{t}{\tau}}$ as the successor of~$\tau'$.
This process can be repeated over and over, as long as there are extending transitions that satisfy the bystander condition.

However, when using extended territories to construct an empire, we must be careful not to extend too much.
There are two limiting factors that can prevent us from extending with a transition.
First, as we have already seen, the bystander condition must hold in order for the extended territory to remain a possible successor for the territory~$\tau'$.
This is particularly relevant when the edge from $\tau'$ to $\tau$ is not an extension.
Second, when we extend the territory for an empire state, thereby enlarging the set of markings it represents, we are claiming that the same law holds for all these markings.
To remain sound, we must be careful not to claim any law that is not proven by the given invariant domain $(A,\postFun)$.
Therefore, we only extend the territory with transitions that do not change the law.
The following definition captures these two limiting factors.
\begin{definition}%
	\label{def:saturated-successor}
  Let $\tau$ be a territory, $\varphi\in A$ be a formula, and $\mathit{rb}$ a set of regions.
  The set of \emph{saturated successors} $\saturatedSuccs{\varphi}{rb}{\tau}$
  contains all territories $\tau'$ such that (recursively) one of the following holds:
  \begin{itemize}
    \item $\tau' \in \saturatedSuccs{\varphi}{rb}{\extended{t}{\tau}}$ holds, for some transition $t$ such that $\tau$ is extendable by $t$, with $\lnot(\tau\fire{t} \tau)$, $\post{\varphi}{t} = \varphi$ and $\mathit{rb} \subseteq \bystanders{t}{\tau}$; or
    \item $\tau = \tau'$, and no such transition $t$ exists.
  \end{itemize}
\end{definition}
\begin{lemmarep}
	\label{lem:inductive-selfloop}
	If $\tau \fire{t} \tau$ and $\tau$ is extendable with $t'$, then $\extended{t'}{\tau} \fire{t} \extended{t'}{\tau}$.
\end{lemmarep}
\begin{proof}
	Because $\tau \fire{t} \tau$ and $\tau$ holds there must exist sets $\mathit{pred_t}, \mathit{succ_t} \subseteq (\tau \setminus \bystanders{t}{\tau})$ such that for each $r_1, r_2 \in \mathit{pred_t}$ where $r_1 \ne r_2$, places $p_1 \in r_1$ and $p_2 \in r_2$ for $p_1, p_2 \in \pred{t}$ such that $p_1 \ne p_2$, i.e. each region in $\mathit{pred_t}$ contains exactly one unique predecessor of the transition. The set $\mathit{succ_t}$ is defined similar, but with the successor places of the transition. Additionally, $|\mathit{pred_t}| = |\mathit{succ_t}|$ must hold, since \cref{def:territory-fires} implies that each region in $(\tau \setminus \bystanders{t}{\tau})$ contains exactly one successor and by definition of $bystanders$, each region has to contain exactly one predecessor place. Therefore, $\mathit{pred_t} = \mathit{succ_t}$. Thus, we refer to $\mathit{pred_t}, \mathit{succ_t}$ as $\mathit{firing_t}$ in the following.
	Now assume $\extended{t'}{\tau} $ does not change any region in $\mathit{firing_t}$. Then $\extended{t'}{\tau} \fire{t} \extended{t'}{\tau}$ holds immediately. If instead there exists one region $r \in \mathit{firing_t}$ that is changed by the extension then we know by definition that $(r \cup \{p\}) \in \extended{t'}{\tau}$ for $\succ{t'} = \{p\}$. This operation therefore only adds one place to region $r$. Thus, $\extended{t'}{\tau}$ still enables $t$. The extension does not modify $\bystanders{t}{\tau}$ by definition of $r$. Therefore, we only have to show that for each $r_1', r_2' \in (\extended{t'}{\tau} \setminus \bystanders{t}{\tau})$ it holds that there exist $p_1', p_2' \in \succ{t}$ and if $p_1' \ne p_2'$ then $r_1' \ne p_2'$. Additionally, for each $p\in \succ{t}$ it must hold that there exists a region in $\extended{t'}{\tau}$ that contains that place. The latter holds since it already was true for $\tau$ and extension only adds one place to $r$. The former does also immediately hold for $r_1',r_2' \ne (r \cup \{p\})$ because it was already true for $\tau$ and only $r$ changed. Assume WLOG $r_1' = (r \cup \{p\})$ then since $p_1' \in r$ for some successor place $p_1'$ it must hold that $p_1' \in r_1'$. Additionally, $p \not \in \succ{t}$ because otherwise $p$ and $p_1'$ would be co-related. Therefore, $\extended{t'}{\tau} \fire{t} \extended{t'}{\tau}$ holds.
\end{proof}
\begin{lemmarep}
	\label{lem:well-defined}
	The set $\saturatedSuccs{\varphi}{\mathit{rb}}{\tau}$ is always non-empty.
	For all territories $\tau,\tau'$ and transitions $t$ with $\tau \fire{t} \tau'$ and $\mathit{rb}=\bystanders{t}{\tau}$,
	it holds that $\tau \fire{t} \tau''$ for all $\tau'' \in \saturatedSuccs{\varphi}{\mathit{rb}}{\tau'}$.
\end{lemmarep}
\begin{proof}
	We show the two statements subsequently:
	\begin{description}
		\item[(1)] The set $\saturatedSuccs{\varphi}{\mathit{rb}}{\tau}$ is always non-empty.\\
		If there exists no transition $t$ such that $\tau$ is extendable by $t$, with $\lnot(\tau\fire{t} \tau)$, $\post{\varphi}{t} = \varphi$ and $\mathit{rb} \subseteq \bystanders{t}{\tau}$ then $\tau \in \saturatedSuccs{\varphi}{\mathit{rb}}{\tau}$ by definition, hence the set is non-empty. \\
		If at least one such $t$ exists, then $\tau' \in \saturatedSuccs{\varphi}{\mathit{rb}}{\tau}$ if $\tau' \in \saturatedSuccs{\varphi}{rb}{\extended{t}{\tau}}$. Since this is again recursively defined, we have to prove that the number of recursive steps is finite. Because the saturated successors then are defined in terms of an extended territory, it suffices to prove that through this recursive definition for any transition $t$, $\extended{t}{ \ }$ gets invoked at most one time for only one successor of $\tau$. In other words, after in some recursion step a successor of $\tau$ is extended with $t$, in no later recursion step $\extended{t}{ \ }$ will be invoked again. This holds because after one successor is extended by $t$ all later territories $\tilde{\tau}$ constructed during the recursion  satisfy $\tilde{\tau} \fire{t} \tilde{\tau}$. This holds with \cref{lem:extended-shortcut} which ensures that the extended territory has a self-loop for $t$ and inductively then holds for all successors with \cref{lem:inductive-selfloop}. Hence, the number of transitions in $T$ is finite and $\extended{t}{ \ }$gets called at most one time for $t \in T$ in the recursion. This implies that after at most $T$ recursion steps there is no transition left for which all four conditions for the recursive call hold. Then $\saturatedSuccs{\varphi}{\mathit{rb}}{\tau}$ contains at least one element by the recursive definition. Hence, it is non-empty.

		\item[(2)] For all territories $\tau,\tau'$ and transitions $t$ with $\tau \fire{t} \tau'$ and $\mathit{rb}=\bystanders{t}{\tau}$, it holds that $\tau \fire{t} \tau''$ for all $\tau'' \in \saturatedSuccs{\varphi}{\mathit{rb}}{\tau'}$. \\
		\textbf{Proof by Induction: }
		During this proof we refer to the set of transitions for which $\mathit{extended}$ is recursively called by saturate as $T' \subseteq T$.\\
		\textbf{Base Case:} Let \( \tau'' = \tau' \) and \(T' = \emptyset\).\\
		Then $\tau \fire{t} \tau'$ holds by definition.\\

		\textbf{Inductive Hypothesis: } For a $\tau$, $\tau'$ and a set of transitions $T'$, $\tau \fire{t} \tau''$ for all $\tau'' \in \saturatedSuccs{\varphi}{\mathit{rb}}{\tau'}$ holds.\\

		\textbf{Inductive Step: } Assume for $\tau'$ and $T'$ the statement $\tau \fire{t} \tau''$ for all $\tau'' \in \saturatedSuccs{\varphi}{\mathit{rb}}{\tau'}$ (I.H.) holds. Now let $t'$ be a transition that is not in $T'$ and for which the four conditions for a recursive call on $\tau''$ hold. For $T' \cup t'$ it holds that $\tau_1 \in \saturatedSuccs{\varphi}{\mathit{rb}}{\tau'}$ iff $\tau_1 \in \saturatedSuccs{\varphi}{\mathit{rb}}{\extended{t'}{\tau''}}$. Since $\tau \fire{t} \tau''$ we can apply the second statement of \cref{lem:extended-shortcut} and conclude that $\tau \fire{t} \tau_1$ does also hold.
	\end{description}
\end{proof}

Typically, the set $\saturatedSuccs{\varphi}{rb}{\tau}$ contains exactly one territory, i.e., there is a unique saturated successor,
as the order in which extending transitions $t$ are applied does not matter.
The only exception are cases in which there is no clear separation of the places into threads,
namely, when two places $p_1,p_2$ can each be reached from a common predecessor place through only sequential transitions,
but $p_1$ and $p_2$ also occur together in a reachable marking (after a different sequence of transitions).
In our benchmarks in \cref{sec:evaluation}, the saturated successor is always unique.

With the notion of saturated successors, we have a condition to extend a given territory as much as possible.
Whenever our empire construction creates a new territory $\tau$,
it computes a saturated successor $\tau'$ of $\tau$ and uses $\tau'$ in place of $\tau$.

\begin{definition}%
  \label{def:saturated-empire}
  Given a Petri program $\P$ and a safe invariant domain $(A, \postFun)$ for $\P$,
  a \emph{saturated empire} is the reachable part of an empire $E = (Q, q_\init, \delta, \lawFun, \tau)$,
  where
  \begin{itemize}
	\item the states $Q \coloneq \mathbf{Terr} \times A$ are pairs of territories and laws;
	\item the initial state $q_\init \in \{\, \langle \tau, \top \rangle \mid \tau\in \saturatedSuccs{\top}{\emptyset}{\tau_\init}\,\}$
	  consists of a saturated successor of the territory $\tau_\init$ and the law $\top$;
	\item the transition function $\delta$ satisfies
	  \[
	    \delta(\langle \tau, \varphi \rangle, t) = \begin{cases}
	      \text{undefined} & \textbf{if } t \not \in \enabled{\tau} \text{ or } \post{\varphi}{t} \equiv \bot\\
	      \langle\tau, \varphi\rangle & \textbf{if } \tau\fire{t}\tau \text{ and } \post{\varphi}{t} \equiv \varphi\\
	      \langle \tau', \varphi' \rangle & \begin{aligned}[t]\textbf{otherwise, }\text{where } &\varphi'=\post{\varphi}{t},\ \mathit{rb} = \bystanders{t}{\tau}\\ &\text{ and } \tau'\in\saturatedSuccs{\varphi'}{\mathit{rb}}{\replaced{t}{\tau}}\end{aligned}
	    \end{cases}
	  \]
	\item and the law and territory mappings are given by $\law{\langle \tau, \varphi \rangle} \coloneq \varphi$ resp.\ $\terr{\langle \tau, \varphi\rangle} \coloneq \tau$.
\end{itemize}
\end{definition}

As before, we restrict to the reachable part in order to ensure that the validity condition \prop{safe} holds.
The following lemma shows that the validity conditions \prop{inductive-territory} and \prop{inductive-law} continue to hold for our improved transition function $\delta$.
\begin{lemmarep}
	\label{lem:max-successor}
	Let $\langle \tau, \varphi \rangle$ be an empire state, and $t$ a transition,
	such that $\langle\tau', \varphi'\rangle \coloneq \delta(\langle \tau, \varphi \rangle, t)$ is defined.
	Then $\tau \fire{t} \tau'$ holds, and the Hoare triple $\hoareTriple{\varphi}{\lambda(t)}{\varphi'}$ holds.
\end{lemmarep}
\begin{proof}
	Let $\langle\tau, \varphi\rangle$ be a state and $t \in \enabled{\tau}$ be an arbitrary transition enabled in $\tau$ for which $\delta(q, t)$ is defined. There are two cases for the successor:
	\begin{description}
		\item[case 1:] $\delta(\langle\tau, \varphi\rangle, t) = \langle\tau, \varphi\rangle$. In this case $\tau \fire{t} \tau$ holds by definition of $\delta$ and $\post{\varphi}{t} \equiv \varphi$ which implies that the Hoare triple $\hoareTriple{\varphi}{\lambda(t)}{\varphi}$ holds.
		\item[case 2:] $\delta(\langle\tau, \varphi\rangle, t) = \langle \tau', \post{\varphi}{t}\rangle$. Where $\mathit{rb} = \bystanders{t}{\tau}$ and $\tau'\in\saturatedSuccs{\varphi'}{\mathit{rb}}{\replaced{t}{\tau}}$.
		As show in \cref{lem:well-defined} at least one such $\tau'$ always exists. As shown in \cref{lem:replaces-terr} it holds that $\tau \fire{t} \replaced{t}{\tau}$. Again applying \cref{lem:well-defined} reveals that for all $\tau' \in \saturatedSuccs{\varphi'}{\mathit{rb}}{\replaced{t}{\tau}}$, it holds that $\tau \fire{t} \tau'$. Because of the definition of $\postFun$, the Hoare triple $\hoareTriple{\varphi}{\lambda}{\post{\varphi}{t}}$ holds.
	\end{description}
	We have shown that the lemma holds.
\end{proof}
\begin{theoremrep}
	\label{thm:saturated-empire-valid}
	Every saturated empire constructed from a safe invariant domain $(A,\postFun)$ is valid.
\end{theoremrep}
\begin{proof}
		Similar to the proof for the na\"ive empire, we need to prove that the four conditions from \cref{def:empire} for validity hold:
	\begin{description}
		\item[\prop{initial-law}:] Since $q_\init \coloneq \langle \tau, \top \rangle$, it holds that $\law{q_\init} \equiv \top$.
		\item[\prop{initial-territory}:] The initial territory is $\tau \in \saturatedSuccs{\emptyset}{\top}{\tau_\init}$. Since $m_\init \in \treaty{\tau_\init}$
		and $\mathit{extended}$ only adds places to regions, $m_\init \in \treaty{\tau}$ does also hold.
		\item[\prop{inductive-law}:] Let $q \coloneq \langle \tau, \varphi\rangle$ be a state and $t \in \enabled{\tau}$ be a transition. If $\delta(q, t)$ is defined, \prop{inductive-law} holds as shown in \cref{lem:max-successor}. Otherwise, if $\delta(q, t)$ is undefined, this can only happen if $\post{\varphi}{t} \equiv \bot$ by definition. Hence, in this case \prop{inductive-law} also holds.
		\item[\prop{inductive-territory}:] Holds as proven in \cref{lem:max-successor} when $\delta(q, t)$ is defined.
	   \item[\prop{safe}:]
		Suppose there exists a reachable state $\langle \tau,\varphi \rangle$ and a region $r\in\tau$ with $r\cap\errorPlaces \neq \emptyset$.
		Then there exists a (reachable) marking $m\in\treaty{\tau}$ with $m\cap \errorPlaces \neq \emptyset$.
		Let $m_\init = m_0 \fire{t_0} \ldots \fire{t_n} m_n = m$ be the firing sequence reaching $m$.
		Then we can show (by induction over $n$) that, after reading the sequence $t_0\ldots t_n$, the empire reaches state $\langle \tau,\varphi \rangle$,
		and the abstract configuration $(m,\varphi)$ is reachable.
		As we assume that $(A,\postFun)$ is safe, we conclude that $\varphi$ is equal to $\bot$.
		(In fact, this implies a contradiction, as a state with law $\bot$ is never reachable by definition of $\delta$.)
	\end{description}
\end{proof}

\begin{toappendix}
\begin{corollary}
	\label{cor:no-error-states}
	For a saturated empire~$E$ and each place $p \in \errorPlaces$, there exists no state $q \in Q$ with $p \in \places{q}$.
\end{corollary}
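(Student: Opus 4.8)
The plan is to derive the corollary directly from the validity of the saturated empire (\cref{thm:saturated-empire-valid}) together with the observation that no \emph{reachable} state of the construction carries the law $\bot$. Reading $\places{q}$ as $\places{\terr{q}} = \bigcup_{r\in\terr{q}} r$, assume towards a contradiction that some $q \in Q$ has $p \in \places{q}$ for an error place $p \in \errorPlaces$. Then there is a region $r \in \terr{q}$ with $p \in r$, so $r \cap \errorPlaces \neq \emptyset$, and since the saturated empire is valid, property \prop{safe} yields $\law{q} \equiv \bot$. It thus suffices to show that $\law{q} \not\equiv \bot$ for every state $q \in Q$.

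I would prove $\law{q} \not\equiv \bot$ by induction on the number of $\delta$-steps needed to reach $q$ from $q_\init$; this is well-founded because, by \cref{def:saturated-empire}, $Q$ is the \emph{reachable} part of the empire, so every $q \in Q$ is obtained from $q_\init$ by finitely many applications of $\delta$. For the base case, $\law{q_\init} = \top \not\equiv \bot$. For the step, let $q' = \delta(q, t)$ with $\law{q} = \varphi \not\equiv \bot$. Inspecting the three cases in the definition of $\delta$: the case $\post{\varphi}{t} \equiv \bot$ leaves $\delta(q,t)$ undefined, so it cannot occur here; in the remaining two cases $\law{q'}$ equals either $\varphi$ or $\post{\varphi}{t}$, and the latter is not equivalent to $\bot$ precisely because we are not in the excluded case. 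Hence $\law{q'} \not\equiv \bot$, which closes the induction and, with the previous paragraph, the proof.

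I do not expect a real obstacle here: the statement is a short consequence of \prop{safe} (available from \cref{thm:saturated-empire-valid}) and the definitional side-condition that $\delta$ is left undefined whenever the post-image is $\bot$. The only points needing mild care are (i) invoking the ``reachable part'' restriction of \cref{def:saturated-empire} to justify the induction, and noting that this reasoning in fact shows a $\bot$-law state is never created by $\delta$ at all; and (ii) unfolding $\places{q}$ to $\places{\terr{q}}$ so that $p \in \places{q}$ genuinely exhibits a region of $\terr{q}$ meeting $\errorPlaces$, which is exactly the hypothesis that \prop{safe} consumes.
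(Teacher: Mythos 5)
Your proposal is correct and matches the paper's intended argument: the paper leaves the corollary without an explicit proof, but the reasoning is exactly the parenthetical observation in the proof of \prop{safe} for \cref{thm:saturated-empire-valid}, namely that a region meeting $\errorPlaces$ would force $\law{q}\equiv\bot$ while the definition of $\delta$ (undefined whenever $\post{\varphi}{t}\equiv\bot$) ensures no reachable state ever carries the law $\bot$. Your explicit induction over the reachable part of $Q$ just spells out that last step.
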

\end{toappendix}

The empire in \cref{fig:empire-example} is a saturated empire (i.e., it can be computed using our approach) for the program in~\cref{fig:petri-program} and the invariant domain in~\cref{ex:naive-og}.
In contrast to the na\"ive empire (\cref{fig:naive-empire-example}), the saturated empire groups places and constructs larger regions, decreasing the number of states.

\section{Focus: Modular and Succinct Owicki-Gries Annotations}
\label{sec:focus}
As discussed in~\cref{sec:empires},
imperial Owicki-Gries annotations still sometimes include unnecessary information.
For instance, in \cref{fig:imperial-og-example},
the annotations of the places $p_5,p_6,p_7$ need not include any information about the variable $y$,
which can be thought of as a local variable of the left-hand thread.
So far, imperial Owicki-Gries annotations have no mechanism to split the law associated with an empire state and distribute this information among the places in the state's territory.
Splitting this information allows us to construct more compact, succinct Owicki-Gries annotations.

Towards achieving such succinct Owicki-Gries annotations,
the idea of the \emph{focus} is to make use of the modular structure inherent in the invariant domains computed by certain verification techniques.
Formally, this modular structure is captured by requiring that the invariant domain arises as a product of several smaller invariant domains.
\begin{definition}
	Let $(A_1, \postFun[1]),\dots (A_n, \postFun[n])$ be invariant domains.
	The \emph{product invariant domain} $(A, \postFun) = (A_1,\postFun[1]) \otimes \ldots \otimes (A_n, \postFun[n])$
	consists of the invariants $A \coloneq \{\, \bigwedge_{i=1}^n \varphi_i \mid \varphi_i\in A_i \,\} \cup \{\bot, \top\}$
	and the function $\postFun$ with
	\[
	  \post{\varphi}{t} \coloneq \begin{cases}
	  	\bigwedge_{i=1}^n \post[i]{\top}{t} & \textbf{if } \varphi=\top\\
	  	\bigwedge_{i=1}^n \post[i]{\varphi_i}{t} & \textbf{if } \varphi=\bigwedge_{i=1}^n \varphi_i \text{ and } \post[i]{\varphi_i}{t} \neq \bot \text{ for all } i\\
	  	\bot & \textbf{otherwise}
	  \end{cases}
	\]
\end{definition}
Product invariant domains arise from many different verification algorithms.
For instance in abstract interpretation, it is common to use product domains~\cite{Rival:IntroStaticAnalysis,Cortesi:ProductOperators} of multiple abstract domains running in parallel.
Applying an analysis with multiple abstract domains can be useful, as one abstract domain may be able to rule out certain program defects that the other can not, and vice versa.
From the fixpoint computed by the abstract interpretation engine then arises one invariant domain per abstract domain.
As another example, following a very different verification approach, \emph{trace abstraction}-based verification algorithms~\cite{Heizmann:TAR} also produce product invariant domains,
by decomposing the set of a program's error traces into multiple so-called \emph{Floyd-Hoare automata}.
Each such automaton corresponds to one invariant domain.
Underlying both these approaches is the observation that the product of the computed invariant domains may be safe,
i.e., establish the program's correctness,
even if none of the invariant domains on its own is safe.

The modularity of product invariant domains thus differs fundamentally from the notion of thread-modularity.
The individual invariant domains constitute separate proof arguments that
may prove specific sub-properties,
or establish the correctness of different program parts, i.e., subsets of interleavings (rather than particular threads).
\begin{example}
	\label{ex:product-invariant-dom}
  As discussed in the introduction,
  an algorithmic verifier might verify our example program (\cref{fig:petri-program}) by showing the two assert statements separately.
  This yields two invariant domains $(A_1,\postFun[1])$ and $(A_2,\postFun[2])$,
  where
  \[
    A_1 = \{\,\top,\ y\leq x,\ y < x,\ \bot\,\} \quad \text{ resp.\ } \quad A_2 = \{\, \top,\ x>0,\ x>1,\ x>2,\ \bot\,\}\ ,
  \]
  and $\post[i]{\varphi}{t}$ is the strongest formula in $A_i$ where $\hoareTriple{\varphi}{\lambda(t)}{\post[i]{\varphi}{t}}$ holds, for $i=1,2$,
  excepting that $\post[1]{y\leq x}{\stsmcol{x:=x+1}{t2}} = (y\leq x)$.
  The product of these domains is similar to the invariant domain given in \cref{ex:naive-og},
  (they yield the same reachable abstract configurations, and the same saturated empires).
  The domain $(A_1,\postFun[1])$ proves correctness of all interleavings reaching error place~$e_1$,
  whereas $(A_2,\postFun[2])$  proves correctness of all interleavings that reach~$e_2$.
  Hence, the product invariant domain's modularity lies in the fact that each individual domain proves one of the assertions (which differs significantly from \emph{thread}-modularity).
\end{example}

In the following, let $(A,\postFun) = (A_1,\postFun[1]) \otimes \ldots \otimes (A_n, \postFun[n])$ be a safe product invariant domain,
and let $E = (Q,q_\init, \delta, \lawFun, \terrFun)$ be a saturated empire for $(A,\postFun)$.
Intuitively, the idea of a \emph{focused} Owicki-Gries annotation is that each thread of a concurrent program carries in its annotation only the information from a subset of the invariant domains~$A_i$.
Namely, the thread must carry enough information to rule out thread-local errors.
Additionally, the thread may carry information that contributes to ruling out an error in the parent thread, after the current thread has been joined into the parent.
This selection of information is formalized by the following definition.
\begin{definition}
	\label{def:focus}
	A \emph{focus} $\legalFun$ maps each state $q$ of $E$ and region $r\in\terr{q}$
	to a subset of indices $\legalIndices{q}{r} \subseteq \{1,\ldots,n\}$,
	such that for each state $q$, with $\law{q}=\bigwedge_{i=1}^n\varphi_i$%
	\footnote{
	  In case that $\law{q}$ is not a conjunction but $\top$, let $\varphi_i=\top$ for each $i=1,\ldots,n$.
	}%
	,
	the following holds:
	\begin{description}

		\item[{\propdef[safe\textsubscript{$\legalFun$}]{B1}}:]
		  If transition~$t$ is enabled in $\terr{q}$ but proven infeasible (and has thus no edge),
		  a predecessor region~$r$ of~$t$ must have enough information to prove the infeasibility.
		  \begin{align*}
		  &t\in\enabled{\terr{q}} \land \delta(q,t) \text{ undefined}\\
		  &\qquad \Rightarrow
		  \exists r\in\terr{q}\,.\,
		  \exists i \in \{1,\ldots,n\}\,.\,
		    r\cap\pred{t} \neq \emptyset \land \post[i]{\varphi_i}{t} = \bot \land \legal{q}{r}{i}
		  \end{align*}
		\item[{\propdef[inductive-edge\textsubscript{$\legalFun$}]{B2}}:]
		  If a successor region~$r'$ of transition~$t$ has information from~$(A_i,\postFun[i])$,
		  a predecessor region~$r$ of~$t$ must already have information from~$(A_i,\postFun[i])$.
          \[
            \delta(q,t) = q' \land r'\cap\succ{t}\neq\emptyset \land \legal{q'}{r'}{i}
            \Rightarrow
            \exists r\in \terr{q}\,.\,
            r\cap\pred{t} \neq \emptyset
            \land \legal{q}{r}{i}
          \]
		\item[{\propdef[bystanders\textsubscript{$\legalFun$}]{B3}}:]
		  Executing a transition $t$ cannot increase the information of bystanders.
          \[
            \delta(q,t) = q' \land r\in\bystanders{t}{\terr{q}} \land \legal{q'}{r}{i}
            \Rightarrow
            \legal{q}{r}{i}
          \]
	\end{description}
\end{definition}

A focus can be computed from~$E$ by a fixpoint iteration.
One starts by identifying states~$q$ that are missing an outgoing edge for a transition~$t\in\enabled{\terr{q}}$,
and, following \prop{B1}, adds a suitable index~$i$ to the focus of~$q$.
Using rules \prop{B2} and \prop{B3}, focus information is propagated backwards across edges of the empire,
until a fixpoint is reached.
Different strategies for resolving the nondeterminism inherent in the choice of the predecessor region~$r$ (for transitions with multiple predecessor places) and the index~$i$ yield different, incomparable focus functions.

Given a focus, we omit non-focused conjuncts from the formulae $\omega(p)$ in the Owicki-Gries annotation constructed from the empire $E$,
resulting in a smaller and more modular annotation.
\begin{definition}
  \label{def:focus-empire-og}
  Let $\legalFun$ be a focus.
  For a state $q$ of $E$ with $\law{q}=\bigwedge_{i=1}^n\varphi_i$, let us write
  \[
    \law[p]{q} = \bigwedge \{\, \varphi_i \mid i\in\{1,\ldots,n\} \land \exists r \in \terr{q} \,.\, \legal{q}{r}{i} \land p\in r \ \}\ .
  \]
  The \emph{focused imperial Owicki-Gries annotation} $\og_{E,\legalFun} = (\ghostVars, \omega, \gamma, \rho)$
  annotates every place $p$ with
  \[
    \omega(p) = \bigvee_{q\in Q_p} \big( \ghost=q \land \law[p]{q} \big)\ .
  \]
  The components $\ghostVars$, $\gamma$ and $\rho$ are as in the imperial Owicki-Gries annotation $\og_E$ (\cref{def:empire-og}).
\end{definition}

\begin{example}
	Recall the example program in \cref{fig:petri-program} and the corresponding imperial Owicki-Gries annotation in \cref{fig:imperial-og-example}.
    As highlighted in \cref{ex:product-invariant-dom}, the invariant domain $(A_1, \postFun[1])$ is not required to prove unreachability of $e_2$.
    The variable~$y$ is local to the left-hand thread ($p_1, p_2, p_3$ and $p_4$), and information about it is only relevant for establishing unreachability of~$e_1$.
    Therefore, annotations for $p_5, p_6$ and $p_7$ do not require invariants over~$y$.
    The focused approach increases thread-modularity by removing all information about~$y$ from $\omega(p_5), \omega(p_6)$ and $\omega(p_7)$.
    For example, the annotation of~$p_5$ changes to:
	\begin{align*}
		\omega(p_5) &= (\ghost=q_1\land x>0) \lor (\ghost=q_2\land x>0) \lor (\ghost=q_4\land x>1)
	\end{align*}
\end{example}

The ``unfocused'' imperial Owicki-Gries annotation defined in~\cref{def:empire-og}
corresponds to the \emph{trivial focus},
which assigns $\legalIndices{q}{r} = \{1,\ldots,n\}$ for every~$q,r$. %
For invariant domains that are not given as a product, a focused Owicki-Gries annotation can also be computed,
by applying the above constructions with~$n=1$.
We call the resulting focus functions \emph{monolithic}, as we have $\legalIndices{q}{r}\in\big\{\emptyset, \{1\} \big\}$ for all~$q,r$.
It is however to be expected that monolithic focus functions provide only minimal advantages over the trivial focus,
as this requires that for some region in a state $q$, none of the information in~$\law{q}$ is relevant.
A typical instance of this are regions from which no error place is reachable.

The validity proof of the focused imperial Owicki-Gries annotation largely follows the same structure as for the unfocused imperial Owicki-Gries annotation in \cref{sec:empires}.
However, we must first establish the following key lemma, a generalization of \prop{inductive-law} to the focused setting:
\begin{lemmarep}
	\label{lemma:focus-law-inductive}
	Let $\hat{P}$ be a subset of a reachable marking, and let $t$ be a transition with $\pred{t} \subseteq \hat{P}$.
	Let $q,q'\in Q$ be states of $E$ with $\delta(q, t)=q'$.
	Given a set $X$ of places, we write $\law[X]{q}$ for $\bigwedge_{p\in X} \law[p]{q}$.
	The following Hoare triple holds:
	\[
	  \hoareTriple{ \law[\hat{P}]{q} }{ \lambda(t) }{ \law[(\hat{P}\setminus\pred{t})\cup\succ{t}]{q'} }
	\]
\end{lemmarep}
\begin{proofsketch}
  Let $\law{q} = \varphi_1\land\ldots\land \varphi_n$ and $\law{q'} = \varphi'_1\land\ldots\land\varphi'_n$.
  Using \prop{B2} and \prop{B3},
  as well as the conjunction rule for Hoare triples,
  the Hoare triple above can be broken down to simpler Hoare triples of the form $\hoareTriple{ \varphi_i }{ \lambda(t) }{ \varphi'_i }$.
  As $\varphi'_i = \post[i]{\varphi_i}{t}$ by construction of $E$, these Hoare triples must hold (by the definition of invariant domains).
\end{proofsketch}
\begin{proof}
  Let $\law{q} = \varphi_1\land\ldots\land \varphi_n$ and $\law{q'} = \varphi'_1\land\ldots\land\varphi'_n$.
  By the conjunction rule of Hoare triples,
  it suffices to show the individual Hoare triples $\hoareTriple{ \law[\hat{P}]{q} }{ \lambda(t) }{ \law[p']{q'} }$ for each place $p'\in (\hat{P}\setminus\pred{t})\cup\succ{t}$.
  Furthermore, to show such a Hoare triple,
  it suffices to show $\hoareTriple{ \law[\hat{P}]{q} }{ \lambda(t) }{ \varphi'_i }$ for each $i\in\{1,\ldots,n\}$ such there exists a region $r'\in\terr{q'}$ with $\legal{q'}{r'}{i}$.

  Hence, let $p'\in (\hat{P}\setminus\pred{t})\cup\succ{t}$ be such a place, and let $i\in\{1,\ldots,n\}$, $r'\in\terr{q'}$ with $\legal{q'}{r'}{i}$.
  We distinguish two cases:
  \begin{description}
  \item[case $p'\in\succ{t}$.]
    We have $r'\cap \succ{t} \neq \emptyset$, and hence, by \prop{B2}, there exists a region $r\in\terr{q}$ with $r\cap\pred{t} \neq\emptyset$ and $\legal{q}{r}{i}$.
    Let in particular $p\in r\cap\pred{t}$.
    As $\hat{P} \supseteq \pred{t}$, it follows that $p\in\hat{P}$ holds,
    and we have $\law[\hat{P}]{q} \models \law[p]{q} \models \varphi_i$.
    Given that $\hoareTriple{ \varphi_i }{ \lambda(t) }{ \varphi'_i }$ holds by construction of the empire,
    we conclude that indeed $\hoareTriple{ \law[\hat{P}]{q} }{ \lambda(t) }{ \varphi'_i }$ holds.
  \item[case $p'\notin\succ{t}$.]
    It must hold that $p'\in\hat{P}\setminus\pred{t}$.
    In particular, we have $p' \in \hat{P}$, and hence the precondition $\law[\hat{P}]{q}$ includes the conjunct $\law[p']{q}$.

    We note that $r'$ cannot be a successor region of $t$, i.e., $r' \cap \succ{t} = \emptyset$ must hold.
    Otherwise, if $p'' \in r'\cap \succ{t}$ and $m$ is the reachable marking containing $\hat{P}$, we have that $m \fire{t} m'$ for a marking $m'$ such that $p',p''\in m'$.
    Thus, $p'$ and $p''$ would be co-related, contradicting the fact that they are in the same region $r'$.

    Consequently, as $r' \in \terr{q'}$ is not in a successor region of $t$, and $\terr{q} \fire{t} \terr{q'}$ holds by \prop{inductive-territory},
    it follows that $r' \in \bystanders{t}{\terr{q}}$.
    By \prop{B3}, it follows that $\legal{q}{r}{i}$ holds.
    Therefore, we have that $\law[\hat{P}]{q} \models \law[p']{q} \models \varphi_i$, and by the fact that $\hoareTriple{ \varphi_i }{ \lambda(t) }{ \varphi'_i }$ holds,
    we conclude that indeed $\hoareTriple{ \law[\hat{P}]{q} }{ \lambda(t) }{ \varphi'_i }$ holds.
  \end{description}
\end{proof}

Having re-established this property for the focused case,
we show analogous, ``focused'' versions of \cref{lem:place-conjunction,lem:state-transition,lem:hoare-transition},
and finally arrive at the validity theorem:

\begin{toappendix}
\begin{lemma}
	\label{lemma:focus-place-conjunction}
	Let $\hat{P}$ be a nonempty subset of a reachable marking,
	and let us write $Q_{\hat{P}}$ for the intersection $\bigcap_{p\in\hat{P}} Q_p$
	and $\law[\hat{P}]{q}$ for $\bigwedge_{p\in\hat{P}} \law[p]{q}$.
	Then it holds that
	\[
	\bigwedge_{p\in\hat{P}}\omega(p)
	\equiv
	\bigvee_{q\in Q_{\hat{P}}} \big(\ghost=q\land\law[\hat{P}]{q} \big) \ .
	\]
\end{lemma}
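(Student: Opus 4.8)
The plan is to prove this as a purely propositional identity, entirely parallel to the unfocused \cref{lem:place-conjunction} but carrying the place-specific focused laws $\law[p]{q}$ in place of the full law $\law{q}$. First I would unfold the definition of $\omega$ from \cref{def:focus-empire-og}, rewriting the left-hand side as
\[
  \bigwedge_{p\in\hat{P}} \bigvee_{q\in Q_p} \big(\ghost = q \land \law[p]{q}\big)\ .
\]
Then I would distribute the conjunction over the disjunctions (formally by an easy induction on $|\hat{P}|$), obtaining an equivalent formula of the shape $\bigvee_{f} \bigwedge_{p\in\hat{P}} (\ghost = f(p) \land \law[p]{f(p)})$, where $f$ ranges over all choice functions sending each $p\in\hat{P}$ to some state $f(p)\in Q_p$.

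The crucial observation is that $\ghost$ is a single ghost variable, hence takes a single value: for any fixed $f$, the subformula $\bigwedge_{p\in\hat{P}} \ghost = f(p)$ is equivalent to $\bot$ unless $f$ is constant. So every non-constant choice function contributes an unsatisfiable disjunct and may be dropped. For a constant choice function $f\equiv q$, the constraint $f(p)\in Q_p$ for all $p\in\hat{P}$ is exactly $q\in\bigcap_{p\in\hat{P}}Q_p = Q_{\hat{P}}$, and the corresponding disjunct simplifies to $\ghost=q\land\law[\hat{P}]{q}$, after pulling out the common conjunct $\ghost=q$ and folding $\bigwedge_{p\in\hat{P}}\law[p]{q}$ into $\law[\hat{P}]{q}$. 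Retaining exactly one such disjunct for each $q\in Q_{\hat{P}}$ yields the right-hand side, which closes the proof.

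I do not anticipate a genuine obstacle: the argument is entirely syntactic, and — unlike its sibling lemmas — does not use the hypothesis that $\hat{P}$ lies inside a reachable marking (that hypothesis is needed later, for the territory bookkeeping in \cref{lem:hoare-transition}). The only points that call for a little care are making the distributivity step precise and handling the edge case $Q_{\hat{P}}=\emptyset$, in which the right-hand side is an empty disjunction and the left-hand side has no surviving ``diagonal'' disjunct, so both reduce to $\bot$. Finally, note that specializing to the trivial focus $\legalIndices{q}{r} = \{1,\dots,n\}$ makes $\law[p]{q}$ collapse to $\law{q}$, so the very same proof also yields \cref{lem:place-conjunction}, as claimed in its proof.
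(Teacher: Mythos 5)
Your proposal is correct and takes essentially the same route as the paper: the paper's proof is an induction on $|\hat{P}|$ that distributes the conjunction over the disjunctions and keeps only the ``diagonal'' disjuncts where all the $\ghost=q$ constraints agree, which is exactly your observation that only constant choice functions survive. Your remarks about the unused reachability hypothesis and the specialization to the trivial focus also match how the paper uses this lemma.
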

\begin{proof}
	By induction.
	\begin{description}
		\item[Induction Base] Let $\hat{P}=\{p\}$ for some $p\in P$, then:
		\[
		\bigwedge_{p\in\hat{P}}\omega(p)\equiv\omega(p)\equiv\bigvee_{q\in Q_p} \ghost=q \land \law[p]{q}\equiv\bigvee_{q\in Q_{\hat{P}}} \ghost=q \land \law[\hat{P}]{q}
		\]
		\item[Induction Hypothesis] For any $\hat{P}$ with $|\hat{P}|\leq n$ it holds that
		\[
		\bigwedge_{p\in\hat{P}}\omega(p)\equiv\bigvee_{q\in Q_{\hat{P}}} \ghost=q\land\law[\hat{P}]{q}
		\]
		\item[Induction Step] Let $\hat{P}'\coloneq\{p'\}\cup\hat{P}$ with $|\hat{P}|\leq n$, therefore
		\begin{align*}
			\bigwedge_{p\in\hat{P}'}\omega(p)
			&\equiv\omega(p')\land\bigwedge_{p\in\hat{P}}\omega(p)
			\stackrel{IH}{\equiv}\omega(p')\land\bigvee_{q\in Q_{\hat{P}}} \ghost=q\land\law[\hat{P}]{q}\\
			&\stackrel{def.}{\equiv} \bigvee_{q'\in Q_{p'}} \ghost=q' \land \law[p']{q'}\land\bigvee_{q\in Q_{\hat{P}}} \ghost=q\land\law[\hat{P}]{q}\\
			&\equiv \bigvee_{q\in Q_{\hat{P}}, q'\in Q_{p'}} \ghost=q\land\law[\hat{P}]{q} \land \ghost=q' \land \law[p']{q'}\\
			&\stackrel{q=q'}{\equiv} \bigvee_{q\in Q_{\hat{P}}\cap Q_{p'}} \ghost=q\land\law[\hat{P}]{q}\land \law[p']{q}\\
			&\stackrel{def.}{\equiv} \bigvee_{q\in Q_{\hat{P'}}} \ghost=q\land\law[\hat{P'}]{q}
		\end{align*}
	\end{description}
\end{proof}

\begin{lemma}\label{lemma:focus-hoare-transition}
	Let $\hat{P}$ be a nonempty subset of a reachable marking,
	and let $t$ be a transition such that we have $\pred{t} \subseteq \hat{P}$.
	Then the following Hoare triple holds:
	\[
	\hoareTriple{\bigwedge_{p\in \hat{P}} \omega(p)}{\lambda(t);\gamma(t)}{\bigwedge_{p\in (\hat{P}\setminus\pred{t})\cup\succ{t}} \omega(p)}
	\]
\end{lemma}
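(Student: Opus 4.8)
The plan is to reduce the Hoare triple to individual empire states and transitions, using the focused building blocks \cref{lemma:focus-place-conjunction}, \cref{lemma:focus-law-inductive}, and \cref{lem:places-transition}. First I would fix a reachable marking $m$ with $\hat{P}\subseteq m$; since $\pred{t}\subseteq\hat{P}\subseteq m$ the transition $t$ is enabled in $m$, so $m\fire{t}m'$ with $(\hat{P}\setminus\pred{t})\cup\succ{t}\subseteq m'$, showing the postcondition's place set is again contained in a reachable marking. Applying \cref{lemma:focus-place-conjunction} to both sides, it suffices to establish
\[
  \hoareTriple{\bigvee_{q\in Q_{\hat{P}}}\big(\ghost=q\land\law[\hat{P}]{q}\big)}{\lambda(t);\gamma(t)}{\bigvee_{q'\in Q_{(\hat{P}\setminus\pred{t})\cup\succ{t}}}\big(\ghost=q'\land\law[(\hat{P}\setminus\pred{t})\cup\succ{t}]{q'}\big)}\ ,
\]
and for this it is enough to handle each precondition disjunct $q\in Q_{\hat{P}}$ separately (the degenerate case $(\hat{P}\setminus\pred{t})\cup\succ{t}=\emptyset$, where the postcondition is $\top$, being trivial).

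Second, fixing such a $q$, I would observe that $t\in\enabled{\terr{q}}$: each $p\in\hat{P}$ lies in some region of $\terr{q}$, and by \cref{lem:partition} distinct places of $\hat{P}$ (hence of $\pred{t}$) lie in distinct regions. I then split on whether $\delta(q,t)$ is defined. If it is \emph{not}, then \prop{B1} provides a region $r\in\terr{q}$ and an index $i$ with $r\cap\pred{t}\neq\emptyset$, $\post[i]{\varphi_i}{t}=\bot$ and $\legal{q}{r}{i}$, where $\law{q}=\bigwedge_{j=1}^n\varphi_j$; choosing $p\in r\cap\pred{t}\subseteq\hat{P}$, the conjunct $\varphi_i$ occurs in $\law[p]{q}$ and hence in $\law[\hat{P}]{q}$, so $\law[\hat{P}]{q}\models\varphi_i$, and the invariant-domain condition $\hoareTriple{\varphi_i}{\lambda(t)}{\bot}$ for $(A_i,\postFun[i])$ yields $\hoareTriple{\law[\hat{P}]{q}}{\lambda(t)}{\bot}$; appending the ghost-only update $\gamma(t)$ keeps the postcondition $\bot$, which implies the required disjunction.

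Third, if $\delta(q,t)=q'$ is defined, I would take $q'$ itself as the matching disjunct: by \prop{inductive-territory} we have $\terr{q}\fire{t}\terr{q'}$, and since $\pred{t}\subseteq\hat{P}\subseteq\places{\terr{q}}$, \cref{lem:places-transition} gives $(\hat{P}\setminus\pred{t})\cup\succ{t}\subseteq\places{\terr{q'}}$, i.e.\ $q'\in Q_{(\hat{P}\setminus\pred{t})\cup\succ{t}}$. Then \cref{lemma:focus-law-inductive}, applied with this $\hat{P}$, gives $\hoareTriple{\law[\hat{P}]{q}}{\lambda(t)}{\law[(\hat{P}\setminus\pred{t})\cup\succ{t}]{q'}}$, and the definition of $\gamma$ gives $\hoareTriple{\ghost=q}{\gamma(t)}{\ghost=q'}$ (also when the ghost update is omitted, in which case $q'=q$); since $\lambda(t)$ refers only to program variables and $\gamma(t)$ only to $\ghost$, composing the two triples yields $\hoareTriple{\ghost=q\land\law[\hat{P}]{q}}{\lambda(t);\gamma(t)}{\ghost=q'\land\law[(\hat{P}\setminus\pred{t})\cup\succ{t}]{q'}}$, which entails the disjunction.

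The substantial work has in fact already been done: \cref{lemma:focus-law-inductive} pushes the focused law forward (using \prop{B2} and \prop{B3}), and \cref{lem:places-transition} keeps the relevant places inside the successor territory. What remains here is the assembly — the \cref{lemma:focus-place-conjunction} rewriting and the disjunct-by-disjunct reduction — together with the two easy ghost-update compositions. The one step demanding care is the $\delta$-undefined branch, where \prop{B1} is precisely what guarantees that the infeasibility-witnessing conjunct $\varphi_i$ is retained in $\law[\hat{P}]{q}$ and not merely present in the full law $\law{q}$.
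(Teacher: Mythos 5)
Your proposal is correct and follows essentially the same route as the paper's proof: rewriting both sides via the focused place-conjunction lemma, splitting per precondition disjunct $q$, establishing enabledness via the partition lemma, and then case-splitting on whether $\delta(q,t)$ is defined (using the focused inductive-law lemma plus the places-transition lemma in the defined case, and the safety condition of the focus in the undefined case). The only cosmetic difference is the order of assembly, and your explicit note on why the focus condition guarantees the infeasibility-witnessing conjunct survives focusing is exactly the point the paper relies on.
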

\begin{proof}
	Pick some $q\in Q$ with $\hat{P}\subseteq places(\terr{q})$.
	From the partitioning in \cref{lem:partition}, it follows by the existence of these regions that $t\in\enabled{\terr{q}}$ holds.
	Therefore, using \prop{inductive-law} of \cref{def:empire} one of the following holds:
	\begin{description}
		\item[case 1:] $\delta(q, t)=q'$ is defined.
		
		Let $\varphi\coloneq \law[\hat{P}]{q}$ and $\varphi'\coloneq \law[(\hat{P}\setminus\pred{t})\cup\succ{t}]{q'}$.
		From \cref{lemma:focus-law-inductive} it follows that the Hoare triple
		$\hoareTriple{ \varphi }{ \lambda(t) }{ \varphi' }$ holds.
		Since $\gamma(t)$ does not read or write any program variables, the Hoare triple $\hoareTriple{ \varphi' }{\gamma(t)}{ \varphi' }$ holds.
		Therefore, by sequential composition, the Hoare triple $\hoareTriple{ \varphi }{\lambda(t); \gamma(t)}{ \varphi' }$ also holds.

		From the definition of $\gamma$, it follows that the Hoare triple $\hoareTriple{\ghost=q}{\gamma(t)}{\ghost=q'}$ holds, as $\delta(q, t)=q'$. Since $\lambda(t)$ does not read or write the ghost variable $\ghost$, the Hoare triple $\hoareTriple{\ghost=q}{\lambda(t)}{\ghost=q}$ holds, and thus also the Hoare triple $\hoareTriple{\ghost=q}{\lambda(t); \gamma(t)}{\ghost=q'}$.

		Combining both reasoning steps, we conclude that the Hoare triple \[\hoareTriple{\ghost=q\land\varphi}{\lambda(t); \gamma(t)}{\ghost=q'\land\varphi'}\] holds.

		\item[case 2:] $\delta(q, t)$ is undefined.
		
		Since $t\in\enabled{\terr{q}}$, it follows from the definition of $\delta$ in \cref{def:saturated-empire} that $\post{\law{q}}{t} \equiv \bot$.
		Then, by \prop{B1}, there exists a $p \in \pred{t}$ such that the Hoare triple $\hoareTriple{\law[p]{q}}{\lambda(t)}{\bot}$ holds.
		From $\pred{t} \subseteq \hat{P}$ it follows that the Hoare triple $\hoareTriple{\law[\hat{P}]{q}}{\lambda(t)}{\bot}$ holds.
		Since the Hoare triple $\hoareTriple{\bot}{\gamma(t)}{\bot}$ holds trivially, by sequential composition the Hoare triple $\hoareTriple{\law[\hat{P}]{q}}{\lambda(t); \gamma(t)}{\bot}$ also holds.
		By strengthening the precondition, we conclude that the Hoare triple $\hoareTriple{\ghost=q\land\law[\hat{P}]{q}}{\lambda(t); \gamma(t)}{\bot}$ holds.
	\end{description}
	From the two cases it follows using the disjunction rule that the Hoare triple
	\[\hoareTriple{\bigvee_{q\in Q_{\hat{P}}}\ghost=q\land\law[\hat{P}]{q}}{\lambda(t); \gamma(t)}{\bigvee_{q'\in Q'}\ghost=q'\land\law[(\hat{P}\setminus\pred{t})\cup\succ{t}]{q'}}\]
	holds, where $Q'\coloneq \{\delta(q,t)\mid q\in Q_{\hat{P}} \land \delta(q,t) \text{ is defined}\}$.
	From \cref{lemma:focus-place-conjunction} it follows that the equivalence
	\[\bigvee_{q\in Q_{\hat{P}}}\ghost=q\land\law[\hat{P}]{q}\equiv\bigwedge_{p\in\hat{P}}\omega(p)\] holds.
	From \cref{lem:places-transition} and \cref{lemma:focus-place-conjunction}, we obtain that the following holds:
	\[\bigvee_{q\in Q'}\ghost=q'\land\law[(\hat{P}\setminus\pred{t})\cup\succ{t}]{q'}\models\bigvee_{q'\in Q_{(\hat{P}\setminus\pred{t})\cup\succ{t}}}\ghost=q'\land\law[(\hat{P}\setminus\pred{t})\cup\succ{t}]{q'}\equiv\bigwedge_{p\in(\hat{P}\setminus\pred{t})\cup\succ{t}}\omega(p)\].

	Therefore, we conclude that the following Hoare triple holds:
	\[\hoareTriple{\bigwedge_{p\in \hat{P}} \omega(p)}{\lambda(t);\gamma(t)}{\bigwedge_{p\in (\hat{P}\setminus\pred{t})\cup\succ{t}} \omega(p)}\]
\end{proof}
\end{toappendix}

\begin{theoremrep}
	Let $(A,\postFun) = (A_1,\postFun[1]) \otimes \ldots \otimes (A_n,\postFun[n])$ be a safe invariant domain,
	and let~$E$ be a corresponding saturated empire.
	Let $\legalFun$ be a focus on~$E$.
	The focused imperial Owicki-Gries annotation~$\og_{E,\legalFun}$ is valid.
\end{theoremrep}
\begin{proofsketch}
	This proof follows the same structure as the proof of \cref{thm:imperial-og-correct}.
	In particular, \cref{lem:hoare-transition} also holds for the focused imperial Owicki-Gries annotation, ensuring that both \prop{InductiveOG} and \prop{InterferenceFreeOG} are satisfied.
	Condition~\prop{InitialOG} still holds as the ghost variables remain unchanged,
	and \prop{SafeOG} follows from property \prop{B1}. %
\end{proofsketch}
\begin{proof}
		Let $E=(Q,q_\init,\delta,\lawFun,\terrFun)$ be a valid empire, $\legalFun$ be a focus on~$E$ and $\og_{E,\legalFun} = (\ghostVars,\omega,\gamma,\rho)$.
		We show each of the four validity conditions separately.
		\begin{description}
			\item[\prop{InitialOG}:] Let $p\in m_\init$.
			By \prop{initial-territory}, it holds that $m_\init\in\treaty{\terr{q_\init}}$.
			Therefore, we know that $q_\init\in Q_p$.
			We have that 
			\[
				\law[p]{q_\init} = \bigwedge \{\, \varphi_i \mid i\in\{1,\ldots,n\} \land \exists r \in \terr{q} \,.\, \legal{q}{r}{i} \land p\in r \ \}
			\] and by \prop{initial-law} and \cref{def:focus} it holds that $\varphi_i \equiv \top$ $i = 1, \dots, n$. This implies $\law[p]{q_\init} \equiv \top$ independently from the focus
			and so
			\[
			\bigwedge_{v\in\ghostVars}v=\rho(v)
			\quad\models\quad \ghost=q_\init
			\quad\models\quad \ghost=q_\init \land \law[p]{q_\init} 
			\quad\models\quad \omega(p) \ .
			\]
			
			\item[\prop{SafeOG}:] Let $p\in \errorPlaces$ be an error place.
			By \cref{cor:no-error-states} it holds for any saturated empire $E$ that $Q_p = \emptyset$. 
			Hence, we have $\omega(p) = \bigvee_{q\in Q_p} \ghost=q \land \law[p]{q} \equiv \bigvee_{q\in \emptyset} \ghost=q \land \law[p]{q} \equiv \bot$.
			
			\item[\prop{InductiveOG}:] Let $t$ be a transition.
			From \cref{lemma:focus-hoare-transition},
			using $\hat{P}=\pred{t}$,
			it follows that the Hoare triple $\hoareTriple{\bigwedge_{p\in\pred{t}} \omega(p)}{\lambda(t);\gamma(t)}{\bigwedge_{p\in (\pred{t}\setminus\pred{t})\cup\succ{t}} \omega(p)}$ holds.
			By simplifying the postcondition, we observe that the Hoare triple $\hoareTriple{\bigwedge_{p\in\pred{t}} \omega(p)}{\lambda(t);\gamma(t)}{\bigwedge_{p\in \succ{t}} \omega(p)}$ holds.
			
			\item[\prop{InterferenceFreeOG}:] Let $t$ be a transition and $p$ a place that is co-marked with $t$.
			It follows that
			the Hoare triple $\hoareTriple{\bigwedge_{p'\in\pred{t}\cup\{p\}} \omega(p')}{\lambda(t);\gamma(t)}{\bigwedge_{p'\in (\pred{t}\cup\{p\}\setminus\pred{t})\cup\succ{t}} \omega(p')}$ must hold,
			by applying \cref{lemma:focus-hoare-transition} for $\hat{P}=\pred{t}\cup\{p\}$.
			By weakening the postcondition, we conclude that the Hoare triple $\hoareTriple{\omega(p)\land\bigwedge_{p'\in\pred{t}} \omega(p')}{\lambda(t);\gamma(t)}{\omega(p)}$ also holds.
		\end{description}
		As all four validity conditions hold, $\og_{E,\legalFun}$ is valid.
\end{proof}

\section{Evaluation}
\label{sec:evaluation}

In order to evaluate our approach empirically, we implemented the generation of Owicki-Gries annotations from safe invariant domains.
Specifically, we implemented the saturated empire construction as described in \cref{sec:algo},
as well as the computation of (focused) imperial Owicki-Gries annotations (cf.~\cref{sec:empires,sec:focus}).
As a baseline for comparison, we also implemented the na\"ive Owicki-Gries annotation as defined in \cref{def:naive-og}.
\smallskip

Our evaluation aims to answer the following questions:

\begin{description}
	\item[Q1] Is it practically feasible to compute Owicki-Gries annotations from invariant domains, with reasonable resource overhead?
	\item[Q2] How do quality and thread-modularity of (focused) imperial Owicki-Gries annotations compare to the na\"ive approach?
	\item[Q3] Can the resulting Owicki-Gries annotation be checked efficiently?
	\item[Q4] What is the impact of the focus on the quality of the resulting Owicki-Gries annotations?
\end{description}

\paragraph{Benchmarks.}
For the evaluation, we used invariant domains generated by the software model checker \uautomizer~\cite{SVCOMP24:Automizer},
which verifies programs using the trace abstraction refinement~\cite{Heizmann:TAR} approach.
This approach decomposes the program into smaller, provably-correct components, resulting in a safe product invariant domain.
\automizer translates concurrent C~programs
to Petri programs~\cite{Heizmann:Petrification}, and applies a variant of trace abstraction adapted for Petri programs~\cite{VMCAI21:Petri-TAR}.
To generate benchmarks, we ran \automizer on concurrent C~programs from SV-COMP~2025~\cite{svcomp25}, comprising \totalBenchmarks\ concurrent C~programs with various specifications;
specifically, all correct tasks from the \emph{ConcurrencySafety-Main}, \emph{ConcurrencySafety-NoOverflow} and \emph{ConcurrencySafety-NoDataRace} categories.
We used \automizer\ to convert each C~program into a Petri program and to verify its correctness.
If the verification succeeded within a time limit of 900\,s and a memory limit of 8\,GB, we extracted both the Petri program and the proof.
This resulted in \totalProofs\ benchmarks,
comprising Petri programs with between \minPlaces\ and \maxPlaces\ places and between \minTransitions\ and \maxTransitions\ transitions.

\paragraph{Evaluation Setup.}
In our evaluation, we applied different methods to generate Owicki-Gries annotations from these benchmarks.
In a second benchmark run, we furthermore validated the Owicki-Gries annotations after their generation to ensure correctness, measuring the required validation time.
Validation consists in checking the correctness conditions specified in \cref{def:og-valid}, which was done using the SMT solver Z3~\cite{z3}.
A VM replicating our evaluation setup is available as artifact~\cite{popl26:artifact}.
The evaluation was performed using the \textsc{BenchExec} framework~\cite{benchexec} on an AMD Ryzen Threadripper 3970X at 3.7\,GHz, with a time limit of 300\,s and a memory limit of 8\,GB.

\paragraph{Evaluation Results.}
\begin{table}[t]
	\caption{
	  Comparison of the generation and validation of the Owicki-Gries annotations.
	  The average times are computed on the subset of benchmarks where all approaches succeeded.
	}
	\label{tab:evaluation}
	\footnotesize
	\setlength\tabcolsep{15pt}
	\begin{tabular}{lrrr}
		\toprule
		& \multirow{2}{0.15\textwidth}[-2pt]{\centering na\"ive\\ Owicki-Gries} & \multicolumn{2}{c}{imperial Owicki-Gries} \\
		\cmidrule{3-4}
		& & unfocused & focused \\
		\midrule
		\multicolumn{4}{c}{Owicki-Gries generation}\\
		\midrule
		\# successful             & \naiveSucc        & \imperialSucc        & \modularFocusSucc    \\
		\# timeout                & \naiveTO          & \imperialTO          & \modularFocusTO      \\
		\# out of memory          & \naiveOOM         & \imperialOOM         & \modularFocusOOM     \\
		avg.\ generation time (s) & \naiveGenTime     & \imperialGenTime     & \modularFocusGenTime \\
		\midrule
		\multicolumn{4}{c}{Owicki-Gries generation \& validation}\\
		\midrule
		\# successful (valid)     & \naiveValid       & \imperialValid       & \modularFocusValid   \\
		\# timeout                & \naiveValTO       & \imperialValTO       & \modularFocusValTO   \\
		\# out of memory          & \naiveValOOM      & \imperialValOOM      & \modularFocusValOOM  \\
		\# unknown                & \naiveValUnknown  & \imperialValUnknown  & \modularFocusValUnknown \\
		avg.\ validation time (s) & \naiveValTime     & \imperialValTime     & \modularFocusValTime \\
		\bottomrule
	\end{tabular}
\end{table}

\Cref{tab:evaluation} presents a comparison of the different algorithms used to generate Owicki-Gries annotations, applied to the \totalProofs\ proofs produced by \automizer.
Across all configurations, we encountered \numCrashes\ crashes due to technical issues.
As we generated the benchmarks (Petri programs and invariant domains) beforehand, the reported times do not include the time required for the verification itself.
\Cref{fig:places2time} shows how the generation time for Owicki-Gries annotations scales with the number of places across the different approaches.
For the imperial approach, \imperialUnderOneSec\,\% of the annotations are computed within less than 1\,s (see \cref{fig:places2time-imperial,fig:places2time-modularFocus});
some outliers require up to \imperialMaxGenTime\,s.
The generation always succeeded within the given resource limits.
We conclude that the imperial Owicki-Gries annotation is a feasible postprocessing step after verification, with only minimal overhead (\textbf{Q1}).
By contrast, \cref{fig:places2time-naive} shows that the generation time for na\"ive Owicki-Gries annotations increases much faster within the number of places.
As a result, the na\"ive approach fails in \naiveOutOfRessources~cases due to resource exhaustion, and is significantly slower (see \cref{tab:evaluation}).

\begin{figure}[b]
	\centering
	\begin{subfigure}[t]{0.32\textwidth}
		\centering
		\hspace*{-1em}
		\begin{externalize}{places2time-naive}
		\begin{tikzpicture}
			\begin{loglogaxis}[
				xlabel=places,
				ylabel=time (\second),
				xmin=10,
				xmax=1600,
				ymin=0,
				ymax=300,
				domain=1:300,
				clip mode=individual,
				width=1.2\linewidth,
				height=1.2\linewidth,
				font=\footnotesize,
				outer sep=3pt,
				inner sep=0,
				]
				\addplot+[blue, mark=asterisk,only marks]
				table[x index=0, y index=1] {content/evaluation-results/places2time_naive.csv};
			\end{loglogaxis}
		\end{tikzpicture}
		\end{externalize}
		\vspace*{-1em}
		\caption{na\"ive}
		\label{fig:places2time-naive}
	\end{subfigure}
	\hfill
	\begin{subfigure}[t]{0.32\textwidth}
		\centering
		\begin{externalize}{places2time-imperial}
		\begin{tikzpicture}
			\begin{loglogaxis}[
				xlabel=places,
				xmin=10,
				xmax=1600,
				ymin=0,
				ymax=300,
				domain=1:300,
				clip mode=individual,
				width=1.2\linewidth,
				height=1.2\linewidth,
				font=\footnotesize,
				outer sep=3pt,
				inner sep=0,
				]
				\addplot+[red, mark=asterisk,only marks]
				table[x index=0, y index=1] {content/evaluation-results/places2time_imperial.csv};
			\end{loglogaxis}
		\end{tikzpicture}
		\end{externalize}
		\vspace*{-1em}
		\caption{imperial (unfocused)}
		\label{fig:places2time-imperial}
	\end{subfigure}
	\hfill
	\begin{subfigure}[t]{0.32\textwidth}
		\centering
		\begin{externalize}{places2time-modularFocus}
		\begin{tikzpicture}
			\begin{loglogaxis}[
				xlabel=places,
				xmin=10,
				xmax=1600,
				ymin=0,
				ymax=300,
				domain=1:300,
				clip mode=individual,
				width=1.2\linewidth,
				height=1.2\linewidth,
				font=\footnotesize,
				outer sep=3pt,
				inner sep=0,
				]
				\addplot+[teal, mark=asterisk,only marks]
				table[x index=0, y index=1] {content/evaluation-results/places2time_modularFocus.csv};
			\end{loglogaxis}
		\end{tikzpicture}
		\end{externalize}
		\vspace*{-1em}
		\caption{imperial (focused)}
		\label{fig:places2time-modularFocus}
	\end{subfigure}
	\caption{Comparison of the time needed to generate an Owicki-Gries annotation (Y-axis) for a program with a given number of places (X-axis), using the different approaches.}
	\label{fig:places2time}
\end{figure}
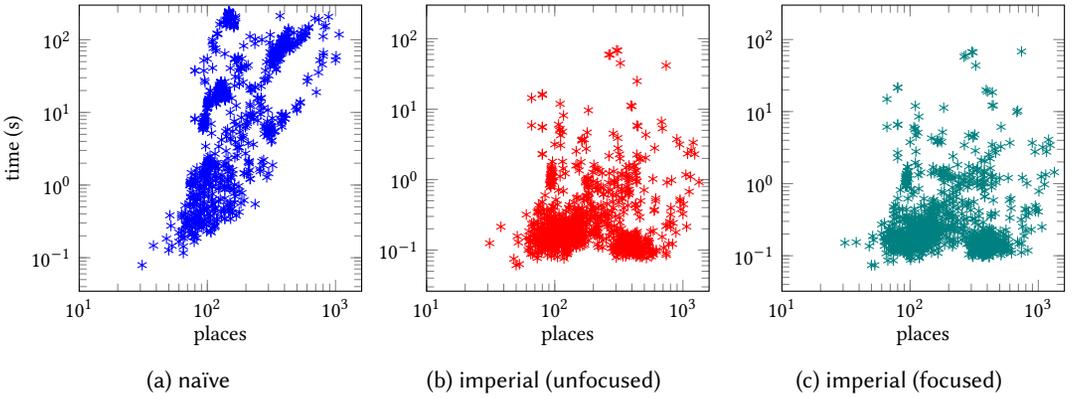

\begin{figure}[b]
	\centering
	\begin{subfigure}[t]{0.52\textwidth}
	\begin{externalize}{og-size}%
	\pgfplotsset{/pgfplots/myline legend/.style={
	  legend image code/.code={
	    \draw[mark repeat=2,mark phase=2,##1] plot coordinates {
	      (0cm,0cm) (0.2cm,0cm)
	    };%
	  }
	}}
	\begin{tikzpicture}
		\begin{semilogyaxis}[
			width=1.1\linewidth,
			height=.87\linewidth,
		    font=\footnotesize,
			/pgfplots/table/y index=0,
			/pgfplots/table/header=false,
			xlabel=n-th smallest result,
			ylabel=Size of the annotation,
			y label style={yshift=-.2em},
			log basis y=10,
			xmin=0,
			ymin=80,
			ymax=2000000,
			legend style={font=\scriptsize},
			legend entries={na\"ive, imperial (unfocused), imperial (focused)},
			legend pos=south east,
			myline legend,
			]
			\addplot+ [mark=none,blue] table [y index=1] {content/evaluation-results/formulaSizes_naive.csv};
			\addplot+ [mark=none,red] table [y index=1] {content/evaluation-results/formulaSizes_imperial.csv};
			\addplot+ [mark=none,teal] table [y index=1] {content/evaluation-results/formulaSizes_modularFocus.csv};
		\end{semilogyaxis}
	\end{tikzpicture}%
	\end{externalize}%
	\caption{Quantile plot comparing the sizes of the generated Owicki-Gries annotations.}
	\label{fig:og-size}
\end{subfigure}
\hfill
\begin{subfigure}[t]{0.43\textwidth}
	\centering
	\begin{externalize}{ghost-updates}
	\begin{tikzpicture}
		\begin{loglogaxis}[
		    width=1.2\linewidth,
		    font=\footnotesize,
			xlabel=Na\"ive OG,
			ylabel=Imperial OG,
			y label style={yshift=-.6em},
			xmin=1,
			xmax=1500,
			ymin=1,
			ymax=1500,
			domain=1:1500,
			clip mode=individual,
			axis equal image,
			]
			\addplot[black] {x};
			\addplot[dashed] {x/3};
			\addplot[thick,dotted] {x/10};
			\addplot+[orange, mark=asterisk, only marks]
			table[x index=0, y index=1] {content/evaluation-results/ghostUpdates.csv};
		\end{loglogaxis}
	\end{tikzpicture}%
	\end{externalize}%
	\caption{Comparison of the number of ghost updates
	(solid line: equal numbers, dashed line: $\times$3 fewer updates, dotted line: $\times$10 fewer updates).}
	\label{fig:ghost-updates}
\end{subfigure}
  \caption{Quality comparison for the generated Owicki-Gries annotations.}
\end{figure}

To determine the quality of the imperial Owicki-Gries annotation~(\textbf{Q2}), we used two metrics: the size of the annotation, and the number of ghost updates. %
We define the size of the annotation as the sum of the sizes of its different components, i.e., the size of the formulas for each place, the expressions in ghost updates, and the initial values of the ghost variables.
For each formula resp.\ expression, the size is given by the number of nodes in a DAG representation of its syntax.
Smaller annotations are advantageous, as they are generally easier to check and to understand for humans.
The number of ghost updates, on the other hand, can be seen as an indicator of thread-modularity.
A larger number of ghost updates allows to encode more information about interleavings, which can make the annotation less thread-modular.
\Cref{fig:og-size} shows a quantile plot with the size of the annotations on the Y-axis and the number of Owicki-Gries annotations with size up to that value on the X-axis.
In terms of annotation size, we observe that the imperial Owicki-Gries annotations are approximately \naiveToImperialSizeRatio~times smaller on average than the na\"ive annotations.
\Cref{fig:ghost-updates} compares the number of ghost updates required by the na\"ive Owicki-Gries annotation (X-axis) with the imperial Owicki-Gries annotation (Y-axis), where each data point corresponds to a task for which both approaches succeeded.
The na\"ive Owicki-Gries annotations require one ghost update for each transition, except for self-loops.
By contrast, the imperial Owicki-Gries annotations require substantially fewer ghost updates.
For \threeTimesLessGU\,\% of the tasks, the number of ghost updates is less than factor of 3 compared to the na\"ive approach, while for \tenTimesLessGU\,\%, even fewer than a factor of 10 ghost updates are required.
In all cases, the imperial approach needs strictly fewer ghost updates than the na\"ive approach, as indicated by all data points lying below the diagonal.
These results demonstrate that the imperial Owicki-Gries annotations are considerably more thread-modular (\textbf{Q2}) than the annotation from the na\"ive approach.

\Cref{tab:evaluation} also reports the number of annotations shown as valid, along with the average validation time in case it succeeded. %
None of the generated Owicki-Gries annotations were found to be invalid.
For the (focused) imperial Owicki-Gries annotations, validity checks are quite efficient (\textbf{Q3}). %
The validation is faster than the verification (which produced the given invariant domains)
for most (\percValFasterVeriOverall\,\%) of the benchmarks.
When focusing on benchmarks with a non-trivial verification time ($\geq 10\,s$),
the validation is faster in \percValFasterVeriTenSeconds\,\% of cases, and more than 3 times faster in \percValThriceFasterVeriTenSeconds\,\% of cases.
The remaining \percValSlowerVeriTenSeconds\,\% of benchmarks (with non-trivial verification time) where validation time exceeds verification time require further technical work, e.g.\ some cases may be due to the fact that our validation always uses Z3 to check Hoare triples, while the verification uses a range of different SMT solvers.
Moreover, \imperialValidPercent\,\%~of the imperial annotations were successfully validated.
In comparison, the na\"ive Owicki-Gries annotations are significantly worse to validate, with only \naiveValidPercent\,\% succeeding and a much higher average validation time.

Regarding the quality of the focused Owicki-Gries annotations (\textbf{Q4}), we observe that focus reduces the size of the annotation by \focusSmallerPercent\,\% (see \cref{fig:og-size}).
The focused approach also slightly decreases the average validation time and enables the validation of \additionalModularFocus\ additional annotations compared to the unfocused approach.
Although the improvements are not as significant as the difference compared to the na\"ive Owicki-Gries annotations, focus is still advantageous.
As shown in \cref{tab:evaluation,fig:places2time}, the generation times with focus are comparable to those without focus (with only minimal overhead).

\section{Discussion}
\label{sec:discussion}

We have presented a construction that converts interleaving-based correctness proofs of concurrent programs into thread-modular proofs, in the form of Owicki-Gries annotations.
To our knowledge, this is the first work to approach the construction of thread-modular proofs from this angle.
Previous work instead modifies verification algorithms themselves,
by a~priori restricting the space of available proofs to be in a thread-modular form.
The advantage of our approach is that the construction we present can be combined with many existing, mature verification algorithms and verification tools,
without requiring major changes to the verification itself.
Furthermore, such combinations do not suffer from completeness issues that can, in some cases, be introduced when verification algorithms themselves are modified.
Instead, the verification algorithm retains the freedom to consider as much interleaving information as required to prove the correctness.
In our evaluation, we have applied our construction to benchmarks generated from C~programs,
showing empirically that the generation of Owicki-Gries annotations incurs only a reasonable overhead,
and the generated Owicki-Gries annotations are compact and can be efficiently validated.

In the following, we discuss four key aspects of our approach, and present an outlook on possible future work for each aspect.

\paragraph{Implementation for Other Verifiers.}
To implement our approach for other verification tools,
two areas that likely need to be addressed are the \emph{program model} and the \emph{interleaving-based proofs}.

We describe our approach using a generic Petri net-based program model,
without assuming additional information such as a concept of threads, special ``fork'' and ``join'' instructions,
or other concurrency paradigms.
Our model should be generic enough to capture many models of concurrent programs, as is the case for POSIX threads-based programs~\cite{Heizmann:Petrification}.
In practice, it may also be beneficial to instead \emph{specialize} our approach to a particular representation,
taking advantage of additional information it provides.
Such a specialization can improve performance
(e.g.\ for checking validity of Owicki-Gries annotations, see \cref{sec:og-annotation})
and simplify certain definitions.
For instance, if the program is split into threads,
a territory~$\tau$ being extendable by a transition~$t$ (\cref{def:extendable}) simplifies to requiring that $t$~is enabled in~$\tau$ and $t$~is a sequential transition, i.e., not a fork or join.

Regarding the interleaving-based proofs produced by the verifier,
we base our approach on the abstract notion of invariant domains.
As discussed in~\cref{sec:interleaving-proofs},
various established program verification techniques can be seen to produce invariant domains,
including e.g.\ abstract interpretation~\cite{Cousot:AI} and predicate abstraction~\cite{Graf:Pred-Abs,Ball-Rajamani:CEGAR}.
However, some hurdles exist for techniques which produce proofs that do not directly cover all interleavings.
For instance, verification based on \emph{partial order reduction}~\cite{Farzan:ReductionForSafety,PLDI22:SoundSeq,Cassez:TARConcurrentPrograms,Wachter:MultiThreadedImpact} gives a direct proof only for a subset of interleavings,
and relies on an indirect meta-argument for other interleavings
(namely, each interleaving is in some sense equivalent to one of the directly-proven interleavings).
Constructing an Owicki-Gries proof from the output of such a verification would require encoding this meta-argument in the proof.
Whether it is possible to synthesize ghost variables encoding this meta-argument is an open question we plan to address in future work.

\paragraph{Interaction with Verification Algorithms.}
Naturally, leaving the freedom to the verification algorithm to decide how much interleaving information is considered
also allows the verifier to produce proofs (i.e., invariant domains) that rely on more interleaving information than is strictly required.
In this scenario, our presented construction will still succeed in computing an Owicki-Gries annotation,
but the annotation may have many ghost updates and (in the worst case, exponentially) large formulas.
However, our evaluation shows that in practice, our approach performs considerably better than encoding all interleaving information, at least for the verifier we evaluated.

In combining our construction with different verification algorithms, it may be worthwhile to consider heuristics that guide the verifier to avoid including unnecessary interleaving information.
Additionally, our construction may be modified to further improve how it handles such cases.
The \emph{focused} imperial Owicki-Gries annotation (\cref{sec:focus}) is a first step in this direction,
taking advantage of the modular nature of invariant domains produced by some verification algorithms.
The more fine-grained the modules are, the greater the benefit resulting from the focus.
Future work may extend the notion of focus to other modular proof combinations,
drawing on the literature on \emph{products of abstract domains}~\cite{Cortesi:ProductOperators},
or go beyond the modularity provided by the structure of the invariant domain and use other criteria to simplify the generated Owicki-Gries annotation.

\paragraph{Granularity of Ghost Variables.}
The ghost variables introduced by our construction capture an abstraction of the \emph{history}, i.e., the executed interleaving, by tracking the state of the empire.
Currently, all this interleaving information is encoded in a single ghost variable $\ghost$.
It may be possible to split this information across multiple ghost variables, each capturing a particular aspect of the interleaving information.
While it may seem counterintuitive to increase the number of ghost variables,
this would enable annotations to refer only to specific aspects of interleaving information relevant to a particular place.
For instance, one aspect of the interleaving information may be to track whether some thread is currently accessing a shared resource.
Threads that never access the same resource need not consider this aspect in their annotation.
Ideally, this could also allow identifying certain specialized kinds of ghost variables, such as \emph{resource invariants}.

\paragraph{Practical Application.}
As our benchmark Petri programs were generated from C~code, it is a natural next step to translate the generated Owicki-Gries annotations to the C~level
and export them as proofs in a format readable by program verifiers.
One such format has been defined and is supported by many tools~\cite{Ayaziova:Witnesses20}, and an extension to concurrent programs has recently been proposed~\cite{VMCAI24:Ghost-Witnesses}.
Beyond technical hurdles, exporting proofs in such a format requires dealing with certain challenges that are abstracted away in our Petri program model, such as variable scope;
in particular, annotations referring to local variables of other threads.
Once these challenges are addressed, proofs generated by one tool can be validated or reused by other tools, enabling increased confidence in the verification results and fruitful cooperation between verifiers.

\paragraph{Conclusion}
We have presented a novel approach to constructing thread-modular proofs, in the style of \citeauthor{Owicki-Gries}, from interleaving-based proofs found by verification algorithms. %
Our approach enables a broader application of the resulting correctness proofs especially in combination with existing verification tools.
The key concept of our approach are \emph{empires}, which concisely capture the necessary interleaving information.
This interleaving information is then automatically encoded in ghost variables.
Our evaluation shows that our approach is efficient in practice and %
produces significantly more compact thread-modular proofs compared to a baseline.
\goodbreak

\section*{Data Availability Statement}
The artifact for this paper (a VM containing our implementation, benchmarks and evaluation setup) is available on Zenodo~\cite{popl26:artifact}.
\ifextended\else Proofs of our results can be found in the extended version~\cite{popl26:extendedVersion}.\fi
\goodbreak

\section*{Acknowledgements}

This work was partially funded by the Deutsche Forschungsgemeinschaft (DFG) under the project number 503812980.

\bibliographystyle{ACM-Reference-Format}
\bibliography{./references}
\end{document}